\documentclass[a4paper,USenglish,cleveref, autoref, thm-restate]{lipics-v2021}

\hideLIPIcs
\nolinenumbers

\title{Covering and Partitioning Complex Objects with Small Pieces} %TODO Please add

\author{Anders Aamand}{Department of Computer Science, University of Copenhagen, Denmark}{aa@di.ku.dk}{0000-0002-0402-0514}{This work was supported by the VILLUM Foundation grant 54451}
\author{Mikkel Abrahamsen}{Department of Computer Science, University of Copenhagen, Denmark}{miab@di.ku.dk}{https://orcid.org/0000-0003-2734-4690}{Supported by Independent Research Fund Denmark, grant 1054-00032B, and by the Carlsberg Foundation, grant CF24-1929.}
\author{Reilly Browne}{Department of Computer Science, Dartmouth College, Hanover, NH, USA}{reilly.browne.gr@dartmouth.edu}{https://orcid.org/0000-0003-3725-5245}{}
\author{Mayank Goswami}{Department of Computer Science, Queens College CUNY, New York}{mayank.goswami@qc.cuny.edu}{https://orcid.org/0000-0002-2111-3210}{Supported by National Science Foundation CCF-2503086}
\author{Prahlad Narasimhan Kasthurirangan}{Department of Applied Mathematics and Statistics, Stony Brook University, USA}{prahladnarasim.kasthurirangan@stonybrook.edu}{https://orcid.org/0000-0002-8518-7745}{}
\author{Linda Kleist}{Department of Informatics, Universtity Hamburg, Germany}{linda.kleist@uni-hamburg.de}{https://orcid.org/0000-0002-3786-916X}{}
\author{Joseph S. B. Mitchell}{Department of Applied Mathematics and Statistics, Stony Brook University, USA}{joseph.mitchell@stonybrook.edu}{[http://orcid.org/0000-0002-0152-2279]}{}
\author{Valentin Polishchuk}{Communications and Transport Systems, Linköping Univeristy, Sweden}{valentin.polishchuk@liu.se}{https://orcid.org/0000-0002-8292-2281}{}
\author{Jack Stade}{Department of Computer Science, University of Copenhagen, Denmark}{jast@di.ku.dk}{https://orcid.org/0009-0007-9153-6589}{Supported by Independent Research Fund Denmark, grant 1054-00032B, and by the Carlsberg Foundation, grant CF24-1929.}

\authorrunning{Aamand, Abrahamsen, Browne, Goswami, Kasthurirangan, Kleist, Mitchell, Polishchuk, Stade} 

\Copyright{} 

\ccsdesc[500]{Theory of computation~Computational geometry}
\ccsdesc[500]{Theory of computation~Packing and covering problems}
\ccsdesc[500]{Theory of computation~Approximation algorithms analysis}

\keywords{Covering, partitioning, polygon, small piece, PTAS.}

\Crefname{algo}{Algorithm}{Algorithms}
\Crefname{case}{Case}{Cases}
\Crefname{subsection}{Subsection}{Subsections}

\newcommand{\old}[1]{}

\newcommand{\eps}{\varepsilon}
\newcommand{\OPT}{\text{OPT}}

\newcommand{\poly}{\text{poly}}

\newcommand{\SmallCover}{\textsc{Small-Cover}}
\newcommand{\SmallPartition}{\textsc{Small-Partition}}
\newcommand{\DualSmallCover}{\textsc{Dual-Small-Cover}}
\newcommand{\SetCover}{\textsc{Set-Cover}} 

\begin{document}

\maketitle

\begin{abstract}
We study the problems of covering or partitioning a polygon $P$ (possibly with holes) using a minimum number of \emph{small pieces}, where a small piece is a connected sub-polygon contained in an axis-aligned unit square.
For covering, we seek to write $P$ as a union of small pieces, and in partitioning, we furthermore require the pieces to be pairwise interior-disjoint.
We show that these problems are in fact equivalent: Optimum covers and partitions have the same number of pieces.

For covering, a natural local search algorithm repeatedly attempts to replace $k$ pieces from a candidate cover with $k-1$ pieces.
In two dimensions and for sufficiently large $k$, we show that when no such swap is possible, the cover is a $1+\mathcal O(1/\sqrt k)$ approximation, hence obtaining the first PTAS for the problem.
Prior to our work, the only known algorithm was a $13$-approximation that only works for polygons without holes [Abrahamsen and Rasmussen, SODA 2025].
In contrast, in the three dimensional version of the problem, for a polyhedron $P$ of complexity $n$, we show that it is NP-hard to approximate an optimal cover or partition to within a factor that is logarithmic in $n$, even if $P$ is simple, i.e., has genus $0$ and no holes. 
\end{abstract}

\section{Introduction}
Partitioning large objects into pieces of bounded size is a fundamental problem.
One practical motivation comes from 3D printing and related manufacturing technologies: we can only produce objects that fit within the printing volume, which can be assumed to be a unit cube.
If the goal is to produce a large part $P$, we must partition $P$ into smaller pieces, $Q_1,\ldots,Q_k$, each of which fits within the unit cube.
The small pieces can then be assembled to form the desired part $P$.
A natural objective is to minimize the number, $k$, of pieces.
Heuristics have been proposed to solve this problem~\cite{DBLP:journals/tog/LuoBRM12,chen2022skeleton,jiang2017models,KO2025104759,YAO20241002,DBLP:journals/cgf/VanekGBMCSM14}.

Similar partitioning problems arise in everyday settings.
For instance, one might disassemble an entire broccoli head, the peeled stem included, into pieces suitable for consumption with minimal indignity.
Yet even in this humble task, one strives to avoid the disarray of producing an excessive multitude of fragments.
Recently, Abrahamsen and Rasmussen~\cite{DBLP:journals/corr/abs-2211-01359} initiated a theoretical study of partitioning polygons into few small pieces. They provided constant-factor approximation algorithms when the domain $P$ is a simple polygon (no holes), for various definitions of what it means for a piece $Q_i$ to be ``small,'' each with its own algorithm and approximation factor.
A particularly simple version demands that each piece~$Q_i$ be connected and contained in an axis-aligned unit square.
Abrahamsen and Stade~\cite{abrahamsen2026hardness} showed that this version is NP-hard even when $P$ is a simple polygon.

In this paper, we consider the problems of partitioning and covering objects with small pieces in two and three dimensions. We consider the setting where a small piece is defined as a connected subset of $P$ contained in an axis-aligned unit square (in two dimensions) or unit cube (in three dimensions). We obtain the following results.

\begin{itemize}
\item We present a PTAS for covering a polygon (possibly with holes) into a minimum number of small pieces (\Cref{sec:ptas:smallcover}).

\item For polygons (possibly with holes), we show that the small-cover and small-partition problem have the same optimal value (\Cref{sec:equivalence}).

\item We prove that in 3D, covering or partitioning a polyhedron $P$ {of description complexity $n$} 
with small pieces is NP-hard {to approximate  within} a logarithmic factor of $n$, even when $P$ is simply-connected and contained in a box of size $1\times 2\times 3$ (\Cref{sec:hardness3D}).

\end{itemize}

Our PTAS works by local search: we start with a candidate small cover and then repeatedly try to find a set of $k$ pieces that can be swapped for a set of $k-1$ pieces.
We show that this gives a $1+\mathcal O(1/\sqrt k)$ approximation algorithm, when $k$ is sufficiently large.
A result by Roy, Govindarajan, Raman, and Ray \cite{NonpiercingPTAS} shows that local search is a PTAS for discrete \SetCover\ when the candidate pieces satisfy a certain \emph{non-piercing} property. 
Two simply-connected pieces $P,Q$ are \emph{non-piercing} if $P\setminus Q$ and $Q\setminus P$ are connected.
We show that a set {of} maximal small pieces satisfies a slightly weaker non-piercing property, but we are able to modify the pieces slightly so that the result from \cite{NonpiercingPTAS} can be used. 
We are then able to analyze our ``continuous'' local search algorithm using this result{--unlike the discrete case studied in \cite{NonpiercingPTAS}, note that our search space is infinite}. 

\begin{figure}[ht]
\centering
\includegraphics[page=12,width=0.2\textwidth]{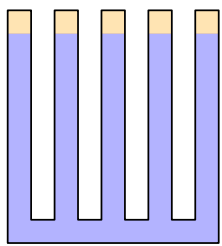}
\caption{A comb-shaped polygon might require an arbitrarily large number of pieces for a small cover, even if it fits in a box of size $1\times (1+\eps)$.}
\label{fig:comb}
\end{figure}

For many optimization algorithms, local search can be used to obtain a PTAS when a certain related graph has sublinear separators (see \cite{SeparatorLocalSearch} for details). We highlight the fact that this is non-obvious for the small-cover problem.
Local search gives a PTAS for \emph{packing} axis-aligned unit squares into a polyhedron in any {constant} dimension, since a density argument can be used to obtain the required separation bounds (see for example \cite{DensitySeparators}).
In contrast, for small covers, we insist that the pieces be connected, which means that there are ``comb-shaped'' polygons requiring many pieces, even though the polygon itself almost fits in a unit square; see \Cref{fig:comb}.
In 3D, we show that it is NP-hard to approximate optimal small cover or partition to within a factor of $(1-\alpha)\ln(n)$, for any $\alpha>0$, where $n$ is the number of bits needed to describe the input polyhedron.
We also observe (Theorem~\ref{thm:=}) that for small pieces, covering and partitioning are equivalent, in the sense that the smallest number of pieces in a cover and in a partition is the same. This is in stark contrast with other geometric settings where not only the optima differ, but also covering seems considerably harder than partitioning.
For instance, it is $\exists\mathbb R$-complete to cover a simple polygon with the minimum number of star polygons \cite{DBLP:journals/jacm/AbrahamsenAM22} (that is, the art-gallery problem) or convex polygons \cite{DBLP:conf/focs/Abrahamsen21}, while the corresponding partitioning problems are solvable in polynomial time~\cite{DBLP:conf/stoc/AbrahamsenBNZ24,chazelle1985optimal}.

\subsection{Related Work}

The small cover and partitioning problems belong to the class of \emph{decomposition problems}, which form a large, established subfield in computational geometry.
In all of these problems, we want to \emph{decompose} a polygon $P$ into connected polygonal \emph{pieces} that must respect certain restrictions.
Here, the union of the pieces must be $P$, and we usually seek a decomposition into as few pieces as possible, in which case the decomposition is called \emph{optimal}.
A decomposition that allows the pieces to overlap is called a \emph{cover}; if thee pieces are pairwise interior-disjoint, it is called a \emph{partition}.
Depending on the assumptions about the input polygon $P$ and the requirements on the pieces, this leads to a wealth of interesting problems.
There is a vast literature about such decomposition problems, as documented in several books and survey papers that give an overview of the state-of-the-art at {their respective times} of publication~\cite{shermer1992recent, chazelle1994decomposition, chazelle1985approximation, keil1999polygon, keil1985minimum, o1987art,o2004polygons}.

Earlier works studied variants of the small partitioning problem without Steiner points~\cite{damian2004computing,buchin2021decomposing}, {i.e., each vertex of a piece $Q_i$ is a vertex of $P$.}
{As an} undesirable consequence such a partition often does not exist, e.g., if an edge of the polygon $P$ is too long to fit in a piece.

Motivated by indoor localization using lasers, Arkin, Das, Gao, Goswami, Mitchell, Polishchuk, and T{\'{o}}th~\cite{DBLP:conf/esa/ArkinD0GMPT20} studied partitioning a polygon with chords (maximal segments within $P$), which may model laser beams that serve as ``tripwires''; the goal is to cut the polygon into {bounded} pieces using {the} fewest lasers.
Motivated by
communication graphs for mobile robots, Fekete, Kamphans, Kr{\"{o}}ller, Mitchell, and Schmidt~\cite{fekete2011exploring} (see also the video~\cite{DBLP:conf/compgeom/BeckerFKLMS13}) showed how to compute a Steiner triangulation of a simple polygon, whose edge lengths are bounded; they presented a $3$-approximation for minimizing the number of Steiner points.
Worman~\cite{worman2003decomposing} showed NP-hardness of partitioning a polygon with holes into {a minimum number of pieces contained in} axis-aligned squares, without Steiner points. More recently, Buchin and Selbach~\cite{buchin2021decomposing} proved NP-hardness of partitioning and covering with fat pieces. 
Combinatorial aspects of partitioning into small[er] pieces are classical problems in mathematics.
For instance, \emph{Borsuk's problem} is to partition a unit-diameter convex body (in any dimension) into fewest pieces of diameter strictly less than~$1$; see~\cite{borsuk1933drei,DBLP:journals/combinatorics/JenrichB14,kahn1993counterexample} for the rich history of the problem.
\emph{Conway's fried potato problem} \cite{bezdek1995solution, bezdek1996conway,canete2022conway,croft2012unsolved} deals with minimizing the in-radius of the pieces obtained by a specified number of hyperplane cuts of a given convex polyhedron (in $d$ dimensions); in our terminology, Conway's problem is a \textit{dual} small partitioning problem.

\subsection{Problem Formulation}

We say that a polygon $Q$ (possibly with holes) is \emph{small} if $Q$ is contained in an axis-aligned unit square.
We often {call} small polygons small \emph{pieces}.
A set $\mathcal Q$ of small pieces is a \emph{small cover} of a polygon $P$ if $\bigcup \mathcal Q=P$; 
note in particular that $Q\subseteq P$ for each small piece $Q\in\mathcal Q$.
We say that a piece $Q\in\mathcal Q$ is \emph{complete} if $Q$ is the closure of a connected component of the intersection $P^\circ\cap S$, where $P^\circ$ is the interior of $P$ and $S$ is an axis-aligned unit square.
A simpler definition would be that $Q$ is complete if it is a connected component of $P\cap S$, but that could be a degenerate polygon, which we by definition don't allow. 
A small cover $\mathcal Q$ of $P$ is a \emph{small partition} of $P$ if furthermore the pieces in $\mathcal Q$ are pairwise interior-disjoint.
In this paper, we study the problems of finding small covers and small partitions consisting of a minimum number of pieces for a given polygon $P$, which may have holes.
We denote these problems \SmallCover\  and \SmallPartition, respectively.

\section{A PTAS for \SmallCover\ in 2 dimensions}\label{sec:ptas:smallcover}

In this section, we present an 
$(1+\eps)$-approximation algorithm for \SmallCover\ when the input is a polygon $P$, possibly with holes.
In \cref{{sec:localSearch}}, we establish tools that allow us to use local search.
We then show in \Cref{sec:smalldiameter} how to use these techniques to handle polygons which intersect at most $D$ squares in the {integer} grid.
The running time of our algorithm {are} polynomial in $D$.
Finally, in \Cref{sec:ptas:largepolygons}, we extend our ideas to handle polygons that might have exponentially large diameter. {We note that, for the purposes of constructing approximate solutions for \SmallCover, we will only consider \emph{complete} small pieces, since any small piece that fits in a square $S$ must cover a subset of the complete piece covering the same component of $P^\circ \cap S$.} 

\subsection{Fundamentals for Local Search}\label{sec:localSearch}

Roy, Govindarajan, Raman, and Ray~\cite{NonpiercingPTAS} studied the geometric set cover problem with so-called non-piercing regions.
Two compact simply-connected regions $A$ and $B$ are \emph{non-piercing} if $A\setminus B$ and $B\setminus A$ are both connected.
We say that $A$ and $B$ intersect \emph{properly} if the intersection of the boundaries $\partial A\cap\partial B$ consists of a finite set of points and in each of these points, the boundaries cross, i.e., tangential intersections are not allowed.
We say that a set of compact, simply-connected regions is \emph{properly non-piercing} if the regions are pairwise non-piercing and intersect properly.

\begin{lemma}[\cite{NonpiercingPTAS}]\label{lem:localsearchpoints}
For a universal constant $C$ the following holds:
Let $X$ be a (finite) set of points in $\mathbb{R}^2$ and suppose that $\mathcal{Q}=\{Q_1, \dots, Q_q\}$ and $\mathcal{R}=\{R_1, \dots, R_r\}$ are sets of connected regions such that $\mathcal{Q}$ and $\mathcal{R}$ each cover $X$.
If the regions in $\mathcal{Q}\cup \mathcal{R}$ are properly non-piercing and $r\ge (1+\eps)q$ for some $1\ge \eps>0$, then there is some set of at most $C/\eps^2$ of the pieces in $\mathcal{R}$ that can be replaced by a strictly smaller set of pieces in $\mathcal{Q}$ to obtain a cover of $X$.
\end{lemma}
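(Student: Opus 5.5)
This is the local-search lemma of Roy, Govindarajan, Raman and Ray, and I would prove it along their route. The plan is to build a planar-like \emph{exchange graph} on $\mathcal{Q}\cup\mathcal{R}$ that has sublinear separators, and then apply the standard separator-based local-search counting argument of \cite{SeparatorLocalSearch}.

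\textbf{Step 1: preprocessing.} We may assume $\mathcal{Q}\cap\mathcal{R}=\emptyset$. A shared region can be added to both sides of any swap, and removing shared regions only increases the ratio $r/q$. We may also assume that every region of $\mathcal{R}$ is needed, i.e., that $\mathcal{R}$ is a minimal cover of $X$; otherwise a single region of $\mathcal{R}$ can be swapped for the empty set.

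\textbf{Step 2: the exchange graph.} Construct a bipartite graph $G$ on vertex set $\mathcal{Q}\cup\mathcal{R}$ with two properties:
\begin{enumerate}
\item for every $x\in X$ there are $Q\in\mathcal{Q}$ and $R\in\mathcal{R}$ containing $x$ such that $QR$ is an edge (the \emph{locality condition});
\item $G$ is planar.
\end{enumerate}
This is the heart of \cite{NonpiercingPTAS}. Because the regions are properly non-piercing, one can shrink the regions of $\mathcal{R}$ and contract the arrangement $\bigcup\mathcal{Q}\cup\mathcal{R}$ in a controlled way while preserving which points each region covers. This yields a planar support graph in which every point lies in a connected induced subgraph, and the locality edges can be drawn as curves inside the intersections $Q\cap R$ without crossings. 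The proper-intersection assumption is what makes the arrangement a well-defined planar map, so that crossing arguments apply. I expect this step to be the main obstacle. It is nontrivial: non-piercing is a pairwise condition, but planarity is a global one, and showing that no $K_{3,3}$ arises requires the inductive ``discrete Jordan'' arguments of that paper. I would quote it rather than reprove it.

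\textbf{Step 3: separators and swapping.} With $G$ planar, use the Frederickson $r$-division. Partition $G$ into clusters $V_1,\dots,V_t$, each of size at most $s=\Theta(1/\eps^2)$, with boundary vertices $B_i$ satisfying $\sum_i|B_i|=O(n/\sqrt{s})$, where $n=q+r$.

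For each cluster $i$, consider replacing $\mathcal{R}\cap V_i$ by $\mathcal{Q}\cap(V_i\cup B_i)$. The resulting set still covers $X$. Indeed, if a point $x$ was covered only by removed regions of $\mathcal{R}$, then its locality partner $Q$ lies in $V_i$ or in $B_i$, and both are added. Each such candidate swap involves at most $s$ regions of $\mathcal{R}$, so at most $C/\eps^2$ for a suitable constant $C$.

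\textbf{Step 4: counting.} Suppose no candidate swap is improving, i.e., $|\mathcal{R}\cap V_i|\le|\mathcal{Q}\cap(V_i\cup B_i)|$ for every $i$. Summing over $i$ gives
\[
r\le q+O\!\left(\frac{n}{\sqrt{s}}\right)=q+O\!\left(\frac{q+r}{\sqrt{s}}\right).
\]
With $s=c/\eps^2$ for a large enough constant $c$, this forces $r<(1+\eps)q$, contradicting the hypothesis $r\ge(1+\eps)q$. Hence some cluster $V_i$ gives the required swap, in which at most $C/\eps^2$ regions of $\mathcal{R}$ are replaced by a strictly smaller subset of $\mathcal{Q}$.
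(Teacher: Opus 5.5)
Your plan is correct and follows the route of the cited source; the paper does not reprove this lemma but imports it from \cite{NonpiercingPTAS}, and that source combines a planar bipartite graph on $\mathcal{Q}\cup\mathcal{R}$ with the locality property and the standard $r$-division counting, exactly as you do. Your Steps~3--4 go through provided the $V_i$ are the pairwise disjoint non-boundary parts with $N(V_i)\subseteq V_i\cup B_i$ and the regions of $\mathcal{R}$ in the boundary set are charged to the $O(n/\sqrt{s})$ term; one correction to your sketch of Step~2 is that the lens-bypassing modification does not preserve exactly which points each region contains, only that every point of $X$ stays in some region of each family, which is all the locality condition needs.
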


We start by adapting this result to the setting of covering a polygon rather than a discrete set of points.
We show the following:

\begin{lemma}\label{lem:localsearchregions}
There is a constant $C$ such that the following holds:
Let $P$ be a polygon (possibly with holes) and let $\mathcal{Q}$ be an optimal small cover of $P$ consisting of $q$ complete pieces, and let $\mathcal{R}$ be another small cover of $P$ consisting of $r$ complete pieces.
Then for any $1\ge\eps>0$, if $r\ge (1+\eps)q$, then there is a set of at most $C/\eps^2$ of the pieces in $\mathcal{R}$ that can be replaced by a strictly smaller set of pieces in $\mathcal{Q}$ to obtain a small cover of $P$.
\end{lemma}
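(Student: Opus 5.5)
We reduce the statement to \Cref{lem:localsearchpoints} in three steps: discretize $P$ into a finite point set, perturb the pieces so that they become properly non-piercing, and check that a discrete swap still gives a cover of all of $P$.

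\textbf{Discretization.} Take the arrangement formed by the boundaries of all pieces in $\mathcal Q\cup\mathcal R$ together with $\partial P$. This arrangement is finite, since all pieces are complete and polygonal. In each open face of the arrangement that lies in $P^\circ$, pick one representative point; let $X$ be the set of these points. Every face is either inside or outside each piece, so a subfamily of $\mathcal Q\cup\mathcal R$ covers $X$ exactly when it covers every face. Covering every face covers a dense subset of $P$. Since pieces are closed, such a subfamily then covers all of $P$. Hence any set of pieces covering $X$ is a small cover of $P$, and we only need the discrete swap guaranteed by \Cref{lem:localsearchpoints}.

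\textbf{Making the pieces properly non-piercing.} The main obstacle is that complete pieces need not be simply connected (for instance when $P$ has holes), need not be non-piercing, and may share boundary segments. I would first argue a structural fact about complete pieces $A=\overline{\text{comp}(P^\circ\cap S)}$ and $B=\overline{\text{comp}(P^\circ\cap T)}$. Inside $P$, the set $A\setminus B$ consists of parts of $A$ lying outside the square $T$. Because both $S$ and $T$ are axis-parallel unit squares, $S\setminus T$ is connected, and there is a bounded number of ways two such squares can meet. I would try to show that $A\setminus B$ is connected after accounting for the holes of $P$, or otherwise fails to be connected only in a controlled way.

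To handle the holes, apply a homeomorphism-type modification: replace $P$ by a simply connected region obtained by cutting thin slits from every hole to the outer boundary, routed through faces so as to avoid the points of $X$. Alternatively, note that \Cref{lem:localsearchpoints} only depends on the points of $X$. We can then replace each piece by a simply connected region that contains exactly the same points of $X$, for example a thin "tree-like" thickening of a spanning tree of its faces. We then shrink each piece slightly and in general position, so that the boundaries intersect finitely many times and always transversally. This does not change which points of $X$ each piece contains, because the points are interior to faces.

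The crux is choosing these replacements consistently for all pieces so that pairwise non-piercing holds. I would try to derive this from the monotone, axis-aligned geometry of squares: the difference of two overlapping squares is an L-shape or a rectangle, and that shape is always connected.

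\textbf{Applying the discrete lemma.} Once the modified families $\mathcal Q'$ and $\mathcal R'$ are properly non-piercing and cover $X$ with the same incidences as before, and since $r\ge(1+\eps)q$, \Cref{lem:localsearchpoints} gives at most $C/\eps^2$ pieces of $\mathcal R'$ that can be replaced by fewer pieces of $\mathcal Q'$ while still covering $X$. Mapping these back to the original pieces, which have the same incidence with $X$, gives a family covering $X$, and hence a small cover of $P$ by the discretization step.

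If exact non-piercing cannot be achieved, a fallback is to use the weaker non-piercing property of maximal small pieces hinted at in the introduction. One then shows that the proof of \Cref{lem:localsearchpoints}, whose key ingredient is a planar support graph, still goes through for such pieces.
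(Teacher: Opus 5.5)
\textbf{Verdict: there is a genuine gap at the perturbation step.} Your outer frame matches the paper's: one point per cell of the overlay, perturbation to properly non-piercing regions with the same incidences, an application of \Cref{lem:localsearchpoints}, and pulling the swap back. Your discretization and pull-back steps are correct. But the perturbation step carries the whole argument, and the paper isolates it as \Cref{lem:nondegenerate}. Your proposal does not supply it. It is phrased as something you ``would try'', and the heuristic you give for it is false.

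Complete pieces can genuinely pierce each other, even when $P$ is simple. Take $S=[0,1]^2$ and $T=[0.5,1.5]\times[0.3,1.3]$, and let $P$ be the reversed C
$([0.1,0.9]\times[0.8,0.9])\cup([0.8,0.9]\times[0.1,0.9])\cup([0.1,0.9]\times[0.1,0.2])$.
\begin{itemize}
\item The complete piece of $S$ is $A=P$.
\item The complete piece of $T$ is the corner $B=([0.5,0.9]\times[0.8,0.9])\cup([0.8,0.9]\times[0.3,0.9])$.
\item $A\setminus B$ has two components, even though $S\setminus T$ is a connected L-shape.
\end{itemize}
The obstruction is that $B$ cuts across $A$ along stretches of $\partial P$ that the two pieces share. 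Connectivity of differences of squares says nothing about this.

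Your other suggestions do not repair it:
\begin{itemize}
\item A generic shrinking ``in general position'' fails. In the example, if $A$ is pushed inward from $\partial P$ at least as far as $B$, then $A'\setminus B'$ is still disconnected.
\item Thin tree-like thickenings are worse, since two thin trees that cross pierce each other.
\item The slit construction changes the pieces without any control over piercing.
\item The fallback of rerunning the proof of \Cref{lem:localsearchpoints} for weakly non-piercing regions amounts to reproving the theorem of Roy et al.
\end{itemize}

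\textbf{The missing idea is the paper's construction.} It has three parts.
\begin{itemize}
\item Shrink the $i$-th square to side length $1-i\delta/2q$, so that square edges no longer overlap. Here $\delta$ is smaller than every feature of the overlay.
\item Fill the holes of each piece. These are holes of $P$, so they contain no point of $X$.
\item Offset each maximal interval $\gamma$ of $\partial P$ on the boundary of the $i$-th piece inward by $\delta/(4(1+|\gamma|+\kappa i/2q))$, where the $\kappa$-term only breaks ties. Longer intervals are therefore pushed in less, an idea taken from de Berg et al.
\end{itemize}
Non-piercing then follows. When the square of $Q_j$ cuts through $Q_i$ along $\partial P$, the corresponding interval of $Q_j$ is a proper sub-interval of one of $Q_i$, so it is pushed in further. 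A thin corridor of $Q'_i\setminus Q'_j$ therefore survives between the two offset curves; in the example, it runs along the L from outside $T$ on one side to outside $T$ on the other. Rerouting any path in $Q'_i$ along $\partial U'_j$ through such corridors shows that $Q'_i\setminus Q'_j$ is connected.

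With this lemma in hand, your three-step outline is exactly the paper's proof. Without it, the step that makes \Cref{lem:localsearchpoints} applicable is missing.
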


It is relatively straightforward to prove \Cref{lem:localsearchregions} given \Cref{lem:localsearchpoints}.
The strategy is to
put a point in the interior of each cell of the overlay of $\mathcal Q $ and $\mathcal R$.
A subset of $\mathcal{Q}\cup \mathcal{R}$ covers $P$ if and only if it covers each of these points.
The pieces in $\mathcal{Q}\cup \mathcal{R}$ are almost non-piercing, but need to be modified slightly in order to remove degeneracies and holes.
The process is demonstrated informally in \Cref{fig:pieces_pierce} and uses an idea from~\cite{DBLP:journals/algorithmica/BergKMT23}.
The following lemma says that these modifications exist:

\begin{figure}[ht]
    \centering
    \includegraphics[page=1]{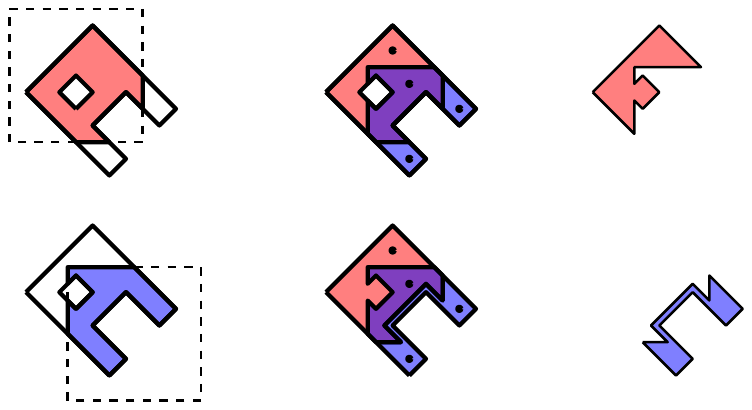}
    \caption{Given a finite set of complete small pieces that cover $P$, we obtain an equivalent instance of non-piercing geometric set cover. Left: a polygon and two complete small pieces. Top middle: the polygon and pieces cut the polygon into $4$ cells, and we place a point in each cell. These pieces are piercing, since removing the red piece from the blue piece would disconnect it. Additionally, the red piece has a hole. Bottom middle: modifying the pieces to remove the hole and make them non-piercing. The hole in the red piece is filled, and an interval of the boundary is offset inwards so as to not pierce the blue piece. Right: removing the modified blue piece from the red piece (top) or the modified red piece from the blue piece (bottom). The remaining parts are connected, so the modified pieces are non-piercing.}
    \label{fig:pieces_pierce}
\end{figure}

\begin{lemma}\label{lem:nondegenerate}
Consider a polygon $P$ and a small cover $\mathcal Q=\{Q_1,\ldots,Q_q\}$  of $P$ consisting of complete small pieces.
Consider the overlay of the pieces in $\mathcal Q$, choose a point in the interior of each region contained in $P$, and let $X$ be the set of these points.
There exists a set $\mathcal Q'=\{Q'_1,\ldots,Q'_q\}$ of simply-connected pieces that are properly non-piercing such that $Q_i\cap X=Q'_i\cap X$ for all $i\in \{1,\ldots,{q}\}$.
It holds in particular that for any $I\subset\{1,\ldots,{q}\}$, we have $X\subset \bigcup_{i\in I} Q'_i$ if and only if $P=\bigcup_{i\in I} Q_i$.
\end{lemma}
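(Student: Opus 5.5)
The plan is to build $\mathcal Q'$ from $\mathcal Q$ in three stages: remove holes, then remove piercings, then perturb into general position. Throughout, each modification changes a piece only inside a thin neighbourhood of curves that avoid $X$. The invariant $Q_i\cap X=Q'_i\cap X$ is therefore preserved automatically. The final claim also follows at once: each point of $X$ represents a full cell of the overlay, and every cell is either contained in or disjoint from each $Q_i$, so $P=\bigcup_{i\in I}Q_i$ if and only if every point of $X$ is covered.

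\emph{Removing holes.} Let $S_i$ be the unit square defining the complete piece $Q_i$. Since $S_i$ is convex and $Q_i\subseteq S_i$, every hole of $Q_i$ lies in $S_i$ and is bounded by parts of $\partial P$. I will not simply fill a hole, because another component of $P^\circ\cap S_i$, and hence points of $X$ not in $Q_i$, may sit inside it. Instead, for each hole I cut a thin slit from the hole to the outer boundary of $Q_i$. Choose the slit as a polygonal path inside $Q_i^\circ$ that avoids the finite set $X$, and remove an $\delta$-thick corridor around it, with $\delta$ smaller than the distance from the path to $X$. Each cut reduces the number of holes by one and keeps the piece connected, so after finitely many cuts every piece is simply connected, and no point of $X$ has changed status.

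\emph{Removing piercings (the main obstacle).} For a pair $Q_i,Q_j$, the set $Q_j\cap S_i$ is a union of closures of components of $P^\circ\cap S_i\cap S_j$, and $Q_i\cap Q_j$ consists of those components shared by both pieces. The set $Q_i\setminus Q_j$ can only be disconnected because a component of $Q_i\cap Q_j$ stretches across $Q_i$ between two boundary arcs. Since the relevant boundary of $Q_j$ inside $S_i$ consists of pieces of $\partial P$ and of $\partial S_j$, I expect each separation to be realised by a boundary arc of $Q_j$ lying along $\partial S_j$ and crossing $Q_i$.

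Following the idea of \cite{DBLP:journals/algorithmica/BergKMT23} shown in \Cref{fig:pieces_pierce}, I would fix the piercing by offsetting a short interval of that arc of $\partial Q_j$ inwards by a tiny amount. This opens a thin channel of $Q_i\setminus Q_j$ that reconnects its components. The channel must contain no point of $X$, which holds because the offset is smaller than the distance from $\partial S_j$ to $X$. Symmetrically, I would ensure $Q_j\setminus Q_i$ is connected. The delicate part is doing this for all pairs at once so that later offsets do not undo earlier ones or create new piercings. I would order the offsets by pair and use geometrically decreasing thicknesses $\delta_1\gg\delta_2\gg\cdots$, and I would check that an offset only shrinks a piece within a corridor free of other pieces' crossing arcs. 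Proving that this removes every separation is where I expect most of the work to go.

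\emph{Proper intersection.} Finally, the boundaries of the pieces may share segments, for example along common parts of $\partial P$ or collinear square sides. I would remove these tangencies by a generic perturbation of each piece by at most $\delta$, shrinking or expanding along shared boundary segments. Non-piercing is stable under sufficiently small perturbations that preserve crossings, and with $\delta$ below the minimum distance from $X$ to all boundaries, every pair of boundaries then meets in finitely many crossing points. This yields the required properly non-piercing family $\mathcal Q'$.
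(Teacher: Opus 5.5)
\textbf{There is a genuine gap: the non-piercing step is not proved, and the mechanism you sketch for it acts on the wrong boundary.} This step is the real content of the lemma, and you acknowledge leaving it open.

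A separation of $Q_i\setminus Q_j$ comes from a band of $Q_i\cap Q_j$ running across $Q_i$ between two arcs of $\partial S_j$. A tiny inward offset of an interval of such an arc only nibbles the edge of the band. A channel cut straight across the band is a cross-cut of $Q_j$ and disconnects it. The reconnecting channel has to run along the arc $\partial P[c,d]$ (with $c,d\in\partial S_j$) that bounds the band, i.e.\ along boundary shared by both pieces.

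What is missing is a rule that decides consistently, for all pairs at once, which piece yields along each shared $\partial P$-interval. The paper's rule is as follows. Shrink $S_i$ to a concentric square of side $1-i\delta/2q$, so that square sides cross properly. Take the corresponding component and fill its holes to get $\hat Q_i$. Then offset every maximal interval $\gamma$ of $\partial P\cap\partial\hat Q_i$ inwards by $\delta/(4(1+|\gamma|+\kappa i/2q))$, where $\delta$ is a separation constant and $\kappa$ a tie-breaker; longer intervals move less.

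Since $\partial P[c,d]$ is a whole interval of $\hat Q_j$, while the interval of $\hat Q_i$ containing it continues past $c$ and $d$, $Q_j$ is always offset more there. This opens exactly the needed corridor of $Q'_i\setminus Q'_j$. Also, if $\hat Q_i$ lies in the shrunken $S_j$, all its intervals lie in longer ones of $\hat Q_j$, so $Q'_i\subset Q'_j$. Pair-ordered, geometrically decreasing thicknesses have no such monotonicity, and you give no argument that they can be made consistent.

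\textbf{Your third stage fails for the same reason.} Resolving boundary segments shared by two pieces is not a small perturbation that preserves crossings: which piece moves inwards decides the topology of the differences. For instance, let $Q_i\subset Q_j$, with $Q_i$ a stretch of a corridor of $P$ sharing two disjoint $\partial P$-segments with $Q_j$. Pushing $Q_j$ rather than $Q_i$ inwards along them leaves $Q_i\setminus Q_j$ as two disjoint slivers, so a piercing is created where there was none. In the paper, removing the degeneracies along $\partial P$ and removing the piercings are one and the same step.

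\textbf{Your reason for slitting instead of filling is unfounded.} A hole $H$ of a complete piece has $\partial H\subseteq\partial P$. As $P^\circ$ is connected and meets $Q_i^\circ$, we get $H\cap P=\emptyset$. So $H$ is a hole of $P$, contains no point of $X$, and can be filled. Slits, by contrast, create new piercings unless they are coordinated across pieces. For example, suppose $Q_i\subset Q_k$ with $Q_i\neq Q_k$, and the slit of $Q_i$ ends on $\partial P$. Then the slit corridor becomes a separate component of $Q_k\setminus Q'_i$.

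Your derivation of the final ``in particular'' claim from $Q_i\cap X=Q'_i\cap X$ is fine.
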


\Cref{lem:nondegenerate} {is proved in \cref{app:A}}. 
Note that these modifications are only used in the analysis; we don't need to produce them algorithmically.
{We now prove} \Cref{lem:localsearchregions}.

\begin{proof}[Proof of \Cref{lem:localsearchregions}]
Let $\mathcal{S}=\mathcal{Q}\cup \mathcal{R}$ be the union of the global optimum and the local optimum. Using \Cref{lem:nondegenerate}, we obtain a new set $\mathcal{S}'=\mathcal{Q}'\cup\mathcal{R}'$ of properly non-piercing pieces and a finite set of points $X$ such that a subset of $\mathcal{S}'$ covers $X$ if and only if the corresponding subset of $\mathcal{S}$ covers $P$. Since $\mathcal{Q}$ and $\mathcal{R}$ both cover $P$, $\mathcal{Q}'$ and $\mathcal{R}'$ must both cover $X$.

Recall that $q=|\mathcal{Q}|=|\mathcal{Q}'|$ and $r=|\mathcal{R}|=|\mathcal{R}'|$.
If $r\ge (1+\eps)q$ for some $1\ge \eps>0$, then by {\Cref{lem:localsearchpoints}}, we can replace a set of at most $C\frac1{\eps^2}$ of the pieces in $\mathcal{R}'$ with a strictly smaller set of pieces from $\mathcal{Q}'$ to obtain another cover of $X$.
If we swap the corresponding pieces in $\mathcal{R}$ with the corresponding pieces in $\mathcal{Q}$, then we get a small cover of $P$, proving the claim. 
\end{proof}

\Cref{lem:localsearchregions} gives sufficient conditions for a small cover to be a $(1+\eps)$-approximation of the optimum. We say that a small cover $R$ is a \emph{$k$-local optimum} if no set {of} at most $k$ pieces from $\mathcal R$ can be replaced by a strictly smaller set of small pieces.

\begin{corollary}\label{cor:localsearchcontinuous}
There is some constant $C'$ such that if $P$ is a polygon and $\mathcal{R}$ is a small cover that is a $k$-local optimum for $k\geq C'$ and has $r$ complete pieces, then $r$ is at most $1+\frac{C'}{\sqrt{k}}$ times the size of an optimal cover of $P$.
\end{corollary}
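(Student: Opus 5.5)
The plan is to apply \Cref{lem:localsearchregions} contrapositively, choosing $\eps$ as a function of $k$ so that a swap of size at most $C/\eps^2$ is forbidden by $k$-local optimality. Let $C$ be the constant from \Cref{lem:localsearchregions}.

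Let $\mathcal Q$ be an optimal small cover of $P$ with $q$ pieces. By the remark at the start of \Cref{sec:ptas:smallcover}, each piece can be replaced by the complete piece containing it without increasing the count, so we may take $\mathcal Q$ to consist of complete pieces. Set $\eps=\sqrt{C/k}$ and choose $C'\ge C$. Then $k\ge C'$ gives $\eps\le 1$, so \Cref{lem:localsearchregions} applies with $1\ge\eps>0$.

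Suppose for contradiction that $r\ge(1+\eps)q$. By \Cref{lem:localsearchregions}, some set of at most $C/\eps^2=k$ pieces of $\mathcal R$ can be replaced by a strictly smaller set of pieces from $\mathcal Q$, and the result is still a small cover of $P$. This contradicts $\mathcal R$ being a $k$-local optimum. Hence $r<(1+\eps)q=(1+\sqrt{C}/\sqrt{k})\,q$. Taking $C'=\max(C,\sqrt{C})$ gives $r\le (1+C'/\sqrt k)\,q$.

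The only subtleties are bookkeeping.
\begin{itemize}
\item $C/\eps^2$ must not exceed $k$. It equals $k$ exactly here; if $k$ is not of the right form, one can round down, and the bound on the swap size still holds.
\item The swapped-in pieces come from $\mathcal Q$, so they are genuine small pieces, as the definition of $k$-local optimum requires.
\item One must check that restricting to complete pieces loses nothing. This holds because the optimum size is unchanged when each piece is enlarged to its complete piece.
\end{itemize}
I expect none of these to be a real obstacle; the substance is already contained in \Cref{lem:localsearchregions}.
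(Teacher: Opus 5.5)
Your proof is correct and is essentially the paper's argument: apply \Cref{lem:localsearchregions} with $\eps$ chosen so that $C/\eps^2=k$, and contradict $k$-local optimality. The paper additionally splits off the case $r\le k$, which you rightly skip, and your choice $C'=\max(C,\sqrt C)$ is slightly more careful than the paper's $C'=\sqrt C$: it genuinely yields $\eps\le 1$ from $k\ge C'$, while your rounding caveat is superfluous because $C/\eps^2=k$ holds exactly for every $k$.
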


\begin{proof}
Taking $C$ to be the constant from \Cref{lem:localsearchregions}, let $C'=\sqrt{C}$ and $k_0=\lceil C'\rceil$.
Because $\mathcal R$ is $k$-local optimum, no set of at most $k$ pieces can be replaced by a smaller set of small pieces.
If $r\le k$, then $\mathcal{R}$ is necessarily an optimal solution and the claim holds. So suppose that $r>k$.
Let $\mathcal{Q}$ be an optimal small cover of $P$ with $q$ complete pieces.
Suppose for contradiction that $r\ge \left(1+\frac{C'}{\sqrt{k}}\right)q$.
Note that for $\varepsilon:=\frac{C'}{\sqrt{k}}\leq \frac{C'}{\sqrt{k_0}}\leq 1$, \Cref{lem:localsearchregions} guarantees  a set of $k'\leq C/\varepsilon^2=C\frac{k}{C'^2}=k$ pieces can be replaced by fewer pieces from $\mathcal Q$.
A contradiction to the fact that $\mathcal R$ is a $k$-local optimum. 
Hence, $r< \left(1+\frac{C'}{\sqrt{k}}\right)q$.
\end{proof}

\subsection{Polygons with Small Diameter}\label{sec:smalldiameter}

In this section, we describe a PTAS for small covers of polygons that intersect at most $D$ squares of the {integer} grid.
In order to obtain a PTAS, it is sufficient by \Cref{cor:localsearchcontinuous} to show that, for any fixed $k$, a $k$-local optimum can be found in polynomial time. The strategy is as follows: we start with some small cover of $P$, and then test every set of $k$ pieces in that cover to see if it can be replaced by a smaller set. If such a swap is possible, then we perform the swap and start over with the smaller cover. This repeats until no swap can be found. 

The polygon $P$ {with $n$ vertices} intersects at most $D$ {integer} grid squares, so we can find an initial cover with at most $Dn$ squares{--note that the intersection of a unit square with $P$ has at most $n$ connected components as each side of $P$ is incident to at most one connected component}.
So finding a $k$-local optimum requires at most a polynomial (in $n$ and $D$) number of swaps, because the number of pieces decreases in each step.
It remains to show how to find a swap, or determine that none exists.

Given a set of complete small pieces (\emph{not} necessarily a cover) in $P$, we define a corresponding ``combinatorial structure''. The combinatorial structure is determined by:

\begin{itemize}
\item The line arrangement containing edges of $P$ {and edges of the squares bounding the pieces}.
\item For each bounding square {$S$, the connected component of $P^\circ \cap S$ defining the piece (indicated with a number).}
\end{itemize} 

\begin{figure}[ht]
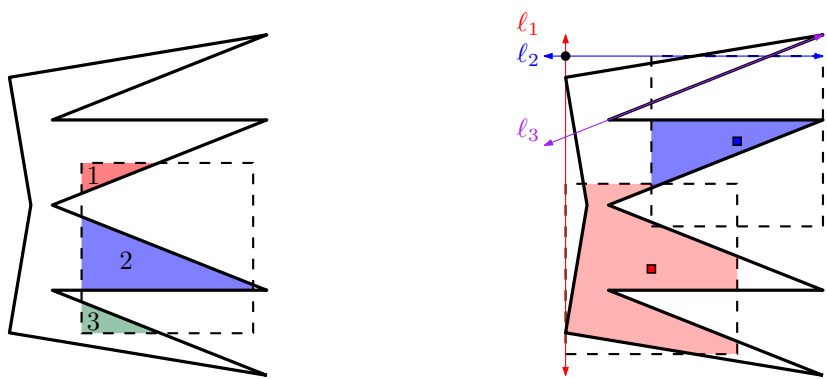

    \centering
    \includegraphics[page=4]{figures/combinatorial_structure.pdf}
    \hspace{3cm}
    \includegraphics[page=3]{figures/combinatorial_structure.pdf}
    \caption{{The combinatorial structure of a set of complete small pieces. Left: The connected components of $P^\circ \cap S$ are numbered. Right: The centers of the bounding squares of the two pieces  are marked with squares. The intersection of lines $\ell_1$  and $\ell_2$ lies above $\ell_3$.}}
    \label{fig:combinatorial_structure}
\end{figure}

Let $\mathcal{L}$ be the set of lines containing edges of $P$ or edges of a bounding square.
For each $(\ell_1, \ell_2, \ell_3)\in \mathcal{L}^3$, the first piece of information says on which side of $\ell_3$ the intersection of $\ell_1$ and $\ell_2$ occurs, or if the intersection is contained in $\ell_3$. {(This is the orientation of the triangle formed by $(\ell_1, \ell_2, \ell_3)$ or the order type of the dual of the lines.)}
The slope of each line in $\mathcal{L}$ is fixed, so all the information about the arrangement of $\mathcal{L}$ can be described by equalities and (strict) inequalities that are linear in the coordinates of the bounding squares. For example, if $\ell_1$ is a vertical line containing the left edge of a square with center $(x_i, y_i)$, $\ell_2$ is a horizontal line containing the top edge of a square with center $(x_j, y_j)$, and $\ell_3=\{(x, y):\alpha x+\beta y=\gamma\}$ is a line containing an edge of $P$, then the combinatorial data contains one of:
\begin{itemize}
    \item $\alpha(x_i-\frac12)+\beta(y_j+\frac12)=\gamma$,
    \item $\alpha(x_i-\frac12)+\beta(y_j+\frac12)<\gamma$, or 
    \item $\alpha(x_i-\frac12)+\beta(y_j+\frac12)>\gamma$, 
\end{itemize}
specifying on which side of $\ell_3$ the intersection $(x_i-\frac12, y_j+\frac12)$ of $\ell_1$ and $\ell_2$ occurs {see \Cref{fig:combinatorial_structure}}.

A realization of a combinatorial structure is a set of pieces satisfying the constraints. Since the constraints are linear, the set of realizations of any fixed combinatorial structure is a convex set. 
Note that the combinatorial data does not ensure that a realization corresponds to a cover of the polygon $P$.
For instance, it is easy to make an example where all the squares are placed almost on top of each other, while big regions of $P$ remain uncovered.
Given just the combinatorial data of a (hypothetical) arrangement, we can use linear programming (LP) to check if it can actually be realized by some set of small pieces. What we really want to be able to check is if \emph{any} such realization corresponds to a cover of $P$. The following lemma says that we only need to check a single realization.

\begin{lemma}\label{lem:structureconvex}
For a given combinatorial structure, either 
each or no realization is a cover.
\end{lemma}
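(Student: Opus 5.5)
We prove that whether a realization covers $P$ is a function of the combinatorial structure alone. The overlay of $P$ with the bounding squares forms an arrangement of faces. Whether a point of $P$ is covered depends only on which face of this arrangement it lies in, and on which pieces contain that face. We show that both the face structure and the piece-containment relation are fixed by the combinatorial structure.

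First we fix the arrangement. The combinatorial structure records, for every triple $(\ell_1,\ell_2,\ell_3)\in\mathcal L^3$, on which side of $\ell_3$ the point $\ell_1\cap\ell_2$ lies, or whether it lies on $\ell_3$. The order of intersection points along each line $\ell$ is determined by these sign tests against the remaining lines. Hence the combinatorial type of the arrangement $\mathcal A(\mathcal L)$ is the same for all realizations: its vertices, edges and faces, their incidences, and the order of vertices along each line.

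The slopes of all lines are fixed, and each line carries a label: it is either the supporting line of an edge of $P$, or a given side of a given square. Every edge of $P$ and every square side is a segment of its line, bounded by particular intersection points. Which intersection points these are is determined combinatorially, since the endpoints of an edge of $P$ are intersections of consecutive edge lines, and the corners of a square are intersections of its own sides. So the set of arrangement faces lying inside $P$ is fixed, and likewise the set of faces lying inside each square $S_j$. Consequently the decomposition of $P^\circ\cap S_j$ into connected components, expressed as sets of faces glued along shared edges that are not on $\partial P$ or $\partial S_j$, is combinatorially determined.

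The structure also names which component each piece $Q_j$ uses, so the set of faces contained in $Q_j$ is the same in every realization. One subtlety is matching the numbering of components across realizations. We fix a canonical numbering, for example by the first component met in a fixed traversal of the arrangement. That traversal is defined combinatorially, so a given number refers to the same set of faces in every realization.

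The pieces cover $P$ if and only if every face (of dimension $0$, $1$ or $2$) of the arrangement inside $P$ lies in some $Q_j$. Since the pieces are closed, it suffices to check that every $2$-face of the arrangement lying in $P$ is contained in some piece. This condition depends only on the combinatorial data, so it holds for every realization or for none.

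The main obstacle is making the claim in the second step rigorous: that the combinatorial type of a line arrangement with labelled lines is determined by the triple orientations. This is the standard fact that the order type of the duals determines the arrangement, but degenerate cases need attention. These include parallel lines, where the $\ell_1\cap\ell_2$ test is vacuous but the side relation between parallel lines is still encoded by a linear inequality, and also coincident lines and multiple lines through one point. If needed, we would add the parallel-line comparisons explicitly to the structure; they are also linear in the square coordinates.
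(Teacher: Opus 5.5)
Your proof is correct, but it takes a different route from the paper. The paper's argument is dynamic. It uses convexity of the realization space to slide the squares along a straight segment from realization $A$ to realization $B$. It then argues by case analysis that the cover property could only be lost at some moment if part of an edge of $P$ became uncovered, part of the interior became uncovered, or a piece became disconnected, and each of these events would force some triple orientation to change.

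You argue statically instead. The triple data fix the face structure of the labelled arrangement, and hence fix the faces inside $P$, the faces inside each square, and the components of $P^\circ\cap S_j$ as sets of faces. So each piece is a fixed union of closed $2$-faces, and coverage is an invariant of the combinatorial type.

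Your version proves slightly more: it compares any two realizations of the same structure directly and never uses convexity. It also avoids the informal ``first moment of failure'' reasoning and the enumeration of failure events. The costs are that you must invoke the standard fact that triple orientations determine the face lattice of the arrangement, and you must make the component numbering canonical. The paper leaves this numbering implicit; in its dynamic argument the chosen component is tracked by continuity.

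One small remark: the parallel-line case needs no extra data. For parallel $\ell_1,\ell_2$ and any transversal $h\in\mathcal L$, the triple $(\ell_1,h,\ell_2)$ already records on which side of $\ell_2$ the line $\ell_1$ lies, or that they coincide. A transversal always exists, since every direction is non-parallel to either the vertical or the horizontal square sides.
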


\begin{proof}
Let $A$ and $B$ {be} realizations of a combinatorial structure with $m$ squares. We show that $A$ is a cover if and only if $B$ is a cover. Suppose that $A$ is a cover. We continuously move the bounding squares from their positions in $A$ to their positions in $B$, along straight lines in the $2m$-dimensional coordinate space. If, during this process, the arrangement stops being a cover, then one of the following cases occurs: (a) part of a polygon edge is uncovered, (b) part of the interior is uncovered, or (c) a piece becomes disconnected. The  cases {are} illustrated in \Cref{fig:continuouscovering}. 
\begin{figure}[htb]
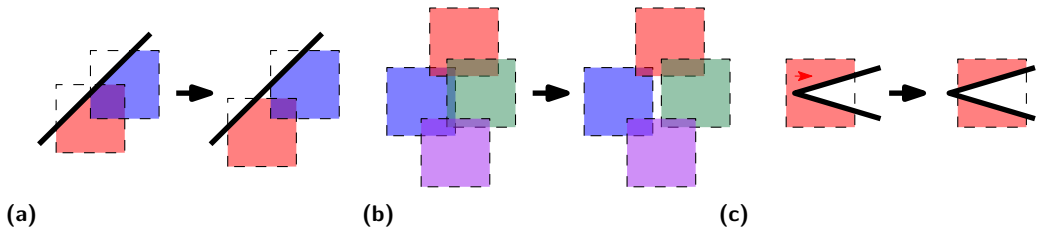

    \centering
    \begin{subfigure}{.33\textwidth}
    \centering
      \includegraphics[page=6]{figures/TechnicalOverview-2.pdf}
    \subcaption{}
    \label{}
    \end{subfigure}\hfill
    \begin{subfigure}{.33\textwidth}
    \centering
    \includegraphics[page=7]{figures/TechnicalOverview-2.pdf}
    \subcaption{}
    \label{}
    \end{subfigure}\hfill
    \begin{subfigure}{.33\textwidth}
    \centering
    \includegraphics[page=8]{figures/TechnicalOverview-2.pdf}
    \subcaption{}
    \label{}
    \end{subfigure}\hfill
    \caption{If the bounding squares are moving continuously, then there are three ways that a covering might stop being valid. Any of these change the arrangement of lines in $\mathcal{L}$. (a) part of an edge stops being covered. (b) part of the interior stops being covered. (c) a piece ceases to be connected.}
    \label{fig:continuouscovering}
\end{figure}
We consider a snapshot right after the first point becomes uncovered. In case (a), the uncovered region is a triangle (or a rectangle). Note that the intersection of a right angle of this region has changed sides with respect to the polygon side independent of whether both or just one involved square is moving.
In case (b), the uncovered region is a rectangle and its appearance is due to two squares that become disjoint due to the movement of at least one square. In case (c), a side of a moving square now intersects a side of the polygon which was previously disjoint.
Consequently, in each case some of the combinatorial data is changing. Since the set of realizations of a combinatorial structure is convex, we conclude that this doesn't happen. Since the arrangement forms a cover at every point in the motion, we conclude that $B$ must be a cover.

By a symmetric argument, it is clear that if $B$ is a cover, then $A$ is a cover.
\end{proof}

Given a cover with complete pieces, we want to check if some set of $k$ of its pieces can be swapped for a set of $k-1$ pieces. This is accomplished by the next lemma.

\begin{lemma}\label{lem:singleswap}
Let $P$ be a polygon with $n$ vertices and consider a small cover $\mathcal{R}$ of $P$ with $m$ pieces.
For each fixed $k$, we can determine in {$\mathcal{O}_k((n+m)^{O(k)}) $ time} whether any set of $k$ of the pieces from the cover can be replaced by $k-1$ small pieces (not necessarily from the cover) to form a cover with fewer pieces.
\end{lemma}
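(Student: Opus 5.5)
The plan is to enumerate all candidate swaps. For each swap we enumerate the combinatorial structures of the resulting configuration, and we use linear programming together with \Cref{lem:structureconvex} to decide whether some structure is both realizable and a cover.

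First, there are $\binom{m}{k}=O(m^k)$ choices of the $k$ pieces to remove. For a fixed choice, let $\mathcal R_0$ be the remaining $m-k$ pieces, whose bounding squares are fixed. We are looking for $k-1$ new complete pieces with bounding squares centered at unknown $(x_1,y_1),\dots,(x_{k-1},y_{k-1})$. As noted earlier, it suffices to consider complete pieces, since any small piece is contained in the complete piece of the same component.

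Second, we enumerate combinatorial structures for the whole configuration $\mathcal R_0$ together with the new squares. The line set $\mathcal L$ contains $O(n+m)$ lines. Only $4(k-1)$ of them depend on the unknowns, and each such line is axis-parallel with one free coordinate. The relations among the fixed lines are already determined. So a structure is fixed once we specify the sign of each triple-orientation predicate involving at least one moving line. Each predicate is a linear form in at most two unknowns $x_i,y_j$ compared to a constant, and there are $O((n+m)^2)$ such triples per moving line.

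To avoid enumerating $3^{\mathrm{poly}}$ sign vectors, I would instead bound the number of realizable structures. All these predicates are hyperplanes in $\mathbb R^{2(k-1)}$, and there are $N=O(k(n+m)^2)$ of them. Their arrangement has $O(N^{2k-2})=(n+m)^{O(k)}$ faces of all dimensions, and each face is exactly one realizable sign pattern. We can enumerate the faces, with a representative point for each, in $(n+m)^{O(k)}$ time, for instance by picking for each face a vertex of its closure via $2k-2$ tight hyperplanes plus a small LP; alternatively, we enumerate sign vectors incrementally and prune infeasible ones with LP. We must also choose which connected component of $P^\circ\cap S$ each new square uses. There are at most $n$ components per square, giving a factor $n^{k-1}$, and the component is determined at the representative point.

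Third, for each realizable structure, take its representative realization, which is a concrete set of $m-1$ complete pieces, and test directly in polynomial time whether it covers $P$. For example, compute the overlay of the $O(n+m)$ polygons and check every cell. By \Cref{lem:structureconvex}, either every realization of that structure is a cover or none is, so one test suffices. Moreover, any valid swap yields some realization of some structure in our enumeration, so the procedure finds a swap if and only if one exists. The total running time is $m^k\cdot (n+m)^{O(k)}\cdot n^{k}\cdot\mathrm{poly}(n,m)=\mathcal O_k((n+m)^{O(k)})$.

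The main obstacle is the second step: making sure that the number of structures enumerated is polynomial for fixed $k$, and that every realizable one is reached. The key fact I rely on is that the relevant predicates form a hyperplane arrangement in the fixed-dimensional space $\mathbb R^{2(k-1)}$, whose face count is polynomial. A smaller point to check is that the components of $P^\circ\cap S$ are consistently labelled across a structure. I would define the labels by which edges of $P$ bound each component, and these are constant within a face.
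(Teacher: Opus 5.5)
Your proposal is correct and follows the paper's proof. For each $k$-subset, you enumerate the combinatorial structures obtained by adding $k-1$ new squares, produce one realization per structure, test it explicitly for coverage, and conclude via \Cref{lem:structureconvex}. The only real difference is the enumeration step. The paper counts placements of each new line within the existing arrangement and checks each candidate with an LP (strict inequalities replaced by $\ge t$, maximizing $t$). You instead take the faces of the predicate hyperplane arrangement in $\mathbb{R}^{2(k-1)}$, which yields only realizable structures, each with a representative. This also justifies the $(n+m)^{O(k)}$ bound more cleanly, because it accounts for predicates that couple two new lines with a fixed edge of $P$, which a per-line count does not capture.
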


\begin{proof}
For each of the $m\choose k$ $k$-subsets of $\mathcal{R}$, we remove it and test each of the ways of adding $k-1$ squares to the combinatorial arrangement.
We need to argue that for any fixed $k$, there are only polynomially many ways of adding $k-1$ squares to a combinatorial structure with $m$ squares. This corresponds to adding $2k$ vertical lines and $2k$ horizontal lines to the arrangement. A line arrangement with $\ell$ lines contains at most $O(\ell^2)$ intersection points.
{Thus} there are $O(\ell^2)$ combinatorially different ways to add a new line with a fixed slope.
A combinatorial structure with $m$ squares has $n+2m$ lines.
So the number of ways of adding the new $4k$ 
lines is $\mathcal{O}_k((n+m)^{4k})$. The intersection of a square with $P$ can have at most $n$ connected components, so in total there are $\mathcal{O}_k((n+m)^{4k+1})$ ways of adding $k$ squares to the arrangement. {For fixed $k$,} this is polynomial in $n$ and $m$. 

For each of these candidate combinatorial structures, we need to check {whether} it represents a valid covering. By \Cref{lem:structureconvex}, it {suffices} to find a single representative set of pieces (if one exists). 
Each element of the combinatorial data is a linear constraint involving the coordinates of the squares and either an equality or a strict inequality.
We construct an LP instance from these linear constraints, adding a new variable $t$ and replacing each strict inequality $\mathcal{C}>0$ with $\mathcal{C}\ge t$.
To test if a solution exists, we solve the LP instance  in which the objective function is to maximize $t$.
If there is no feasible solution with $t>0$, then we conclude that there is no valid swap with this combinatorial structure. If we find a feasible solution with $t>0$, then we check if this solution covers all of $P$.
We do this by constructing the pieces explicitly and computing their union.
If the union does not cover $P$, then we know that no arrangement with this combinatorial structure does.
The only free variables in the LP instance are $t$ and the $2(k-1)$ coordinates of the newly-added bounding squares. So in order to solve the LP instance, we can just check all of the $(n+m)^{\mathcal{O}(k)}$ basic solutions. This takes (strongly-)polynomial time for a fixed $k$.

We should check that this procedure {finds} a swap if one exists. Suppose that there is a set of $k$ pieces in the initial cover that can be swapped with $k-1$ pieces in order to obtain a new cover $\mathcal{R}'$.
The cover $\mathcal{R}'$ has a combinatorial structure, which is one of the candidate structures tested by our procedure.
Furthermore, $\mathcal{R}'$ is a realization of this structure with the positions of the $m-k$ old pieces being the same as in $\mathcal{R}$, so such a realization exists, and the linear programming step will find one (but the realization found is not necessarily the same one as $\mathcal{R}'$). Since $\mathcal{R}'$ is a cover, by \Cref{lem:structureconvex}, the realization found will be a cover.
\end{proof}

\begin{theorem}\label{thm:ptasmain}
There is an algorithm that, given a polygon $P$ and a parameter $\eps$, computes a $(1+\eps)$-approximation of the optimal small cover of $P$.
The algorithm runs in time $(Dn)^{\text{poly}(\frac{1}{\eps})}$, where $D$ is the number of {integer} grid squares intersecting $P$.
\end{theorem}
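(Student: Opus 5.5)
The algorithm is the local search from the start of this subsection, with the swap size $k$ chosen as a function of $\eps$. Let $C'$ be the constant from \Cref{cor:localsearchcontinuous}. We set $k:=\max\{\lceil C'\rceil,\lceil (C'/\eps)^2\rceil\}=O(1/\eps^2)$. Then $1+C'/\sqrt{k}\le 1+\eps$, and $k\ge C'$ as the corollary requires.

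\emph{Initial cover.} Take each of the at most $D$ integer grid squares $S$ meeting $P$. For each, take the closures of the connected components of $P^\circ\cap S$; there are at most $n$ per square. These are complete small pieces, and together they cover $P$. This gives an initial complete cover $\mathcal R_0$ with $m_0\le Dn$ pieces, computable in polynomial time.

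\emph{Iteration.} While possible, we apply \Cref{lem:singleswap} to the current cover $\mathcal R$, testing for every $k'\le k$ whether some $k'$ pieces can be replaced by $k'-1$ pieces. Testing all $k'\le k$ matches the definition of a $k$-local optimum, which forbids replacing \emph{at most} $k$ pieces by \emph{strictly fewer}. A swap of $k'$ pieces for $j<k'-1$ pieces can be padded up to $k'-1$ pieces by duplicating a piece, so testing exactly $k'-1$ replacements suffices.

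The pieces returned by the LP realization are complete. Each is defined as a component of $P^\circ\cap S$ for a square $S$, so every cover we maintain consists of complete pieces, as \Cref{cor:localsearchcontinuous} requires.

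\emph{Counting.} Each swap decreases the number of pieces by at least one, so there are at most $Dn$ swaps. The number of pieces never grows beyond $m_0\le Dn$, so each call to \Cref{lem:singleswap} costs $O_k((n+Dn)^{O(k)})=(Dn)^{O(k)}$. The total running time is $Dn\cdot k\cdot (Dn)^{O(k)}=(Dn)^{O(1/\eps^2)}=(Dn)^{\poly(1/\eps)}$.

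\emph{Output.} When the loop stops, $\mathcal R$ is a $k$-local optimum. \Cref{cor:localsearchcontinuous} then gives $|\mathcal R|\le (1+C'/\sqrt k)\,\OPT\le(1+\eps)\OPT$.

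\emph{Main obstacle.} The difficulty is bookkeeping: making sure the hypotheses of \Cref{lem:singleswap} and \Cref{cor:localsearchcontinuous} hold throughout. Specifically, the cover must stay a set of complete pieces, and its size must stay polynomially bounded so that $m$ in \Cref{lem:singleswap} is at most $Dn$. Both follow from the monotone decrease of the piece count and from the fact that the swap procedure only ever constructs complete pieces.
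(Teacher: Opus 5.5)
Your proposal is correct and follows the paper's proof: a grid-based initial cover of at most $Dn$ complete pieces, repeated swaps found via \Cref{lem:singleswap} with $k=\Theta(1/\eps^2)$, at most $Dn$ iterations, and \Cref{cor:localsearchcontinuous} for the approximation ratio. Two of your steps are more careful than the paper, which leaves them implicit: you test every $k'\le k$, so the stopping condition literally matches the definition of a $k$-local optimum, and you note that every maintained piece stays complete.
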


\begin{proof}
Fix $k$ such that $\frac{C'}{\sqrt{k}}\le \eps$, and $k\ge C'$, where $C'$ is the constant from \Cref{cor:localsearchcontinuous}.

We can construct a cover of $P$ with small pieces by overlaying $P$ with the {integer} grid and taking the connected components of each intersection.
{Recall} that the intersection of a unit square with $P$ has at most $n$ connected components.
Hence, we obtain an initial solution with at most $Dn$ pieces, where $D$ is the number of {integer} grid squares intersecting $P$. We then check if this cover is a $k$-local optimum using the procedure from \Cref{lem:singleswap}.

If this cover is not a $k$-local optimum, then we perform the swap and repeat the process until no more valid swaps remain. Every swap decreases the total number of pieces by $1$, so the algorithm terminates after at most $Dn$ steps, and when it does we have obtained a cover such that no set of $k$ pieces can be swapped for a strictly smaller set of pieces. By \Cref{cor:localsearchcontinuous}, this solution has at most $(1+\eps)$-times as many pieces as an optimal solution. It is clear by \Cref{lem:singleswap} that the entire process takes time at most $(Dn)^{\text{poly}(\frac{1}{\eps})}$.
\end{proof}

The algorithm in \Cref{thm:ptasmain} runs in polynomial time on real RAM, but the bit-complexity of the coefficients can grow by a constant factor after each swap, potentially leading to exponential coefficient growth.
In order to obtain an algorithm that runs in strongly-polynomial time, we can adjust the {coordinates} after each swap.
{Let $m$ be the number of squares of the cover.
We consider the LP instance $L$ corresponding to the combinatorial structure of the cover.
Here, $L$ has two variables for all $m$ squares, not just the new squares introduced in the latest swap.
The cover that we found is a feasible solution $x\in\mathbb R^{2m+1}$ to $L$.
There are two coordinates in $x$ for each square and one for the value $t$ in the strict inequalities.
We want to find a \emph{basic} feasible solution to $L$, i.e., a vertex of the polyhedron defined by the constraints, because the bit-complexity of the variables in such a solution is polynomial in $n$, $D$ and the bit-complexity of the coordinates of vertices of $P$.
The value of $t$ (the variable for strict inequalities) should still be positive in the new solution.}

This can be done in strongly-polynomial time.
We can write $L$ such that all constraints have form $\mathcal{C}\ge 0$.
We start by evaluating each constraint at the feasible solution $x$.
A constraint $\mathcal{C}\ge 0$ is \emph{tight} if $\mathcal{C}(x)=0$.
{Let $\mathcal D$ be the set of tight constraints and let $\mathcal H=\{y\in\mathbb R^{2m+1}\mid \forall \mathcal C\in\mathcal D: \mathcal C(y)=0\}$.
In other words, $\mathcal H$ is the affine subspace that causes the tight constraints to be tight.}
The solution $x$ is a basic feasible solution if $\mathcal{H}$ has dimension $0$.
Otherwise, we can find a direction $v\in \text{Vec}(\mathcal{H})$ such that $t$ does not decrease when moving in direction $v$.
We move from $x$ in direction $v$ until an additional constraint becomes tight. 

{We obtain a new feasible solution $x'$ to $L$, where $t$ at $x'$ is at least as large as $t$ at $x$.
Furthermore, the dimension of $\mathcal{H}'$, the subspace of tight constraints corresponding to $x'$, has decreased by at least $1$ compared to $\mathcal H$.}
So we can repeat this process to find a basic feasible solution with the required properties.
The number of arithmetic operations depends only on the number of variables and the number of constraints, so this process runs in strongly polynomial time.
The bit-complexity of a basic feasible solution is polynomial in $n$, $D$ and the bit-complexity of the coordinates of vertices of $P$.
In particular, it does not increase with the number of swaps performed by the algorithm in \Cref{thm:ptasmain}.

\subsection{Polygons with Large Diameter}\label{sec:ptas:largepolygons}

In this section, we assume the existence of an algorithm $\mathcal{A}$ providing a $(1+\varepsilon)$-approximation for a small cover of a polygon $P$ with diameter at most $O(n/\eps)$. We show that from such an algorithm, one can design an algorithm $\mathcal{A}'$ that for an arbitrary polygon $P$, for which the size of a minimum cover of $P$ is $\OPT$, can find a value $\lambda$ such that $\OPT\leq \lambda\leq (1+\eps_0)\OPT$, where $\eps_0\leq \alpha\eps$ for some universal constant $\alpha$. The algorithm also provides an implicit description of a cover with this approximation guarantee. Recall that we work in the real RAM model. In particular, we can perform basic arithmetic operations like addition, multiplication, and division on the coordinates of the corners of the polygon in constant time.
\begin{restatable}{theorem}{largeDiameter}\label{thm:sparse-reduction}
Assume there exists an algorithm $\mathcal{A}$ which for any polygon $P$ with $n$ vertices and diameter $O(n/\eps)$ finds a $(1+\eps)$-approximate minimum cover of $P$ in time $f(n,\eps)$. 
Then there exists an algorithm $\mathcal{A}'$ that for any $n$-vertex polygon $P$ returns a value $\lambda_P$ with  $\OPT\leq \lambda_P\leq (1+\alpha\eps)\OPT$, where $\OPT$ is the size of a minimal cover of $P$ with small pieces and $\alpha$ is a sufficiently large universal constant. The runtime of $\mathcal{A}'$ is
$
f(\alpha n,\eps)\cdot \poly(n/\eps)$.
The algorithm also produces an implicit description of a $(1+\alpha\eps)$-approximate cover.
For any query point $q\in P$, the algorithm can in time $O(\poly(n/\eps)+f(4,\eps))$ return all pieces in the cover containing $q$.
 \end{restatable}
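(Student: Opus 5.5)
The plan is to decompose a large-diameter polygon into parts that are either ``sparse'' regions, where $P$ essentially looks like a few long edges and can be covered optimally (up to an additive constant per part) by an explicit periodic pattern, or ``dense'' clusters of diameter $O(n/\eps)$ that we hand to $\mathcal A$. We then show that the cuts between parts cost only an $O(\eps)$ fraction of $\OPT$.

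\textbf{Clustering.} Take the bounding squares of width $1$ around each vertex of $P$. Group vertices into clusters by repeatedly merging two groups whose bounding boxes are within distance $L:=c/\eps$ of each other. The result is at most $n$ clusters, each of diameter $O(n/\eps)$, pairwise separated by distance at least $L$. Outside the $L/2$-neighbourhoods of the clusters, $P$ contains no vertices. There it therefore decomposes into \emph{corridors}: regions bounded by portions of at most two edges, or by one edge together with unbounded free space. These can be computed in $\poly(n)$ time.

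\textbf{Lower bound from corridors.} A corridor of length $\ell$ needs $\Omega(\ell)$ pieces in any cover, since a unit square meets only $O(1)$ length of an edge. So charging a cut cost of $O(1)$ per corridor is affordable when $\ell\ge L$. For each corridor we compute in $O(1)$ time the optimal periodic cover of an infinite strip of that shape; this takes $f(4,\eps)$-type calls on a constant-size window, or a closed form. We truncate it to the corridor and obtain a cover whose count is within an additive $O(1)$ of the optimal count for that corridor region.

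\textbf{Dense parts and assembly.} For each cluster we form a polygon $P_i$: the cluster neighbourhood intersected with $P$, clipped by cut segments perpendicular to the corridors at distance about $L/2$. Each $P_i$ has $O(n_i)$ vertices and diameter $O(n/\eps)$. We run $\mathcal A$ on every $P_i$, which costs $\sum f(\alpha n_i,\eps)\le f(\alpha n,\eps)\cdot \poly(n)$. We output $\lambda_P$ as the sum of the cluster solutions and the corridor counts. This is an upper bound on $\OPT$, since the union is a cover of $P$.

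\textbf{Lower bound.} This is the main obstacle. We take an optimal cover of $P$, restrict it to each $P_i$ and each corridor, and cut pieces along the cut segments. We must choose the cut positions, by shifting the cut within a window of length $L/2$ and averaging, so that the number of optimal pieces crossing cuts is at most an $O(\eps)$ fraction of the pieces in the window. Each cut-crossing piece may split into $O(1)$ pieces, and the boundary between a periodic corridor pattern and an arbitrary cover contributes an additive $O(1)$ per cut. Since each corridor window contains $\Omega(L)=\Omega(1/\eps)$ pieces, these additive losses sum to $O(\eps)\OPT$. Combined with the $(1+\eps)$ loss from $\mathcal A$, this gives $\lambda_P\le(1+\alpha\eps)\OPT$.

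\textbf{Queries.} The cover is stored implicitly as the explicit cluster solutions plus the corridor periodic descriptions. For a query $q$, we locate its cluster or corridor in $\poly(n/\eps)$ time. In a corridor, we compute the period index by arithmetic in real RAM and return the $O(1)$ pattern pieces containing $q$, recomputing the constant-size local pattern in $f(4,\eps)$ time.
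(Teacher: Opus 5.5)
\textbf{Tapering corridors.} The main gap is in how you cover corridors. A corridor bounded by portions of two edges of $P$ is in general a trapezoid whose long sides are not parallel, so its width changes linearly along its length. There is then no ``infinite strip of that shape''. The number of pieces needed per unit length changes along the corridor, so no single periodic pattern is within an additive $O(1)$, or even a factor $1+O(\eps)$, of optimal. You also cannot fall back on cutting the corridor into windows of length $\Theta(1/\eps)$ and calling $\mathcal A$ on each. The corridor length, and hence the number $t$ of windows, is not bounded by any function of $n$ and $1/\eps$. Even for parallel strips, $\mathcal A$ only returns $(1+\eps)$-approximate covers, so an ``optimal'' periodic pattern with additive error $O(1)$ is not available. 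There you would have to repeat an approximate cover of a window of length $\Theta(1/\eps)$ and accept a multiplicative loss.

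The missing idea is the paper's treatment of thin corridors, which rests on three observations:
\begin{enumerate}
\item After a suitable translation, consecutive sections $S_j$ of length $1/\eps$ of a trapezoidal corridor are nested. The monotonicity lemma ($S\subset T$ convex implies $\OPT_S\le\OPT_T$) then makes $\OPT_j$ monotone in $j$.
\item For thin corridors, $\OPT_j$ lies between $\Omega(1/\eps)$ and $O(1/\eps^2)$. Hence the sampled estimates can jump by more than a factor $1+2\eps$ only $O(\log(1/\eps)/\eps)$ times.
\item Running $\mathcal A$ only on every $\ell$-th section, with $\ell\approx\eps^4 t$, therefore gives a $(1+O(\eps))$-estimate using $O(1/\eps^4)$ calls per corridor.
\end{enumerate}
Wide corridors are handled separately, by the integer grid together with an area-versus-perimeter bound.

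\textbf{The decomposition and the cuts.} Your structural claim is also false as stated: away from the clusters, $P$ need not split into regions bounded by at most two edges. Take a huge convex $n$-gon whose vertices are pairwise far apart. Its vertex-free region touches all $n$ edges, and ``cut segments perpendicular to the corridors'' are undefined there. You need an explicit decomposition, such as the paper's:
\begin{itemize}
\item first take the maximal horizontal corridors, i.e., trapezoids with two vertical sides whose other two sides lie on edges of $P$; these are pairwise interior-disjoint and there are at most $3n-5$ of them;
\item then take the vertical corridors of the remainder;
\item every leftover component then fits in a square of side $O(n/\eps)$.
\end{itemize}
Moreover, a cut across a corridor of width $w$ costs $O(w+1)$ pieces, not $O(1)$. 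This is still affordable, but only via an area lower bound of $\Omega(w/\eps)$ pieces for a window of length $\Theta(1/\eps)$, not via your length bound $\Omega(\ell)$. Finally, averaging over cut positions would not help, because the algorithm must fix its cuts without knowing $\OPT$. It is also unnecessary: for any fixed cut, delete the optimal pieces meeting it and re-cover its $1$-neighbourhood on each side with $O(w+1)$ pieces.
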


 \begin{corollary}
    There is an algorithm that, given a polygon $P$ and a parameter $0<\eps<1$, computes a $(1+\eps)$-approximation to the number of pieces in an optimal small cover of $P$. The algorithm runs in time $(n/\eps)^{\poly(1/\eps)}$.

    The algorithm also produces an implicit description of a $(1+\eps)$-approximate cover.
For any query point $q\in P$, the algorithm can in time $O(\poly(n/\eps)+ (1/\eps)^{\poly(1/\eps)}))$ return all pieces in the cover containing $q$.
\end{corollary}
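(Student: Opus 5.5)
The corollary should follow by plugging the small-diameter algorithm of \Cref{thm:ptasmain} into the reduction of \Cref{thm:sparse-reduction}, after rescaling $\eps$.

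\textbf{Step 1: choose the accuracy.} Let $\alpha$ be the universal constant of \Cref{thm:sparse-reduction} and set $\eps' := \eps/\alpha$. Running the reduction with accuracy $\eps'$ then gives $\OPT\le\lambda_P\le(1+\alpha\eps')\OPT=(1+\eps)\OPT$. Since $\alpha$ is a constant, $\poly(1/\eps')=\poly(1/\eps)$ and $\poly(n/\eps')=\poly(n/\eps)$.

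\textbf{Step 2: instantiate $\mathcal A$.} Take $\mathcal A$ to be the local-search algorithm of \Cref{thm:ptasmain}. It needs a bound on $D$, the number of integer grid squares met by the input polygon. If that polygon has $n'$ vertices and diameter $O(n'/\eps')$, its bounding box has side length $O(n'/\eps')$, so $D=O((n'/\eps')^2)$. \Cref{thm:ptasmain} then gives
\[
f(n',\eps') = (Dn')^{\poly(1/\eps')} = \bigl((n'/\eps')^{3}\bigr)^{\poly(1/\eps)} = (n'/\eps)^{\poly(1/\eps)} .
\]

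\textbf{Step 3: total running time.} \Cref{thm:sparse-reduction} bounds the total time by $f(\alpha n,\eps')\cdot\poly(n/\eps')$. Substituting Step 2,
\[
f(\alpha n,\eps')\cdot\poly(n/\eps') = (\alpha n/\eps')^{\poly(1/\eps)}\cdot \poly(n/\eps) = (n/\eps)^{\poly(1/\eps)} .
\]
The polynomial factor is absorbed into the exponent, using $0<\eps<1$ so that $\poly(1/\eps)\ge 1$ and $n/\eps > 1$.

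\textbf{Step 4: queries.} \Cref{thm:sparse-reduction} gives query time $O(\poly(n/\eps)+f(4,\eps'))$. For a polygon with four vertices, Step 2 gives $f(4,\eps') = (4/\eps')^{\poly(1/\eps)} = (1/\eps)^{\poly(1/\eps)}$. This is the stated query bound, and the implicit description of the cover comes directly from \Cref{thm:sparse-reduction}.

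The only step needing care is Step 2. \Cref{thm:ptasmain} is stated in terms of $D$, not the diameter, so one must check that diameter $O(n'/\eps')$ gives $D$ polynomial in $n'/\eps'$. It does, via the bounding-box estimate. The polygons passed to $\mathcal A$ may also have up to $\alpha n$ vertices, which is already reflected in the $f(\alpha n,\eps')$ term.
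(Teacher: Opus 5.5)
Your proposal is correct and follows the paper's proof, which instantiates $\mathcal{A}$ in \Cref{thm:sparse-reduction} with the algorithm of \Cref{thm:ptasmain} using $D=O((n/\eps)^2)$. Your explicit rescaling $\eps'=\eps/\alpha$, the bounding-box estimate for $D$, and the bookkeeping of the running and query times spell out steps the paper leaves implicit.
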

\begin{proof}
Use~\cref{thm:sparse-reduction} with $\mathcal{A}$ being the algorithm of~\cref{thm:ptasmain} with $D=O((n/\eps)^2)$. 
\end{proof}

We now sketch the idea of the reduction; the details are given in \Cref{sec:bigpoly}.
We first argue that there is a set of $k=O(n)$  disjoint `large' \emph{corridors} $\mathcal{C}=\{C_1,\dots,C_k\}$  with each $C_i\subset P$ such that we can solve the covering problem on $C_1,\dots,C_k,$ and $P\setminus \bigcup \mathcal C$ separately. Each corridor $C_i$ is constructed to have a pair of sides that are either both vertical or both horizontal and are of distance $\geq 1/\eps$ apart. The remaining two sides are both contained in an edge of $P$. Finally, each $C_i$ is essentially picked to be of maximum volume satisfying these properties. See~\cref{fig:rhombus}.

We will show that the union of $(1+O(\eps))$-approximate covers of $C_1,\dots,C_k$ and $P\setminus \bigcup\mathcal C$ yields a $(1+O(\eps))$-approximate cover of $P$. For  $P\setminus \bigcup\mathcal C$,  the diameter of each connected component is at most $n/\varepsilon$, so we can use the algorithm $\mathcal{A}$ for bounded diameter polygons.
The corridors are more tricky as the size of their optimal solutions cannot be bounded in terms of $n$, so for those, we maintain an implicit solution. Denote by $a_i$ and $p_i$ the area and perimeter of $C_i$.
If $a_i\geq p_i/\eps$, one can show that the intersection $C_i$ with the {integer} grid gives an $1+O(\eps)$ approximate cover, and this solution is easy to describe implicitly. The tricky part is handling a corridor $C_i$ with $a_i\leq p_i/\eps$ where $C_i$ is `long' and `skinny'. Suppose that $C_i$ has a pair of vertical edges (the horizontal case is similar). In this case, we partition $C_i$ further into corridor sections $(S_i^j)_{j=1}^t$, all having a pair of vertical edges and width $1/\varepsilon$. We then again argue that it suffices to solve the problem on each $S_i^j$ with an approximation ratio of $1+O(\varepsilon)$. Unfortunately, we cannot afford  to find a solution for each $S_i^j$ separately, but we can argue about a certain monotonicity of sizes of optimal covers of the $S_i^j$. Namely, after appropriate translations, it is either the case that $S_i^j\subset S_i^{j+1}$ for all $j$ or it is the case that $S_i^j\supset S_i^{j+1}$ for all $j$. In the former case, we show that the size of the optimal cover of $S_i^j$ is at most the size of the optimal cover of $S_i^{j+1}$ and a similar claim holds in the latter case. 
Based on these properties, and defining $\ell=\eps^4t$, we show that it suffices to approximate the minimum covers of $S_i^{\ell},S_i^{2\ell},\dots S_i^{\lfloor t/\ell\rfloor \ell}$.
For any $S_i^j$ with $k\ell \leq j<(k+1)\ell$, for some $k$, we can then use the size of the approximate cover of $S_i^{(k+1)\ell}$ as an upper bound on the size of the optimal cover of $S_i^j$. This only requires us to run $\mathcal{A}$ on the $t/\ell=O(1/\eps^4)$ instances of the form $S_i^{(k+1)\ell}$.
While for some values of $k$, this approach might not give $1+O(\eps)$ approximation to $S_i^j$, the monotonicity ensures that for most values of $k$, the approximation is good, and suffices to give a final approximation guarantee of $1+O(\eps)$.

\subsection{Equivalence of Small Covers and Partitions}\label{sec:equivalence}

The proof of~\Cref{lem:nondegenerate} lets us show the equivalence of small covering and partitioning -- a result which may be interesting in its own right.

\begin{theorem}\label{thm:=}
For any polygon $P$, possibly with holes, optimal small covers and small partitions have the same number of pieces.
\end{theorem}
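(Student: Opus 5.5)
A small partition is already a small cover, so the optimal cover never uses more pieces than the optimal partition. For the reverse inequality, I take an optimal small cover $\mathcal Q=\{Q_1,\dots,Q_q\}$. As remarked at the start of \Cref{sec:ptas:smallcover}, I may assume every piece is complete. The goal is to turn $\mathcal Q$ into a partition with at most $q$ pieces, each still connected and still inside its bounding square $S_i$.

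The approach is to resolve overlaps greedily, in the spirit of the modification in \Cref{lem:nondegenerate}. Consider the overlay of $\mathcal Q$ and its cells (regions of $P$). Each cell must be assigned to exactly one piece that contains it. Then $Q_i'$, the union of the cells assigned to $i$, lies in $Q_i\subseteq S_i$, so it is small. The issue is connectivity. I would process the pieces in order: let $Q_1'=Q_1$, and let $Q_i'$ be the closure of $Q_i\setminus\bigcup_{j<i}Q_j'$. These sets cover $P$ and are interior-disjoint, but $Q_i'$ may be disconnected.

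To repair this, I would use minimality of $\mathcal Q$ together with an exchange argument instead of a fixed order. Among all covers with $q$ complete pieces, choose one minimizing total overlap area $\sum_{i<j}|Q_i\cap Q_j|$, or, more robustly, a lexicographic potential. If two pieces $Q_i,Q_j$ overlap in interior, I want to shrink one so that the overlap disappears while both stay connected. If $Q_i\setminus Q_j$ is connected, I replace $Q_i$ by $\overline{Q_i\setminus Q_j}$. Otherwise $Q_j$ pierces $Q_i$, and the components of $Q_i\setminus Q_j$ are connected to each other only through $Q_j$. In that case I would reassign pieces: give $Q_j$ all of $Q_i\cap Q_j$ and, for each component $K$ of $Q_i\setminus Q_j$, either keep it with $Q_i$ or note that $K$ is attached to $Q_j$. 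Since $Q_j\cup K$ may not fit in one square, this needs care. The cleaner option is to use the fact that $P$ itself connects things: among the components of $Q_i\setminus Q_j$, at most one can be joined to the rest of $Q_i$ without passing through $Q_j$. The remaining components are bounded by $\partial Q_j$ and $\partial P$, and can be swapped to $Q_j$ only if they lie in $S_j$.

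The main obstacle is this piercing step. I would handle it with a geometric lemma, like the one implicit in the proof of \Cref{lem:nondegenerate}. For two complete pieces in axis-parallel unit squares $S_i,S_j$, the intersection $S_i\cap S_j$ is a rectangle. A component $K$ of $Q_i\setminus Q_j$ that is cut off from the rest of $Q_i$ by $Q_j$ must be bounded by edges of $S_j$ on one side. Because $Q_j$ is a full component of $P^\circ\cap S_j$, this forces $K$ into a corner or strip region of $S_i\setminus S_j$ that is enclosed by $P$'s boundary. Using this structure, I would show one can always choose which of the two pieces keeps the overlap so that both residual pieces are connected. Applying the choice pairwise, with an induction on the number of overlapping pairs, reaches a partition with $q$ pieces; this works because shrinking a piece never creates new overlaps. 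Pieces that become empty are discarded, which only lowers the count. This gives partition size at most $q$, and hence equality.
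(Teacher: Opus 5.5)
Your easy direction is fine, but the reverse direction depends on a pairwise claim that is false. For two complete pieces it is not always possible to hand the \emph{entire} overlap $Q_i\cap Q_j$ to one of them and keep the other connected, because $Q_i\setminus Q_j$ and $Q_j\setminus Q_i$ can both be disconnected. Take $P=[0,1]\times[-1,3]$ with the two holes $(0.1,0.9)\times(-0.9,0.6)$ and $(0.1,0.9)\times(0.7,2.9)$, and let $S_i=[0,1]\times[0,1]$ and $S_j=[0,1]\times[\frac12,\frac32]$. Both $Q_i$ and $Q_j$ are H-shaped: two side bars joined by the crossbar $[0,1]\times[0.6,0.7]$. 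Then $Q_i\setminus Q_j$ consists of the two bar stubs below $y=\frac12$, and $Q_j\setminus Q_i$ of the two stubs above $y=1$. Whichever piece receives the overlap, the other falls apart.

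The only way out is to \emph{split} the overlap: $Q_i$ keeps a thin strip along the lower side of the crossbar, and $Q_j$ keeps one along the upper side. This is the idea your proposal is missing, and it is what \Cref{lem:nondegenerate} does. After shrinking the squares slightly and filling holes, every interval shared with $\partial P$ is offset inward by an amount that decreases with its length. The piece with the longer shared interval therefore retains a corridor along $\partial P$: here $Q_i$ at the lower hole (length $2$ versus $1$) and $Q_j$ at the upper hole. This makes the whole family properly non-piercing.

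Your induction also has no invariant. Once the first overlap is resolved, the pieces are no longer complete, so a lemma about complete pieces cannot be reapplied. Moreover, two bites taken from $Q_i$, each harmless by itself, can jointly disconnect it. So ``shrinking never creates new overlaps'' gives termination but not connectivity.

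The paper obtains the invariant from lens bypassing \cite[Section~4]{NonpiercingPTAS}. It repeatedly takes an inclusion-minimal lens $\ell\subseteq Q'_i\cap Q'_j$ and removes it from $Q'_i$ by rerouting $\partial Q'_i$ just outside $\ell$ along the arc of $\partial Q'_j$. By \cite[Lemma~4.6]{NonpiercingPTAS} the family stays non-piercing, so the process ends with pairwise disjoint connected regions that together still contain the chosen point of every cell of the overlay. Finally, the filled holes are removed again, which is harmless since all modifications happened along curves disjoint from the holes. The perturbed boundaries are then restored to the original ones.

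Without this, or some other argument that splits overlaps along $\partial P$ while controlling all pieces simultaneously, your proposal does not show that an optimal small partition is no larger than an optimal small cover.
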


\begin{proof}

Consider an optimal small cover $\mathcal Q$ consisting of complete small pieces; see \Cref{fig:partition}.

\begin{figure}[ht]
\centering
\includegraphics[page=13,width=\textwidth]{figures/TechnicalOverview-2.pdf}
\caption{Steps in the proof of \Cref{thm:=}.}
\label{fig:partition}
\end{figure}

Consider the overlay of $\mathcal Q$ and choose a point in the interior of each cell which is also in~$P$.
Construct a set of proper non-piercing regions $\mathcal Q'$ as in \Cref{lem:nondegenerate}.
As described in the proof, the pieces in $\mathcal Q'$ are subsets of those in $\mathcal Q$, except that some holes of $P$ have been filled.
Importantly, the boundaries of the pieces in $\mathcal Q'$ are just slight perturbations of those in $\mathcal Q$.

The following process of \emph{lens bypassing} is described in \cite[Section~4]{NonpiercingPTAS}.
Since $\mathcal Q'$ are non-piercing, any intersection $Q'_i\cap Q'_j$ consists of a set of \emph{lenses}, which are regions bounded by one interval of the boundaries of each set $Q'_i$ and $Q'_j$.
In the process of lens bypassing, we consider an inclusion-wise minimal lens $\ell$, say in the intersection $Q'_i\cap Q'_j$.
Let the boundary of $\ell$ be $\gamma_i\cup\gamma_j$, where $\gamma_i$ and $\gamma_j$ are on the boundaries of $Q'_i$ and $Q'_j$, respectively.
We change the boundary of $Q'_i$ by replacing $\gamma_i$ by a curve following $\gamma_j$ (but slightly outside $\ell$ to avoid non-proper intersections).
This effectively removes $\ell$ from $Q'_i$.
It is shown~\cite[Lemma 4.6]{NonpiercingPTAS} that the operation results in a set of regions that are also non-piercing.
Repeating the process results in a set $\mathcal Q''$ of pairwise disjoint regions that are subsets of the original regions $\mathcal Q'$ and still contain the points $X$.

Consider any hole $H$ of $P$, i.e., $H$ is a bounded connected component in the complement of $P$.
Recall that in the pieces $\mathcal Q'$, the hole $H$ might have been filled in some of them, so $H$ might also be filled in some of the pieces $\mathcal Q''$.
In our modifications from $Q$ to $Q''$, we have only ever made modifications along curves that are disjoint from $H$.
Hence, for each piece $Q''_i\in\mathcal Q''$, we either have $H\subset Q''_i$ ($H$ is filled) or $H\cap Q''_i=\emptyset$ ($H$ is still a hole of $Q''_i$ or outside the outer boundary).
We can remove $H$ from all pieces where it is filled, thus reintroducing $H$ as a hole, and this will not break connectivity.
We do so for all holes of $P$.
We end up with regions $\mathcal Q'''$ that are pairwise disjoint and all contained in $P$.
Furthermore, the regions $\mathcal Q'''$ still cover $P$ in the sense that each region in the overlay of the original cover $\mathcal Q$ is contained in a region in the overlay of $\mathcal Q'''$, except that the boundary might be slightly perturbed.
We can thus create pieces $\mathcal Q''''$ analogous to $\mathcal Q'''$ using the original boundaries, and these pieces will cover all of $P$ and still be pairwise interior-disjoint.
We therefore obtain a small partition of $P$.
\end{proof}

\section{Hardness for \SmallCover\ in 3 dimensions}\label{sec:hardness3D}

In this section, we give a gap-preserving reduction from \SetCover\  to $3$-dimensional small cover. The best-known inapproximability result for \SetCover\  is the following:

\begin{theorem}(Dinur and Steurer \cite{SetCoverInapprox})\label{thm:setcoverinnaprox}
For any $\alpha>0$, it is NP-hard to approximate an instance of \SetCover\ of size $n$ to within a factor of $(1-\alpha)\ln(n)$.
\end{theorem}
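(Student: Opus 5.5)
This statement is imported from the literature, so the paper does not prove it. If I had to reconstruct it, I would follow the standard Feige-style route, with the improved label-cover hardness of Dinur and Steurer supplying the missing ingredient. The plan is to build a gap-preserving reduction from a \emph{projection game} (Label Cover) with very small soundness to \SetCover, using \emph{partition systems} to turn label consistency into covering. The resulting gap is about $(1-\alpha)\ln N$, where $N$ is the universe size. The instance size $n$ will be polynomial in $N$ and comparable to it, so the loss is absorbed into $\alpha$.

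\textbf{Step 1: label-cover hardness.} I would start from the PCP theorem, giving a projection game with constant soundness and constant alphabet. Then I would apply parallel repetition to get soundness $\delta$ with alphabet size $2^{O(\log(1/\delta))}$, while keeping instance size $n^{O(\log(1/\delta))}$ for $\delta$ down to $1/\mathrm{poly}\log$ or smaller. I would use the \emph{analytical} parallel repetition of Dinur and Steurer, which gives soundness decaying as $(\text{val})^{\Omega(k)}$ for projection games. This step is essential because the label-cover instance size must stay polynomial in the final \SetCover\ instance, so that $\ln(\text{universe})$ is not eaten by repetition blow-up.

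\textbf{Step 2: the reduction.} For each left vertex $u$ with label set $[L]$ and each right vertex $v$ with label set $[R]$, take a partition system on a ground set $M$ with $R$ partitions, one per label of $v$. Each partition splits $M$ into $d$ parts, so that covering $M$ with fewer than roughly $(1-\alpha)d\ln|M|$ parts forces using whole partitions. The sets of the instance are indexed by (vertex, label) pairs. A label $a$ of $u$ contributes, for each edge $(u,v)$, the part of partition $\pi_{uv}(a)$ designated for $u$. Completeness: a satisfying labeling gives a cover with one set per vertex. Soundness: a cover of size $(1-\alpha)\ln|M|$ times that yields, by averaging, many vertices with few labels. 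If the cover of an edge's universe does not contain a complete partition, the partition-system property forces the number of sets to be large. So on a noticeable fraction of edges two chosen labels project consistently, and random decoding from the chosen label lists satisfies more than $\delta$ of the constraints, a contradiction.

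\textbf{Step 3: parameters.} I would choose $d$, $|M|$ and $\delta$ so that the decoding probability, roughly $1/(\text{labels per vertex})^2$, exceeds $\delta$. I would then check that $\ln(\text{universe size})=(1-o(1))\ln n$. The partition systems need $|M|\approx d^{\,\Theta(\dots)}$ and can be built deterministically in time polynomial in $|M|$ and $R$, for example via small-bias or $k$-wise independent constructions. This is what requires $R=\mathrm{poly}(n)$, and hence the polynomial-size strong label-cover hardness from Step 1.

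The main obstacle is Step 1, namely projection games with soundness $1/\mathrm{poly}$-ish and polynomial size under plain NP-hardness rather than a quasi-polynomial reduction. Feige obtained only $n^{O(\log\log n)}$-time hardness because ordinary parallel repetition needed too many rounds. I would try to get the soundness bound $\text{val}(G^{\otimes k})\le(2\sqrt{\text{val}(G)}/(1+\text{val}(G)))^{k}$ for projection games from a spectral/norm argument on the collision value. I would first take $G$ with value close to $0$, obtained by a preliminary low-error PCP composition, so that only $k=O(\log n/\log\log n)$-ish rounds are needed and the instance size stays polynomial.
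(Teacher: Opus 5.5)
The paper does not prove this theorem; it cites Dinur and Steurer, so the comparison is with the source. Your architecture matches the actual route: a Feige/Moshkovitz-style reduction from projection games to \SetCover\ via partition systems, fed by the Dinur--Steurer repetition bound. However, the step you yourself call the crux is parameterized so that it fails.

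\textbf{Step~1 (rounds and target soundness).} The $k$-fold repetition $G^{\otimes k}$ has $|E(G)|^k$ edges, so it is polynomial only for $k=O(1)$. Your Step~1 size $n^{O(\log(1/\delta))}$ is Feige's $n^{O(\log\log n)}$ when $\delta=1/\mathrm{poly}\log n$. Your proposed fix with $k=O(\log n/\log\log n)$ rounds is worse, $n^{\Theta(\log n/\log\log n)}$.

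You are pushed there by aiming for soundness ``$1/\mathrm{poly}$-ish''. Polynomial-size projection games with polynomially small error are a form of the sliding-scale/projection games conjecture, which is open and is not what Dinur and Steurer prove. Your own Step~3 shows what is actually needed: lists have size $\ell=O(\ln|M|/\alpha)=O_\alpha(\log n)$, so soundness $(\log n)^{-c}$ for a constant $c=c(\alpha)$ suffices.

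The missing ingredient is the starting game. It is the Moshkovitz--Raz projection PCP, with soundness $\eps_0=(\log n)^{-\beta}$ for a small constant $\beta>0$, size $n^{1+o(1)}$, and alphabet $2^{\mathrm{poly}(1/\eps_0)}$, which is subpolynomial in this range. The Dinur--Steurer bound is $\mathrm{val}(G^{\otimes k})\le\bigl(2\sqrt{v}/(1+v)\bigr)^{k/2}\approx(4v)^{k/4}$ (exponent $k/2$, not $k$). What matters is that it is $v^{\Omega(k)}$ and alphabet-independent, so a \emph{constant} number $k=O(c/\beta)$ of rounds takes $\eps_0$ to $(\log n)^{-c}$ at polynomial cost. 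Raz's and Rao's bounds cannot do this. They give only geometric decay with a constant base, and Raz's also loses $\log|\Sigma|=\mathrm{poly}\log n$ in the exponent, so they need $k=\Omega(\log\log n)$.

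\textbf{Step~2 (partition property and decoding).} You claim that fewer than $(1-\alpha)d\ln|M|$ covering parts force a complete partition. No partition system with $d\ge 2$ parts per partition has this property. Greedily take, from successive partitions, the $d-1$ parts covering most of what remains. Each round shrinks the uncovered set by a factor of at least $d$, so you get a cover by about $(d-1)\log_d|M|$ parts containing no complete partition. Since $\frac{d-1}{d\ln d}<1-\alpha$ for small $\alpha$, this is below $(1-\alpha)d\ln|M|$.

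What partition systems actually give (Feige's property) is two parts from the same partition. That means two neighbours $u_1,u_2$ of $v$ whose lists project to a common label. Decoding from one such pair succeeds only with probability about $1/(d^2\ell^2)$, where $d$ is the right degree, because you designate one part per neighbour. This degree is polynomially large after repetition, so your estimate $1/\ell^2$ does not follow. You must first pass to a game in which each right vertex sees only $d=O(1/\alpha)$ left vertices. For example, make each right vertex a $d$-tuple of neighbours of an original right vertex, as in Feige's multi-prover systems. This agreement-soundness step is part of Moshkovitz's reduction, which Dinur and Steurer combine with their repetition theorem.
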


Our results show that \SmallCover\ in 3 dimensions is no easier to approximate than \SetCover. 

\begin{theorem}
Let $I$ be an instance of \SetCover\ where the universe is $U=\{1, \dots, n\}$ and the sets are $\mathcal{S}=\{S_1, \dots, S_{m}\}$. Then we can construct in polynomial time a simple polyhedron $P\subseteq \mathbb{R}^3$ with $\mathcal{O}\left(n+\sum_i |S_{i}|\right)$ vertices such that $P$ has a unit cube cover of size $k+5$ if and only if $U$ can be covered by $k$ of the sets in $\mathcal{S}$. The vertices of $P$ have bit-complexity at most $\mathcal{O}(\log(nm))$ and $P$ is contained in a box of size $1\times 2\times 3$.
\end{theorem}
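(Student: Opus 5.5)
We will design a gadget-based reduction in which each set $S_i$ corresponds to one possible placement of a unit cube, and each element $j\in U$ corresponds to a thin ``spike'' of $P$ that can only be fully covered by the cube placement of some set $S_i\ni j$. The box of size $1\times 2\times 3$ suggests the following layout.

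\textbf{The core.} A core slab occupies essentially a $1\times 1\times 1$ region (or a short sequence of such regions) and requires a fixed number of pieces, namely the $5$ in the statement. These $5$ pieces will be forced: there are five \emph{anchor} features (thin far-apart protrusions near corners of the $1\times2\times3$ box) whose pairwise distances exceed $1$ in some coordinate, so no two anchors can share a cube. Each anchor's cube can be chosen canonically to also cover the bulk of the body.

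\textbf{The set cubes.} The set cubes live in one layer, say a $1\times 1\times 1$ subregion, where the $z$-extent equals exactly $1$, so every covering cube is determined by its $(x,y)$ translation. This works like the 2D comb of Figure~\ref{fig:comb}, but uses the third dimension to escape planarity. We pick $m$ candidate offsets $(x_i,y_i)$, spaced at rational points of bit-complexity $O(\log(nm))$ in a tiny range $[0,\delta]^2$. For each element $j$ we build a thin tentacle that weaves through the layer, with features at the walls placed so that, for each $i$ with $j\in S_i$, some portion of the tentacle lies inside the window of cube $i$ and is connected within it to a common ``hub''. If $j\notin S_i$, the intersection of cube $i$'s window with $P$ must have the tentacle tip disconnected from the hub component. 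This is exactly where connectivity matters: a cube at offset $i$ clips tentacle $j$ at a point that determines whether the tip's component reaches the hub.

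The cleanest way to do this is to give tentacle $j$ a zig-zag path whose bends lie on the faces $x=x_i$ or $x=x_i+1$, for the $i$ with $j\in S_i$. Using $O(|S_i|)$ bends per set and $O(1)$ vertices per element gives $O(n+\sum_i|S_i|)$ vertices overall. The key requirement is that the whole tip of tentacle $j$ be covered by a single complete piece that is also connected to the hub. This forces that piece to come from a window with offset in a small neighborhood of some $(x_i,y_i)$ with $j\in S_i$.

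\textbf{Correctness.} The forward direction is immediate: a set cover of size $k$ gives $k$ cubes at the chosen offsets plus the $5$ anchor cubes. For the converse, given a cover with $k+5$ pieces, we first argue that $5$ pieces are consumed by the anchors and cannot reach the tentacle tips. Each remaining piece covers tips only for elements of a single $S_i$, obtained by rounding its offset to the nearest candidate, since tentacle geometry makes the feasible-offset regions for the various $i$ separated. We then read off $k$ sets covering $U$.

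\textbf{Main obstacle.} I expect the hard step to be making the feasible offset regions exactly the sets, i.e.\ ruling out ``intermediate'' cube positions that capture the tips of elements from the union of two sets. The plan is to give each element tip two independent constraints, one in $x$ and one in $y$, so that the allowed offset region for tip $j$ is a union of small squares around the $(x_i,y_i)$ with $j\in S_i$. Candidate points are placed in general position on a grid with spacing much larger than the square size, so each cube lands in at most one square. Finally, simplicity (genus $0$) is maintained by making tentacles non-linking trees attached to the hub.
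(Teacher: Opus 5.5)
Your architecture (five forced anchor pieces, plus one cube per chosen set whose position is pinned to a set-specific window) matches the paper's. However, the gadget that does the pinning is never built, and the devices you do name cannot build it.

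\textbf{The gap.} Your converse rests on the claim that the set $F_j$ of cube offsets for which the piece containing tip $j$ reaches the hub is a union of well-separated small squares, one around each $(x_i,y_i)$ with $j\in S_i$. A single zig-zag path, or any thin tree attached to the hub at one place, gives exactly one tip-to-hub route $\pi$. The window $[a,a+1]\times[b,b+1]$ contains $\pi$ iff $\max_x\pi-1\le a\le\min_x\pi$ and $\max_y\pi-1\le b\le\min_y\pi$. So $F_j$ is one axis-parallel rectangle, hence convex. Placing bends on faces of several candidate windows only adds half-plane constraints and can only shrink it.

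This is fatal, not cosmetic. With star-shaped tentacles around a hub point, the $F_j$ are axis-parallel rectangles, which have Helly number $2$. Take $S_1=\{1,2\}$, $S_2=\{2,3\}$, $S_3=\{1,3\}$. Your forward direction needs $F_1\cap F_2$, $F_2\cap F_3$ and $F_1\cap F_3$ all nonempty. Helly then forces a common offset, so one cube captures all three tips. That gives a cover with $1+5$ pieces, although no single set covers $U$.

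Making $F_j$ disconnected requires several genuinely different routes from tip $j$ to the hub. If these routes are thin tentacles, they close cycles with the hub, and the thickened $P$ acquires handles. Your closing remark about ``non-linking trees'' does not resolve this, because linking is beside the point. A tree attached at one point gives a convex $F_j$; a tree attached at several points gives positive genus.

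\textbf{How the paper supplies the missing idea.}
\begin{itemize}
\item Each incidence (element $i$, set $S_j\ni i$) gets its own flat hook plate at height $\frac{mi+j}{mn}$. The $\sum_j|S_j|$ alternative routes therefore live in distinct layers and never cross.
\item The hook of $S_j$ is shaped so that any path from $v_i$ through it to the walls $W_1$ or $W_2$ spans a $y$-interval of length just below the cube side, at a $j$-dependent position. This pins the $y$-coordinate of any cube containing such a path into one of $m$ pairwise disjoint windows. A single pinned coordinate suffices, so your second constraint is unnecessary.
\item Simplicity survives because each hook plate is a sheet that deformation-retracts onto the contractible wall complex $W_1\cup\dots\cup W_6$.
\item The walls $W_3$ through $W_5$ are too far from every $v_i$ to lie in a cube containing $v_i$. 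Hence inside one cube, a path between $v_i$ and $v_\ell$ must reach $W_1$ or $W_2$ through hook plates. The $y$-pinning forces all plates used to belong to the same $S_j$, so $i,\ell\in S_j$.
\end{itemize}
In other words, the alternative routes must be sheets whose mutual homotopy passes through a region no relevant cube can reach. Your anchor argument and forward direction are fine in spirit. Without such a gadget, though, the ``rounding to the nearest candidate'' step has nothing to stand on.
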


\begin{proof}
For simplicity, we will scale the coordinates by a factor of $8$. So the polyhedron $P$ will have a cover with pieces that fit inside $8\times 8\times 8$-cubes. We will construct $P$ from a union of flat plane pieces, but it can easily be thickened to obtain a non-degenerate polyhedron. 

To construct $P$, we start by making layers of \emph{hook} pieces. There is one type of hook piece for each of the $m$ sets in $\mathcal{S}$. The hook piece for $S_j$ is shown in \Cref{fig:hookpiece}. For each $i\in S_j$, we place a copy of the corresponding hook piece at $z$-coordinate $\frac{mi+j}{mn}$. 

\begin{figure}[htb]
	\centering   
    \begin{subfigure}[t]{.47\textwidth}
        \centering  
        \includegraphics[page=1]{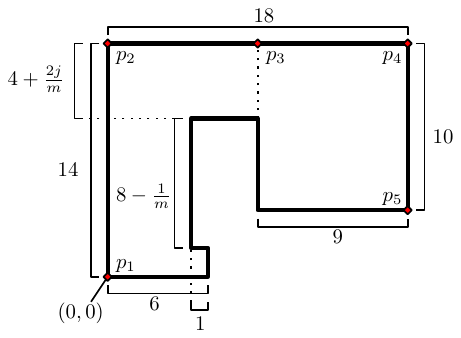}
	\subcaption{}
	\label{fig:hookpiece}
    \end{subfigure}\hfill
    \begin{subfigure}[t]{.47\textwidth}
        \centering   
	\includegraphics[page=2]{figures/SimplePolyhedraHardness-1.pdf}
	\subcaption{}
	\label{fig:wallpieces}
    \end{subfigure}
	\caption{
    (a) Specification of a hook piece. The  $(x, y)$-positions of the points $p_1$ through $p_5$ are also shown.
    (b) The segments marked show the wall pieces that connect the hook pieces together. Segments $W_1$ through $W_5$ extend to cover the height of the entire figure, while segment $W_6$ is extended over $n$ disjoint intervals. The outline of a hook piece is shown in green.
    }
\end{figure}

The hook pieces are connected by a wall, which is formed by the segments marked $W_1$ through $W_5$ in \Cref{fig:wallpieces} extruded to cover the $z$-coordinates $[\frac{m+1}{mn}, \frac{(m+1)n}{mn}]$. Additionally, the segment marked $W_6$ is extruded to cover the $z$ coordinates $[\frac{mi+1}{mn}, \frac{(m+1)i}{mn}]$ for each $i\in U$. The entire construction is illustrated in \Cref{fig:3dpolyhedron}.

\begin{figure}[htb]
	\centering   
	\includegraphics[page=3]{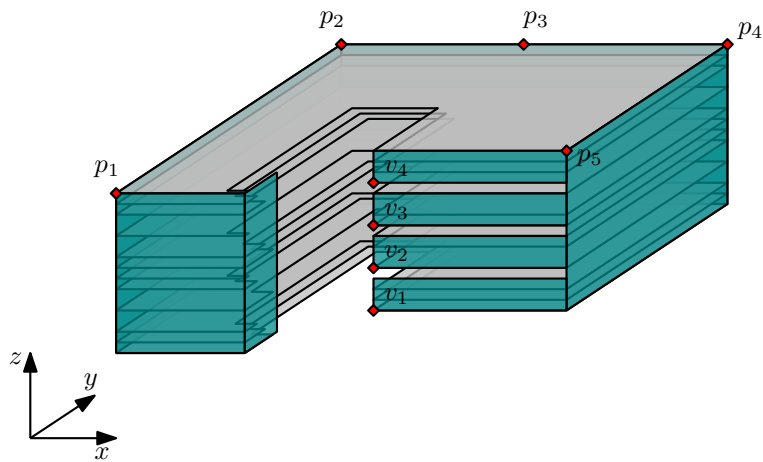}
	\caption{A rendering of the polyhedron $P$. The specific instance shown corresponds to $n=4$, $m=4$, $S_1=\{1, 2, 3\}$, $S_2=\{1, 3, 4\}$, $S_3=\{1, 2, 4\}$, and $S_4=\{2, 3, 4\}$. The hook pieces are shown in gray and the wall pieces are shown in green. The vertical scale is exaggerated for clarity.}
	\label{fig:3dpolyhedron}
\end{figure}

Suppose that $U$ can be covered by $k$ of the sets in $\mathcal{S}$. We want to show that $P$ can be covered with $k+5$ pieces that fit inside an $8\times 8\times 8$-unit cube. First note that the $5$ cubes shown in \Cref{fig:5cubes} are sufficient to cover most of $P$. 

\begin{figure}[htb]
	\centering   
    
	\includegraphics[page=4]{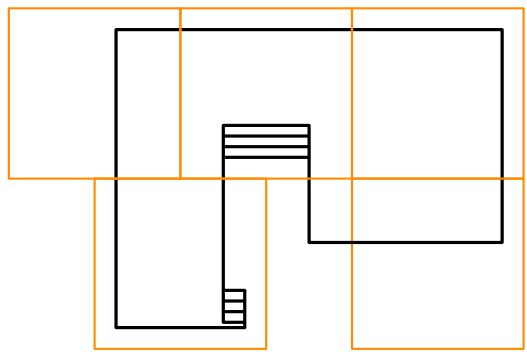}
	\caption{These $5$ cubes cover most of $P$.}
	\label{fig:5cubes}
\end{figure}

The intersection of $P$ with each of these cubes is connected. There are $n$ connected chunks remaining uncovered, corresponding to the $n$ elements of $U$. Let $Q_i$ be the connected part of the uncovered region containing $v_i$. For each $S_j$, there is a connected piece containing $\bigcup_{i\in S_j}Q_i$ that fits inside the cube $[4, 12]\times [2-\frac{4j-1}{2m}, 10-\frac{4j-1}{2m}]\times [0, 8]$, as illustrated in \Cref{fig:hookpath,fig:hookcomponent}. So if $U$ can be covered by $k$ of the $S_j$, then $P$ can be covered by $k+5$ pieces.

\begin{figure}[htb]
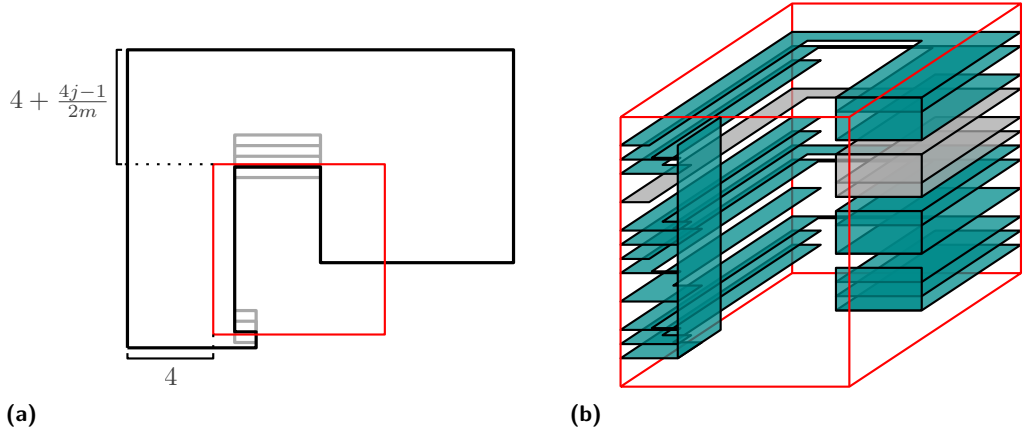

	\centering   
    \begin{subfigure}[t]{.47\textwidth}
       \centering
       \includegraphics[page=5]{figures/SimplePolyhedraHardness-1.pdf}
	\subcaption{}
	\label{fig:hookpath}
    \end{subfigure}\hfill
	\begin{subfigure}[t]{.47\textwidth}
    \centering   
	\includegraphics[page=6]{figures/SimplePolyhedraHardness-1.pdf}
	\subcaption{}
    \label{fig:hookcomponent}
    \end{subfigure}
    \caption{
        (a) The intersection of the cube $[4, 12]\times [2-\frac{4j-1}{2m}, 10-\frac{4j-1}{2m}]\times [0, 8]$ with a hook piece corresponding to $S_j$ connects some of the $Q_i$.
        (b) A 3-dimensional view of the intersection of $P$ with the cube $[4, 12]\times [2-\frac{4j-1}{2m}, 10-\frac{4j-1}{2m}]\times [0, 8]$. Shown is the example from \Cref{fig:3dpolyhedron} with $j=3$. The green part is connected and contains $Q_1$, $Q_2$ and $Q_4$. The gray part contains $Q_3$, but cannot be connected to the green part without leaving the cube.
    }
\end{figure}

Now suppose that $P$ can be covered by $k+5$ pieces. We first designate five points $p_1,\ldots,p_5$ (as shown in \Cref{fig:hookpiece,fig:3dpolyhedron}). No $8\times 8\times 8$ cube can contain more than $1$ of these points, so there are at least five pieces in the covering, one for each of these points. These pieces also cannot contain any of the points $v_1,\ldots,v_n$, so the remaining pieces must cover these points.

Suppose that some piece $Q$ in the cover contains $\{v_i : i\in S\}$ for some set $S\subseteq U$. We want to show that $S\subseteq S_j$ for some $j$. By assumption, each $i\in U$ is in some $S_j$, so WLOG suppose $|S|\ge 2$. Let $i, \ell\in S$ with $i\ne \ell$. So $Q$ contains a path $\gamma$ from $v_i$ to $v_\ell$. No part of the walls $W_3$ through $W_5$ is close enough to $v_i$ to be contained in $Q$, so $\gamma$ must connect through $W_1$ or $W_2$.

In order to get from $v_i$ to $W_1$ or $W_2$, $\gamma$ must go through some hook piece, say this hook piece corresponds to $S_j$. Such a connection contains $y$ coordinates covering the interval $[2-\frac{2j}{m}, 10-\frac{2j-1}{m}]$ (see \Cref{fig:hookpath,fig:hookcomponent}). Since $Q$ is contained in an $8\times 8\times 8$-unit cube, it cannot contain points with $y$-coordinates covering $[2-\frac{2j'}{m}, 10-\frac{2j'-1}{m}]$ for any $j'\ne j$. So any $v_\ell$ contained in $Q$ is connected to $v_i$ through a hook piece corresponding to $S_j$. This is only possible if $\ell\in S_j$. So $S\subseteq S_j$.

Thus, $U$ can be covered by $k$ of the $S_j$ if and only if $P$ can be covered by $k+5$ connected pieces each fitting inside an $8\times 8\times 8$-unit cube.

If we thicken $P$ slightly, we obtain a simple (topologically ball-shaped) polyhedron. To see this, we observe that $P$ is contractable. Indeed, each hook piece (strongly) deformation retracts onto walls $W_1$ through $W_6$. So all of the hook pieces deformation retract onto the wall pieces. Since the union of the wall pieces is contractable, $P$ is contractable.
\end{proof}

We note that it is possible to trim the cover of size $k+5$ described in the proof in order to obtain a partition, so we get hardness for both \SmallCover\ and \SmallPartition.
In fact, it is easy to see that the problems are equivalent in $3$ and higher dimensions:
If $Q_1,\ldots,Q_q$ form a cover for a polyhedron $P$, we can define $Q'_i=Q_i\setminus \bigcup_{j=1}^{i-1} Q_j$ for $i=1,\ldots,q$.
Then the pieces $Q'_i$ are pairwise disjoint and also cover $P$, but they may be disconnected.
However, if $Q'_i$ is disconnected, we can connect the parts by thin tunnels, staying in the original piece $Q_i$ so we don't violate the size constraint.
These tunnels are removed from the other pieces $Q'_j$, and choosing them sufficiently thin, they will not disconnect any piece.

\begin{corollary}
In 3 dimensions, for any $\alpha>0$, it is NP-hard to approximate \SmallCover\ and \SmallPartition\ better than $(1-\alpha)\ln(n)$, where $n$ is the number of bits needed to describe the input polyhedron.
This holds even when the polyhedron is simple and contained in a box of size $1\times 2\times 3$.
\end{corollary}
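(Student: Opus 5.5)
The corollary follows by composing the reduction of the preceding theorem with \Cref{thm:setcoverinnaprox}, and then transferring the result to \SmallPartition\ via the cover-to-partition argument given just above. The reduction was stated with $8\times8\times8$ cubes after scaling by $8$; undoing the scaling gives unit cubes and a bounding box of size $1\times 2\times 3$ as claimed. The polyhedron is simple (contractible and thickened to a ball). What remains is bookkeeping: relate the size parameter of the \SetCover\ instance to the bit size $N$ of the polyhedron, and absorb the additive $+5$.

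Start from a \SetCover\ instance $I$ with universe size $n$ and $m$ sets, on which it is NP-hard to approximate within $(1-\alpha')\ln n$ for a suitably small $\alpha'<\alpha$. We may assume $m\le \poly(n)$; the Dinur--Steurer hard instances have this property. The constructed $P$ has $\mathcal O(n+\sum_i|S_i|)=\mathcal O(nm)$ vertices, each with coordinates of bit-complexity $\mathcal O(\log(nm))$. Hence its description size is $N=\mathcal O(nm\log(nm))=\poly(n)$, and $\ln N = \mathcal O(\ln n)$. The catch is that this only gives $\ln N\le c\ln n$ for a constant $c$, which loses a constant factor in the ratio rather than the factor $(1-\alpha)$.

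This size blowup is the main obstacle. My first approach is to pad: apply \Cref{thm:setcoverinnaprox} with the size measure of the \SetCover\ instance taken as its total description length $n_I\ge nm$, which is the natural ``size'' of the instance. Then $\ln N\le (1+o(1))\ln n_I$, because the extra factor from the bit-complexity contributes only $\ln\log(nm)=o(\ln n_I)$. If the theorem is read with $n$ being the universe size instead, I would take a different route. I would invoke the known form of the result that is hard for $m\le n^{1+o(1)}$, or else amplify by taking disjoint copies and product-type constructions. Either way the goal is $\ln N=(1+o(1))\ln n$, and this is the step I would have to check most carefully.

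Next comes the additive constant. Suppose a polynomial-time algorithm produced a cover of size at most $\rho\cdot\OPT_P$ with $\rho=(1-\alpha)\ln N$, where $\OPT_P=k^*+5$ and $k^*$ is the optimal set cover size. By the iff in the theorem, any cover of size $k+5$ yields a set cover of size $k$, so we would get a set cover of size at most $\rho(k^*+5)-5$. For this to be small enough, $k^*$ must be large compared to $\ln N$ times a constant. We ensure this by replicating $I$ into $t$ disjoint copies on disjoint universes, taking $t=\poly(n)$. This multiplies $k^*$ by $t$, keeps $\ln$ of the size within a $1+o(1)$ factor when $t$ is polynomial, and preserves the hardness ratio. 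Once $k^*\ge 5/\alpha''$ the additive $5$ is absorbed into the slack between $\alpha'$ and $\alpha$, and hardness for \SmallCover\ follows.

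Finally, for \SmallPartition, the optimal cover and partition sizes coincide in 3D by the tunnel argument above: one direction is trivial, and for the other we make $Q_i'=Q_i\setminus\bigcup_{j<i}Q_j$ connected with thin tunnels inside $Q_i$. So the same gap holds for \SmallPartition.
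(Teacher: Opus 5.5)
Your route is the paper's: the paper treats the corollary as an immediate consequence of composing the reduction with the Dinur--Steurer theorem, which it quotes in terms of the \emph{size} of the \SetCover\ instance, together with the 3D cover/partition equivalence. Your first estimate, $\ln N\le(1+o(1))\ln n_I$, is exactly the bookkeeping needed. Two of the extra steps you supply are, however, wrong as written.

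\textbf{The additive $+5$.} You only need $k^*$ to exceed a constant. Your requirement $\rho(k^*+5)-5\le(1-\alpha')k^*\ln n_I$ is equivalent to $5(\rho-1)\le k^*\bigl((1-\alpha')\ln n_I-\rho\bigr)$. Since $\rho=(1-\alpha)\ln N\le(1-\alpha+o(1))\ln n_I$, both $\rho-1$ and the bracket are $\Theta(\ln n_I)$ when $\alpha'<\alpha$. So $k^*\ge C_{\alpha,\alpha'}$ suffices, and $k^*$ need not be large compared to $\ln N$.

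More importantly, replicating $I$ into $t=\poly(n)$ disjoint copies does \emph{not} keep the logarithm of the size within a $1+o(1)$ factor. The gap stays $(1-\alpha')\ln n_I$ while the size grows to about $tn_I$. For $t=n_I^{c}$ you therefore only get hardness $\frac{1-\alpha'}{1+c}\ln N$, which is precisely the constant-factor loss you set out to avoid. The fix is either of the following:
\begin{itemize}
\item take $t$ to be a constant depending on $\alpha$;
\item observe that gap instances whose threshold $k$ is $O(1)$ can be decided by brute force over all $\binom{m}{\le k}$ subfamilies in polynomial time, so you may assume $k$ is large.
\end{itemize}

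\textbf{The fallback for the universe-size reading.} This route fails as well. The polyhedron has $\Theta(n+\sum_i|S_i|)$ vertices, and $\sum_i|S_i|$ can be close to $nm$. So $m\le n^{1+o(1)}$ still permits $\ln N\approx 2\ln n$; what you would actually need is $\sum_i|S_i|\le n^{1+o(1)}$. Taking disjoint copies cannot produce this, because it keeps the gap tied to the original $\ln n$ while the size grows. This fallback is unnecessary anyway, since the version of the Dinur--Steurer theorem quoted in the paper is already stated in terms of instance size.
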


\old{
\section{A 2-approximation for \DualSmallCover}\label{sec:dual}

In the problem \DualSmallCover\ 
we are given a budget $k$ for the number of pieces, and the goal is to minimize the maximum size of a piece in a cover, where the \textit{size} of a piece $S$ is the side length of the minimum bounding axis-aligned square of $S$.
We say that the \textit{size} of a cover is the maximum size of a piece in it, so \DualSmallCover\ seeks to find a cover of minimum size that uses $k$ pieces.
We use the term \emph{$d$-cover} to denote a cover of size $d$.
Because the (primal) Small Cover is NP-hard, so is finding $\OPT$. 
We now show that a simple greedy algorithm gives a $(2+\eps)$-approximation for any desired $\eps>0$.

\begin{theorem} \label{thm:greedy_dual}
\DualSmallCover\ has a $(2+\eps)$-approximation algorithm with running time $O(\frac{nk\log k}{\eps})$.
\end{theorem}

\begin{proof}
For a point $p$ and a number $d>0$, let $B_p(d)$ be the size-$d$ axis-aligned square centered at~$p$.
For a given value $d$, we use squares of size $2d$ to create a $2d$-cover, by iteratively taking a leftmost point $p\in P$ that is on the boundary of the uncovered region, and adding the component of $B_p(2d) \cap P$ containing $p$ as a piece in our cover; each such addition takes $O(n)$ time. 
We continue until $P$ is covered or $k$ pieces have been used.
If we use $k$ pieces without obtaining a cover, we increase $d$ by a factor $1+\eps/2$ and start over with this new value of $d$.
Note that for every piece, we remove all points that can be covered together with $p$ by a piece of size $d$.
Hence, we use no more pieces than any cover of size $d$.
When we succeed in finding a $2d$-cover, we are in the situation where we did not find a $d$-cover in the previous round, so we conclude $\OPT\geq d/(1+\eps/2)$.
We likewise know $\OPT\leq 2d$, so $2d$ is a $2(1+\eps/2)=2+\eps$ approximation of $\OPT$.

If $L$ is the side of the smallest axis-aligned bounding square of $P$, then $L/k\le\OPT\le L$.
We thus start with $d=L/k$, and it suffices to double $d$ and perform the greedy covering $O(\frac{\log k}{\eps})$ times. \anders{Seems like this proof is missing a bound on the time it takes to find a given connected component of $P$ intersected with a unit square. Why does this take $O(n)$ time?}
\end{proof}

\anders{The below sounds plausible but I don't like stating it without a proof. Can we add a proof or consider removing it?}
\val{Added an argument}
Note that our argument does not use the fact that we work in the plane; only the time to identify each newly added piece (the component of $B_p(2d) \cap P$, in the proof of \Cref{thm:greedy_dual}) depends on the dimension. Thus, our solution extends to constant dimensions $D$, with the running time depending exponentially on $D$ due to intersecting components of $P$ with a hypercube.
\mikkel{It is not clear to me how the running time depends on the dimension.}
\val{I guess it depends on how $P$ is represented in D dimensions -- maybe as a simplicial complex, with indication of how faces of different dimension are incident to each other? Then what we need is to intersect P with the halfspace x>0. Perhaps we can go through the faces one by one, determine their intersection with the hyperplane x=0, and update the incidence lists... How do we intersect P with a cube in 3d?}
\mikkel{I don't think it is worth it going into details with the case of general dimension, but we can perhaps briefly write how to do it in 3D.}
\val{Maybe add a parenthetic remark like "(for $D=3$ intersecting $P$ with a cube can be done in near-linear time \cite{dobrindt1993complete}, while for larger $D$ we do not know of a faster algorithm than a brute-force $O_D(n^D)$-time construction of arrangement of facets of $P$ with facets of the unit hypercube \cite[Theorem~24.4.1]{halperin})" ?} \anders{Would you Val or someone else be willing to finish writing this? If we want to include anything about $3$ dimensions or higher, I think it's also important to have an upper bound on the running time to get approximation guarantee $2+\eps$. }

} 

\bibliography{bib}

\begin{thebibliography}{10}

\bibitem{aamand2023tiling}
Anders Aamand, Mikkel Abrahamsen, Thomas~D. Ahle, and Peter M.~R. Rasmussen.
\newblock Tiling with squares and packing dominos in polynomial time.
\newblock {\em {ACM} Trans. Algorithms}, 19(3):30:1--30:28, 2023.
\newblock \href {https://doi.org/10.1145/3597932} {\path{doi:10.1145/3597932}}.

\bibitem{DBLP:conf/focs/Abrahamsen21}
Mikkel Abrahamsen.
\newblock Covering polygons is even harder.
\newblock In {\em Symposium on Foundations of Computer Science (FOCS)}, pages
  375--386. {IEEE}, 2021.
\newblock \href {https://doi.org/10.1109/FOCS52979.2021.00045}
  {\path{doi:10.1109/FOCS52979.2021.00045}}.

\bibitem{DBLP:journals/jacm/AbrahamsenAM22}
Mikkel Abrahamsen, Anna Adamaszek, and Tillmann Miltzow.
\newblock The art gallery problem is $\exists\mathbb {R}$-complete.
\newblock {\em J. {ACM}}, 69(1):4:1--4:70, 2022.
\newblock \href {https://doi.org/10.1145/3486220} {\path{doi:10.1145/3486220}}.

\bibitem{DBLP:conf/stoc/AbrahamsenBNZ24}
Mikkel Abrahamsen, Joakim Blikstad, Andr{\'{e}} Nusser, and Hanwen Zhang.
\newblock Minimum star partitions of simple polygons in polynomial time.
\newblock In {\em Symposium on Theory of Computing, (STOC)}, pages 904--910.
  {ACM}, 2024.
\newblock \href {https://doi.org/10.1145/3618260.3649756}
  {\path{doi:10.1145/3618260.3649756}}.

\bibitem{DBLP:journals/corr/abs-2211-01359}
Mikkel Abrahamsen and Nichlas~Langhoff Rasmussen.
\newblock Partitioning a polygon into small pieces.
\newblock In {\em Symposium on Discrete Algorithms (SODA)}, pages 3562--3589.
  2025.
\newblock \href {https://doi.org/10.1137/1.9781611978322.118}
  {\path{doi:10.1137/1.9781611978322.118}}.

\bibitem{abrahamsen2026hardness}
Mikkel Abrahamsen and Jack Stade.
\newblock Hardness of packing, covering and partitioning simple polygons with
  unit squares.
\newblock {\em SIAM Journal on Computing}, pages FOCS24--29, 2026.
\newblock \href {https://doi.org/10.1137/24M171824X}
  {\path{doi:10.1137/24M171824X}}.

\bibitem{DBLP:conf/esa/ArkinD0GMPT20}
Esther~M. Arkin, Rathish Das, Jie Gao, Mayank Goswami, Joseph S.~B. Mitchell,
  Valentin Polishchuk, and Csaba~D. T{\'{o}}th.
\newblock Cutting polygons into small pieces with chords: Laser-based
  localization.
\newblock In {\em European Symposium on Algorithms ({ESA})}, pages 7:1--7:23,
  2020.
\newblock \href {https://doi.org/10.4230/LIPIcs.ESA.2020.7}
  {\path{doi:10.4230/LIPIcs.ESA.2020.7}}.

\bibitem{SeparatorLocalSearch}
Rom Aschner, Matthew~J. Katz, Gila Morgenstern, and Yelena Yuditsky.
\newblock Approximation schemes for covering and packing.
\newblock In {\em WALCOM: Algorithms and Computation}, pages 89--100, 2013.
\newblock \href {https://doi.org/10.1007/978-3-642-36065-7\_10}
  {\path{doi:10.1007/978-3-642-36065-7\_10}}.

\bibitem{DBLP:conf/compgeom/BeckerFKLMS13}
Aaron~T. Becker, S{\'{a}}ndor~P. Fekete, Alexander Kr{\"{o}}ller, SeoungKyou
  Lee, James McLurkin, and Christiane Schmidt.
\newblock Triangulating unknown environments using robot swarms.
\newblock In {\em Symposium on Computational Geometry (SoCG)}, pages 345--346,
  2013.
\newblock \href {https://doi.org/10.1145/2462356.2462360}
  {\path{doi:10.1145/2462356.2462360}}.

\bibitem{bezdek1995solution}
A.~Bezdek and K.~Bezdek.
\newblock A solution of conway's fried potato problem.
\newblock {\em Bulletin of the London Mathematical Society}, 27(5):492--496,
  1995.
\newblock \href {https://doi.org/10.1112/blms/27.5.492}
  {\path{doi:10.1112/blms/27.5.492}}.

\bibitem{bezdek1996conway}
A.~Bezdek and K.~Bezdek.
\newblock Conway's fried potato problem revisited.
\newblock {\em Arch. Math.}, 66(6):522--528, 1996.
\newblock \href {https://doi.org/10.1007/BF01268872}
  {\path{doi:10.1007/BF01268872}}.

\bibitem{borsuk1933drei}
Karol Borsuk.
\newblock Drei {S}{\"a}tze {\"u}ber die $n$-dimensionale euklidische
  {S}ph{\"a}re.
\newblock {\em Fundamenta Mathematicae}, 20(1):177--190, 1933.
\newblock \href {https://doi.org/10.4064/fm-20-1-177-190}
  {\path{doi:10.4064/fm-20-1-177-190}}.

\bibitem{buchin2021decomposing}
Maike Buchin and Leonie Selbach.
\newblock Decomposing polygons into fat components.
\newblock In {\em Canadian Conference on Computational Geometry ({CCCG})},
  pages 175--184, 2021.

\bibitem{canete2022conway}
Antonio Ca\~{n}ete, Isabel Fern{\'a}ndez, and Alberto M{\'a}rquez.
\newblock Conway's fried potato problem: a (quadratic) algorithm leading to an
  optimal division for convex polygons.
\newblock In {\em Discrete Mathematics Days (DMD)}, pages 71--76, 2022.

\bibitem{chazelle1985approximation}
Bernard Chazelle.
\newblock Approximation and decomposition of shapes.
\newblock In {\em Algorithmic and Geometric Aspects of Robotics}, volume~1 of
  {\em Advances in Robotics}, pages 145--185. 1987.

\bibitem{chazelle1985optimal}
Bernard Chazelle and David~P. Dobkin.
\newblock Optimal convex decompositions.
\newblock In {\em Computational Geometry}, volume~2 of {\em Machine
  Intelligence and Pattern Recognition}, pages 63--133. 1985.
\newblock \href {https://doi.org/10.1016/B978-0-444-87806-9.50009-8}
  {\path{doi:10.1016/B978-0-444-87806-9.50009-8}}.

\bibitem{chazelle1994decomposition}
Bernard Chazelle and Leonidas Palios.
\newblock Decomposition algorithms in geometry.
\newblock In {\em Algebraic Geometry and its applications}, pages 419--447.
  1994.
\newblock \href {https://doi.org/10.1007/978-1-4612-2628-4_27}
  {\path{doi:10.1007/978-1-4612-2628-4_27}}.

\bibitem{chen2022skeleton}
Xun Chen and Mingyong Chen.
\newblock Skeleton partition models for {3D} printing.
\newblock In {\em Second International Conference on Medical Imaging and
  Additive Manufacturing (ICMIAM 2022)}, 2022.
\newblock \href {https://doi.org/10.1117/12.2636369}
  {\path{doi:10.1117/12.2636369}}.

\bibitem{croft2012unsolved}
Hallard~T. Croft, Kenneth~J. Falconer, and Richard~K. Guy.
\newblock {\em Unsolved Problems in Geometry}.
\newblock Problem Books in Mathematics: Unsolved Problems in Intuitive
  Mathematics. 1991.
\newblock \href {https://doi.org/10.1007/978-1-4612-0963-8}
  {\path{doi:10.1007/978-1-4612-0963-8}}.

\bibitem{damian2004computing}
Mirela Damian and Sriram~V. Pemmaraju.
\newblock Computing optimal diameter-bounded polygon partitions.
\newblock {\em Algorithmica}, 40(1):1--14, 2004.
\newblock \href {https://doi.org/10.1007/s00453-004-1092-3}
  {\path{doi:10.1007/s00453-004-1092-3}}.

\bibitem{DBLP:journals/algorithmica/BergKMT23}
Mark de~Berg, S{\'{a}}ndor Kisfaludi{-}Bak, Morteza Monemizadeh, and Leonidas
  Theocharous.
\newblock Clique-based separators for geometric intersection graphs.
\newblock {\em Algorithmica}, 85(6):1652--1678, 2023.
\newblock \href {https://doi.org/10.1007/S00453-022-01041-8}
  {\path{doi:10.1007/S00453-022-01041-8}}.

\bibitem{SetCoverInapprox}
Irit Dinur and David Steurer.
\newblock Analytical approach to parallel repetition.
\newblock In {\em Symposium on Theory of Computing (STOC)}, page 624–633.
  ACM, 2014.
\newblock \href {https://doi.org/10.1145/2591796.2591884}
  {\path{doi:10.1145/2591796.2591884}}.

\bibitem{fekete2011exploring}
S{\'{a}}ndor~P. Fekete, Tom Kamphans, Alexander Kr{\"{o}}ller, Joseph S.~B.
  Mitchell, and Christiane Schmidt.
\newblock Exploring and triangulating a region by a swarm of robots.
\newblock In {\em Approximation, Randomization, and Combinatorial Optimization.
  Algorithms and Techniques ({APPROX-RANDOM})}, pages 206--217. 2011.
\newblock \href {https://doi.org/10.1007/978-3-642-22935-0\_18}
  {\path{doi:10.1007/978-3-642-22935-0\_18}}.

\bibitem{DensitySeparators}
Sariel Har-Peled and Kent Quanrud.
\newblock Approximation algorithms for polynomial-expansion and low-density
  graphs.
\newblock {\em SIAM Journal on Computing}, 46(6):1712--1744, 2017.
\newblock \href {https://doi.org/10.1137/16M1079336}
  {\path{doi:10.1137/16M1079336}}.

\bibitem{DBLP:journals/combinatorics/JenrichB14}
Thomas Jenrich and Andries~E. Brouwer.
\newblock A 64-dimensional counterexample to {B}orsuk's conjecture.
\newblock {\em Electron. J. Comb.}, 21(4):4, 2014.
\newblock \href {https://doi.org/10.37236/4069} {\path{doi:10.37236/4069}}.

\bibitem{jiang2017models}
Xiaotong Jiang, Xiaosheng Cheng, Qingjin Peng, Luming Liang, Ning Dai,
  Mingqiang Wei, and Cheng Cheng.
\newblock Models partition for {3D} printing objects using skeleton.
\newblock {\em Rapid Prototyping Journal}, 23:54--64, 2017.
\newblock \href {https://doi.org/10.1108/RPJ-07-2015-0091}
  {\path{doi:10.1108/RPJ-07-2015-0091}}.

\bibitem{kahn1993counterexample}
Jeff Kahn and Gil Kalai.
\newblock A counterexample to {B}orsuk’s conjecture.
\newblock {\em Bulletin of the American Mathematical Society}, 29(1):60--62,
  1993.
\newblock \href {https://doi.org/10.1090/S0273-0979-1993-00398-7}
  {\path{doi:10.1090/S0273-0979-1993-00398-7}}.

\bibitem{keil1999polygon}
J.~Mark Keil.
\newblock Polygon decomposition.
\newblock In J{\"o}rg-R{\"u}diger Sack and Jorge Urrutia, editors, {\em
  Handbook of computational geometry}, chapter~11, pages 491--518. 1999.
\newblock \href {https://doi.org/10.1016/B978-044482537-7/50012-7}
  {\path{doi:10.1016/B978-044482537-7/50012-7}}.

\bibitem{keil1985minimum}
J.~Mark Keil and J{\"o}rg-R. Sack.
\newblock Minimum decompositions of polygonal objects.
\newblock In {\em Computational Geometry}, volume~2 of {\em Machine
  Intelligence and Pattern Recognition}, pages 197--216. 1985.
\newblock \href {https://doi.org/10.1016/B978-0-444-87806-9.50012-8}
  {\path{doi:10.1016/B978-0-444-87806-9.50012-8}}.

\bibitem{KO2025104759}
Minseok Ko, Yeongjun Yoon, Jaeyeon Kim, Samyeon Kim, and Soonjo Kwon.
\newblock {D-ECO}mposer: Sustainable part decomposition for additive
  manufacturing using machine learning based life cycle assessment.
\newblock {\em Additive Manufacturing}, 103:104759, 2025.
\newblock \href {https://doi.org/10.1016/j.addma.2025.104759}
  {\path{doi:10.1016/j.addma.2025.104759}}.

\bibitem{DBLP:journals/tog/LuoBRM12}
Linjie Luo, Ilya Baran, Szymon Rusinkiewicz, and Wojciech Matusik.
\newblock Chopper: {P}artitioning models into {3D}-printable parts.
\newblock {\em {ACM} Trans. Graph.}, 31(6):129:1--129:9, 2012.
\newblock \href {https://doi.org/10.1145/2366145.2366148}
  {\path{doi:10.1145/2366145.2366148}}.

\bibitem{o1987art}
Joseph O'Rourke.
\newblock {\em Art Gallery Theorems and Algorithms}.
\newblock Oxford University Press, 1987.

\bibitem{o2004polygons}
Joseph O’Rourke, Subash Suri, and Csaba~D. T\'{o}th.
\newblock Polygons.
\newblock In J.~E. Goodman and J.~O'Rourke, editors, {\em Handbook of discrete
  and computational geometry}, chapter~30, pages 787--810. Third edition
  edition, 2018.
\newblock \href {https://doi.org/10.1201/9781315119601}
  {\path{doi:10.1201/9781315119601}}.

\bibitem{NonpiercingPTAS}
Aniket~Basu Roy, Sathish Govindarajan, Rajiv Raman, and Saurabh Ray.
\newblock Packing and covering with non-piercing regions.
\newblock {\em Discret. Comput. Geom.}, 60(2):471--492, 2018.
\newblock \href {https://doi.org/10.1007/S00454-018-9983-2}
  {\path{doi:10.1007/S00454-018-9983-2}}.

\bibitem{shermer1992recent}
Thomas~C. Shermer.
\newblock Recent results in art galleries.
\newblock {\em Proceedings of the IEEE}, 80(9):1384--1399, 1992.
\newblock \href {https://doi.org/10.1109/5.163407}
  {\path{doi:10.1109/5.163407}}.

\bibitem{DBLP:journals/cgf/VanekGBMCSM14}
Juraj Vanek, Jorge A.~Garcia Galicia, Bedrich Benes, Radom{\'{\i}}r Mech,
  Nathan~A. Carr, Ondrej Stava, and Gavin S.~P. Miller.
\newblock Packmerger: {A} {3D} print volume optimizer.
\newblock {\em Comput. Graph. Forum}, 33(6):322--332, 2014.
\newblock \href {https://doi.org/10.1111/CGF.12353}
  {\path{doi:10.1111/CGF.12353}}.

\bibitem{worman2003decomposing}
Chris Worman.
\newblock Decomposing polygons into diameter bounded components.
\newblock In {\em Canadian Conference on Computational Geometry (CCCG)}, pages
  103--106, 2003.
\newblock URL: \url{http://www.cccg.ca/proceedings/2003/9.pdf}.

\bibitem{YAO20241002}
Yuan Yao, Longyu Cheng, and Zhengyu Li.
\newblock A comparative review of multi-axis {3D} printing.
\newblock {\em Journal of Manufacturing Processes}, 120:1002--1022, 2024.
\newblock \href {https://doi.org/10.1016/j.jmapro.2024.04.084}
  {\path{doi:10.1016/j.jmapro.2024.04.084}}.

\end{thebibliography}

\appendix

\section{Proof of \Cref{lem:nondegenerate}}
\label{app:A}

\begin{proof}
In order to obtain properly non-piercing pieces, we modify the boundaries of the pieces in $\mathcal Q$ slightly and eliminate all degeneracies.
We need to preserve that the modified pieces contain the same subsets of points in $X$ as the original pieces.
In general, the modifications we make to get from $\mathcal Q$ to $\mathcal Q'$ will ensure that the overlay of $\mathcal Q$ and $X$ is equivalent to the overlay of $\mathcal Q'$ and $X$ in the sense that cells in the former correspond to cells in the latter, and corresponding cells contain the same point from $X$.

Let $\mathcal U=\{U_1,\ldots,U_q\}$ be a set of unit squares such that $Q_i$ is a connected component of $P\cap U_i$, and let $\mathcal U'$ be defined similarly for the pieces $\mathcal Q'$ (which will be defined later).
There are two types of degeneracies in $\mathcal Q$ that we need to get rid of:
(i) overlapping edges of different pieces that stem from overlapping edges of squares in $\mathcal U$, and (ii) overlapping edges that stem from the boundary of $P$.
We first define a value $\delta$ which is how much we can modify the boundaries without changing the overlay substantially.
To this end, let $E$ be the combined set of edges in $\mathcal Q$ and $\mathcal U$ and let $V$ be the combined set of vertices in $\mathcal Q$ and $\mathcal U$ and the points $X$.
Let $\delta>0$ be the minimum distance from an edge $e\in E$ to a point in $V$ that is not on $e$.
We will move the boundaries of the pieces $\mathcal Q$ by a distance of at most $\delta/4$, which ensure that the pieces we construct contain the same subsets of points in $X$, and that we don't collapse any cells or introduce intersection points that don't correspond to intersection points in the overlay of $\mathcal Q$.

\begin{figure}[ht]
\centering
\includegraphics[page=3]{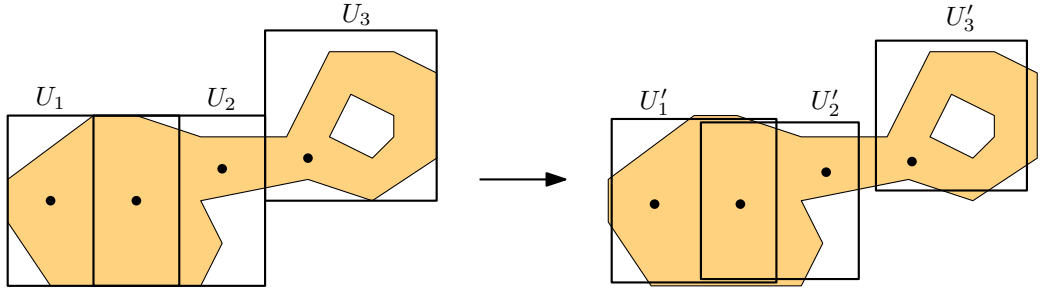}
\caption{The figure shows an example of how we shrink the squares slightly so that they all intersect properly.}
\label{fig:generalsquares}
\end{figure}

Let $U'_i$ be the axis-aligned square with side length $1-i\delta/2q$ and the same center as $U_i$; see \Cref{fig:generalsquares}.
This ensures that if squares $U_i$ and $U_j$ intersect in a non-proper way, then $U'_i$ and $U'_j$ are either disjoint or intersect in a proper way, while if $U_i$ and $U_j$ are disjoint or intersect properly, then $U'_i$ and $U'_j$ do the same.
Let $\tilde Q_i$ be the connected component of $P\cap U'_i$ that intersects $Q_i$; see \Cref{fig:generalpieces}.
Note that $\tilde Q_i$ may have holes that stem from holes of $P$, and according to the definition of non-piercing pieces, holes are not allowed.
We therefore define $\hat Q_i$ to be $\tilde Q_i$ where these holes have been filled.
Let $\hat {\mathcal Q}=\{\hat Q_1,\ldots,\hat Q_q\}$.

\begin{figure}[ht]
\centering
\includegraphics[page=4]{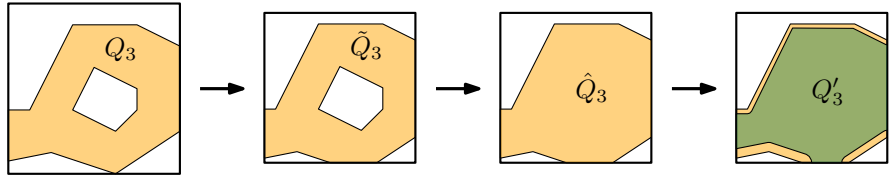}
\caption{The modifications we make to get from a piece $Q_3$ (from \Cref{fig:generalsquares}) to the piece $Q'_3$.
We shrink the defining square, we fill holes, and we offset the intervals from $\partial P$.}
\label{fig:generalpieces}
\end{figure}

In general, the boundaries of $\hat Q_i$ and $P$ share some intervals, which lead to non-proper intersections among the pieces $\hat {\mathcal Q}$.
Denote these intervals along $\partial\hat Q_i$ by $\Gamma_i$, i.e., $\Gamma_i$ contains the connected components of $\partial P\cap\partial\hat Q_i$.
Let $\Gamma=\bigcup_{1\leq i\leq q}\Gamma_i$ be the intervals of all pieces.
Let $\kappa>0$ be the minimum difference in lengths of two intervals in $\Gamma$ that have different lengths (i.e., the minimum positive gap in the sorted order of the lengths of $\Gamma$).
We get $Q'_i$ from $\hat Q_i$ by offsetting each shared interval $\gamma\in\Gamma_i$ inwards by a distance of
\[x=\frac{\delta}{4(1+\vert \gamma\vert+\kappa i/2q)},
\]
where $\vert \gamma\vert$ denotes the length of $\gamma$.
In other words, we remove the open region $\gamma\oplus D(x)$ from $\hat Q_i$, where $\oplus$ denotes the Minkowski sum and $D(x)$ is the open disk with radius $x$ centered at the origin.
Note that the offset is always $x<\delta/4$.
This ensures that the overlay of $\hat{\mathcal Q}\cup \mathcal U'\cup X$ is combinatorially equivalent to that of $\mathcal Q'\cup \mathcal U'\cup X$.
For each curve $\gamma\in\Gamma$ on the boundary of a piece $Q_i$, there is a corresponding curve $\gamma'$ on the boundary of the piece $Q'_i$.
Another important property is that we remove less along long intervals in $\Gamma$ than along short ones; this idea comes from~\cite{DBLP:journals/algorithmica/BergKMT23}.
The term $\kappa i/2q$ is used to break ties among intervals of the same length from different pieces, which ensures that we avoid introducing new improper intersections by offsetting two different intervals by exactly the same amount.
We conclude that the pieces $\mathcal Q'=\{Q'_1,\ldots,Q'_q\}$ intersect properly.
Furthermore, we have ensured that each piece $Q'_i$ contains the same points in $X$ as the original $Q_i$, and that the overlays of $\mathcal Q$ and $\mathcal Q'$ are equivalent.

It remains to verify that the pieces $\mathcal Q'$ are non-piercing.
To this end, consider two pieces $Q'_i$ and $Q'_j$, and we will verify that $Q'_i\setminus Q'_j$ is connected.
If $Q'_i\cap Q'_j=\emptyset$, the claim follows trivially as $Q'_i$ is connected, so assume that $Q'_i$ and $Q'_j$ overlap.
Then their associated squares $U'_i$ and $U'_j$ also overlap, and we define $B=U'_i\cap U'_j$ to be the rectangular intersection.
Let $\beta$ be the edges of $B$ that are contained in the edges of $U'_j$ (so $\beta$ is a curve that passes through the interior of $U'_i$).

We first consider the special case $\hat Q_i\subset B$.
Here we have $\hat Q_i\subset\hat Q_j$ (since we assume $Q'_i\cap Q'_j\neq\emptyset$).
Note that all intervals in $\Gamma_i$ are subsets of longer ones in $\Gamma_j$, so we offset more when constructing $Q'_i$ than when constructing $Q'_j$.
It thus follows that $Q'_i\subset Q'_j$, and hence that $Q'_i\setminus Q'_j=\emptyset$, which is a connected set.

On the other hand, if $\hat Q_i\setminus B\neq\emptyset$, there is also a point $s\in Q'_i\setminus B$ (since we preserve cells in the overlay of $\mathcal Q$).
Consider another point $t\in Q'_i$, and we will verify that $s$ and $t$ are connected by a path in $Q'_i\setminus Q'_j$.
We will in fact describe a path in the closure $\overline{Q'_i\setminus Q'_j}$, and this path may have intervals along $\beta$ that are strictly speaking not in $Q'_i\setminus Q'_j$, but these intervals can be moved out of $B$ by an infinitesimal amount to obtain a path in $Q'_i\setminus Q'_j$.
Let $\pi$ be a path between $s$ and $t$ in $Q'_i$.
Assume first that $t\in Q'_i\setminus B$.
Note that the intervals of $\pi$ that are not in $Q'_i\setminus Q'_j$ must be in $B$ with endpoints on $\beta$.
Consider a connected component $\pi_1$ of $\pi\cap Q'_j$, with endpoints $a,b\in \beta$.

\begin{figure}[ht]
\centering
\includegraphics[page=5,width=\textwidth]{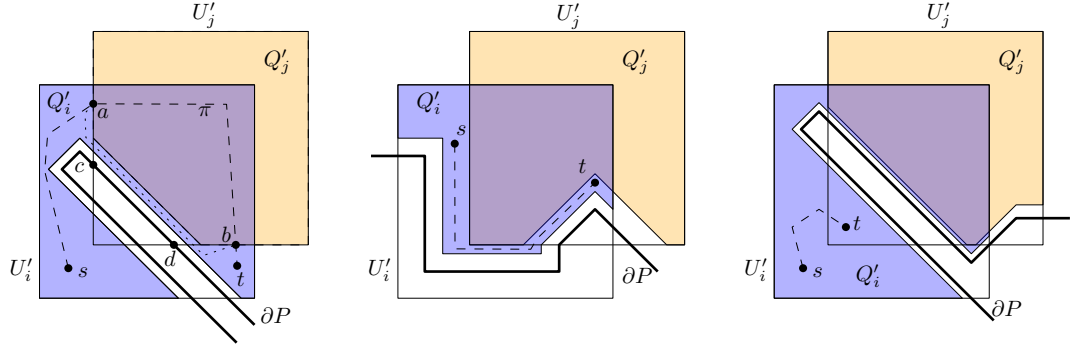}
\caption{Left: The path $\pi$ (dashed curve) connecting $s$ and $t$ passes through $Q'_j$ between $a$ and $b$.
In order to reach $t$ and stay in $Q'_i\setminus Q'_j$, we can instead pass through the corridor created between the two offset curves stemming from the interval $\partial P[c,d]$, shown as a dotted curve.
Middle: If $t\in B$ and $t\in \hat Q_j$, then $t$ must be in a corridor between two offsets of an interval of $\partial P$.
We can leave $B$ through the corridor, which reduces the case to that where $t\notin B$.
Right: If $t\in B$ but $t\notin \hat Q_j$, then we can readily reach the part of $Q'_i$ outside $B$ and thus reduce to the case where $t$ is not in $B$.}
\label{fig:corridor}
\end{figure}

We show that we can replace $\pi_1$ by another curve $\pi_2$ with the same endpoints $a,b$, so that $\pi_2$ stays in $\overline{Q'_i\setminus Q'_j}$; see \Cref{fig:corridor} (left).
We consider traversing $\beta$ from $a$ towards $b$.
If we leave $Q'_i$, it is because we cross the offset curve $\gamma'$ corresponding to some curve $\gamma\in\Gamma_i$.
Hence, there is an interval $\partial P[c,d]\subset\gamma$ with $c,d\in\beta[a,b]$ that is contained in the region enclosed by $\pi_1\cup \beta[a,b]$.
The interval $\partial P[c,d]$ is on the boundary of both $\hat Q_i$ and $\hat Q_j$.
We note that $\partial P[c,d]$ is a curve in $\Gamma_j$ which is shorter than $\gamma$, since $\gamma$ extends from both $c$ and $d$ into the interior of $U'_i$.
We therefore make a greater offset to get $Q'_j$ from $\hat Q_j$ than to get $Q'_i$ from $\hat Q_i$.
Hence, the curve $\gamma'$ is bounding $Q'_i\setminus Q'_j$, so we can follow $\gamma'$ through $B$ and get to a point on $\beta$ closer to $b$.

On the other hand, suppose that $t\in B$.
We again have two cases, namely whether $t\in \hat Q_j$ or not.
If $t\in \hat Q_j$, note that since $t\in\hat Q_i\cap\hat Q_j$, the reason that $t\notin Q'_j$ must be that $t$ is in the corridor between two offsets of intervals in $\Gamma_i$ and $\Gamma_j$, respectively, so the offset in $\hat Q_j$ removed $t$, but the offset in $\hat Q_i$ preserved $t$; see \Cref{fig:corridor} (middle).
Let the corresponding intervals in $\Gamma_i$ and $\Gamma_j$ be $\gamma_i$ and $\gamma_j$, respectively.
Then $\gamma_i$ and $\gamma_j$ overlap in $B$, and we know that $\gamma_i$ is longer than $\gamma_j$.
This is only possible if $\gamma_j$ has an endpoint on $\beta$ from which $\gamma_i$ continues into the interior of $U'_i\setminus B$.
Therefore, we can follow the offset path $\gamma'_i$ from $b$ until we leave $B$.
The proof then reduces to the case where $t$ is outside~$B$.

Finally, suppose that $t\notin \hat Q_j$ (but still $t\in B$); see \Cref{fig:corridor} (right).
Then $t$ is in another connected component of $P\cap U'_j$ than $\hat Q_j$.
Hence, we can take any path from $t$ in $Q'_i$ to get outside $B$ and then reduce to the case where $t$ is outside~$B$.
\end{proof}

\section{Proof of \Cref{thm:sparse-reduction}}\label{sec:bigpoly}

\subsection{Constructing a set of simple instances}\label{sec:simpleInstances}
In this section, we proceed to describe the set of corridors $\mathcal{C}$.
We start with the following definition of corridors, see also~\cref{fig:rhombus}.

\begin{figure}[ht]
    \centering
    \includegraphics{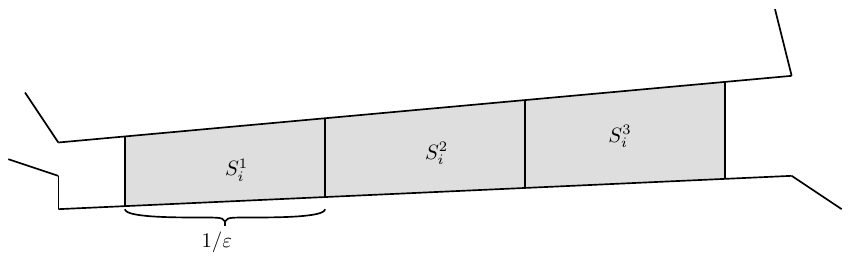}
    \caption{A corridor $C_i$ in gray with two vertical sides. The corridor $C_i$ is further partitioned into sections $S_i^j$ with two vertical sides all having length exactly $1/\varepsilon$.}
    \label{fig:rhombus}
\end{figure}

Let $C,P$ be two polygons with $C\subset P$. We say that $C$ is a \emph{horizontal corridor}, if $C$ has four vertices, two vertical edges, and its two other edges are contained in edges of~$P$. We define the \emph{length} of $C$ as the distance between its two vertical edges. We say that $C$ is \emph{maximal}, if $C$ is not strictly contained in any other horizontal corridor $C'$. (Maximal) vertical corridors are defined similarly.

We start with a few useful lemmas on maximal corridors.
\begin{lemma} \label{lem:disjointCorridors}
Let $C,C'$ be two maximal horizontal corridors. Then either $C=C'$ or $C$ and $C'$ have disjoint interiors. The same claim holds for maximal vertical corridors.
\end{lemma}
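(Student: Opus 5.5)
The plan is to show that if two maximal horizontal corridors $C$ and $C'$ have overlapping interiors, then their union generates a horizontal corridor containing both. Maximality then forces $C=C'$. The vertical case is symmetric, by exchanging the roles of $x$ and $y$.

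\emph{Setting up the corridors.} Write $C$ as the region between the vertical lines $x=a$ and $x=b$, bounded below by a segment of an edge $e_1$ of $P$ and above by a segment of an edge $e_2$ of $P$. Since $C\subset P$ has four vertices and its two non-vertical edges lie on edges of $P$, $C$ is the trapezoid $\{(x,y): a\le x\le b,\ f_1(x)\le y\le f_2(x)\}$. Here $f_1$ and $f_2$ are the linear functions whose graphs contain $e_1$ and $e_2$. Similarly, write $C'$ with bounds $a',b'$ and edges $e'_1,e'_2$.

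\emph{Both corridors use the same two edges.} Since the interiors overlap, the intervals $(a,b)$ and $(a',b')$ overlap, so there is an open interval $J$ of $x$-values where both vertical slices are nondegenerate. For $x\in J$, the slice of $C$ is a vertical segment contained in $P$ whose endpoints lie on $\partial P$, on $e_1$ and $e_2$. The interior of this segment lies in $C$. I would argue that the interior avoids $\partial P$: the interior of $C$ lies in the interior of $P$, because $C\subset P$ is a convex polygon whose only boundary contact with $\partial P$ is along its own edges. This holds at least for interior slices, and I would make it rigorous via a small-neighbourhood argument. Hence this slice is exactly the maximal vertical segment in $P$ through any of its interior points. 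The same holds for $C'$. Two such maximal segments through a common interior point coincide. So for $x\in J$ the slices agree, giving $f_1=f'_1$ and $f_2=f'_2$ on $J$, and therefore $e_1=e'_1$ and $e_2=e'_2$, since two edges of $P$ sharing a nondegenerate segment are the same edge.

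\emph{Merging.} Let $C''=\{(x,y):\min(a,a')\le x\le\max(b,b'),\ f_1(x)\le y\le f_2(x)\}$. Because the $x$-intervals overlap, $C''$ equals $C\cup C'$, so $C''\subset P$. Its top and bottom edges are contained in $e_2$ and $e_1$, because each such edge is the union of the corresponding edges of $C$ and $C'$, which are overlapping subsegments of the same edge, and an edge is convex. Hence $C''$ is a horizontal corridor with four vertices. If $C\neq C'$, then one of them, say $C$, is strictly contained in $C''$, contradicting maximality. Thus $C=C'$.

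\emph{Main obstacle.} The delicate step is the second one: justifying that a vertical slice of a corridor is the full maximal vertical chord of $P$. This is what requires the relative interior of the slice to avoid $\partial P$, and it must also handle degenerate situations, such as an edge of $P$ touching the corridor from inside, or a hole vertex lying on a slice. If the interior claim fails in such a degenerate case, I would instead argue directly. Pick a point $p$ in the common interior. Since $p$ lies in both open trapezoids, the upward vertical rays from $p$ within $C$ and within $C'$ run until they hit $e_2$ and $e'_2$ respectively. If these endpoints differed, the lower of the two would be a point of $\partial P$ lying in the interior of the other trapezoid, which is impossible since the interior of a polygon contained in $P$ lies in the interior of $P$. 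The same argument applies to the lower edges, $e_1$ and $e'_1$.
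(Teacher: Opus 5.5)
Your proposal is correct and follows the same route as the paper: overlapping interiors force the non-vertical edges of $C$ and $C'$ onto the same two edges of $P$, so $C\cup C'$ is itself a horizontal corridor and maximality gives $C=C'$; you simply fill in the details the paper calls easy to check. The step you flag as delicate is immediate, since $\mathrm{int}(C)$ is an open subset of $P$ and hence lies in $\mathrm{int}(P)$; just choose $J$ as the $x$-projection of a small neighbourhood of a common interior point, so that the two slices over $J$ actually share a point.
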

\begin{proof}
It is to check that if the interiors of $C$ and $C'$ intersect, then the non-horizontal edges of $C$ and $C'$ must be contained in the same edges of $P$. But this implies that $C\cup C'$ is also a horizontal corridor. By maximality, $C=C'$.
\end{proof}
Let $\mathcal{M}_h$ denote the set of maximal horizontal corridors of $P$. Then, by \cref{lem:disjointCorridors}, these corridors are pairwise interior disjoint. We next want to bound the size of $\mathcal{M}_h$.  We note that it is easy to bound the size of $\mathcal{M}_h$ by $O(n^2)$ which would suffice for our purposes. However, lending ideas from~\cite{aamand2023tiling}, we can give a tighter bound. 
\begin{lemma}\label{lemma:size}
If $P$ has $n$ corners, then $|\mathcal{M}_h|\leq 3n-5$. 
\end{lemma}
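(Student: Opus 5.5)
The plan is to encode the maximal horizontal corridors as edges of a planar graph whose vertices are the edges of $P$, and then apply Euler's formula.

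\textbf{The graph.} Each maximal horizontal corridor $C$ has its two non-vertical edges contained in two edges $e,e'$ of $P$. We associate with $C$ the pair $\{e,e'\}$ and form a graph $G$ whose vertex set is the set of $n$ edges of $P$ (including those on hole boundaries), with one edge $ee'$ per maximal horizontal corridor.

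\textbf{Injectivity.} First we check that distinct maximal corridors give distinct pairs. Suppose $C$ and $C'$ both span between $e$ and $e'$. Consider the $x$-projection of $e\cap\{\text{points above/below }e'\}$. If the $x$-intervals of $C$ and $C'$ overlapped or touched without an obstacle between them, then $C\cup C'$ would be a corridor, contradicting maximality, as in \cref{lem:disjointCorridors}. Since $e$ and $e'$ are segments, the set of $x$-values over which the vertical segment between them lies in $P$ is not obviously an interval: holes could interrupt it. To handle this, we allow multi-edges and bound the count by a planar-multigraph argument in which no two parallel edges bound a face of size $2$. An obstacle between two corridors on the same pair means a vertex of $P$ lies between them, so the face they enclose contains another vertex of $G$.

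\textbf{Planar embedding.} Draw each vertex $e$ as a point in the interior of edge $e$ (or slightly inside $P$). Draw each corridor edge as a curve that runs inside $P$ along $e$ to the corridor, crosses the corridor vertically through its middle, and then runs along $e'$. Corridors have disjoint interiors by \cref{lem:disjointCorridors}, so these curves can be routed without crossings, using thin strips along edges and nesting the strips in the order the corridors appear along $e$. This gives a plane multigraph in which every face is bounded by at least $3$ edges, or contains a vertex of $P$ or a hole.

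\textbf{Euler count.} For a simple polygon, Euler's formula gives $|E|\le 3n-6$. The extra $+1$ in $3n-5$ should absorb one degenerate face, or the outer-face convention when the vertices lie on a cycle. With holes, $G$ may have components tied to separate boundary cycles. I would first add the boundary cycle edges of $P$, between consecutive edges of $P$, as auxiliary edges to triangulate the count, and then subtract them. Since horizontal edges of $P$ cannot bound a horizontal corridor's non-vertical sides in a nondegenerate way, a sharper count is possible. Alternatively, I would follow the charging argument of \cite{aamand2023tiling}: charge each maximal corridor to a vertex of $P$ that stops its extension (each left/right end is blocked by a vertex of $P$ lying on the extended vertical side), and show that each vertex of $P$ is charged at most a constant number of times. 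A vertex can block at most one corridor from the left and one from the right on each side, which gives roughly $2n$ plus boundary effects.

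\textbf{Main obstacle.} I expect the main obstacle to be getting exactly $3n-5$, rather than $O(n)$, in the presence of holes. The intended resolution is to obtain a planar graph with no faces of length $\le2$ on the $n$ polygon edges and apply $|E|\le3|V|-6$, correcting by $+1$ for the single possible exceptional face.
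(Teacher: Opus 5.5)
Your plan is the paper's proof: the same multigraph on the $n$ edges of $P$ with one edge per maximal corridor, embedded through the pairwise interior-disjoint corridors so that the Jordan curve of any two parallel edges encloses a vertex, after which the paper simply invokes Lemma~5.3 of~\cite{aamand2023tiling} to get $3n-5$. The counting step you left hedged is exactly that lemma, and it closes as you intended: joining components by extra edges creates no parallel pairs, and a bounded face of length $2$ would be the empty interior of such a Jordan curve, so only the outer face can have length $2$; Euler's formula with $2|E|\ge 3(|F|-1)+2$ then yields $|E|\le 3n-5$, which makes the auxiliary boundary edges and the charging argument unnecessary.
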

\begin{proof}
Similarly to the proof of Lemma 5.4 of~\cite{aamand2023tiling}, we define a multigraph $G=(V,E)$ where $V$ is the set of  edges of $P$  and for each horizontal corridor $C\in \mathcal{M}_h$, we add an edge in $E$ between the two edges of $P$ that bound it. By the disjointness of maximal corridors guaranteed by \cref{lem:disjointCorridors}, this $n$-vertex graph has a planar embedding $\mathcal{E}$ with the property that for any two parallel edges $e,e'$ with $e\neq e'$, the Jordan curve formed by $e$ and $e'$ in $\mathcal{E}$ contains a vertex of $V$ in its interior. Lemma 5.3 of~\cite{aamand2023tiling} bounds the number of edges of such a graph by $3n-5$ when $n\geq 2$.
\end{proof}
Now define $\mathcal{M}_h'\subset \mathcal{M}_h$ to be the maximal corridors of length at least $1/\eps+4$. For a fixed $C\in \mathcal{M}_h'$, let $t\in \mathbb{N}$ be such that the length $\ell$ of $C$ satisfies $t\cdot 1/\eps+4\leq \ell< (t+1)\cdot 1/\eps+4$. Let $C_0\subset C$ be a horizontal corridor of length $t\cdot 1/\eps$ centered in the middle of $C$. We form a set $\mathcal{C}_h$ consisting of all corridors $C_0$ that can be obtained in the above fashion starting with a corridor $C\in \mathcal{M}_h'$.

Next we define $P'$ to be $\overline{P\setminus \bigcup \mathcal{C}_h}$. {Note that $P'$ could be disconnected; each component of $P'$ is a polygon.}
We have the following lemma.
\begin{lemma}\label{lemma:vertical-width}
Any connected component of $P'$ is contained in a vertical strip of width $O(n/\eps)$.
\end{lemma}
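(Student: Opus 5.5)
We will show that every connected component $K$ of $P'$ has small horizontal extent. The approach is to project onto the $x$-axis, cut the axis at the $O(n)$ $x$-coordinates of vertices of $P$, and bound the length each resulting slab can contribute to $K$.

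Let $x_1<\dots<x_N$, with $N\le n$, be the distinct $x$-coordinates of vertices of $P$. Fix an open slab $\sigma=(x_a,x_{a+1})\times\mathbb R$. Inside $\sigma$ no vertex of $P$ lies, so $P\cap\sigma$ is a disjoint union of trapezoids. Each trapezoid is bounded by two vertical segments (on the slab lines) and by two non-vertical edges of $P$ on the top and bottom, and each such trapezoid is a horizontal corridor. Every such trapezoid $T$ is contained in a maximal horizontal corridor $M(T)$, which is unique by \Cref{lem:disjointCorridors}. If $x_{a+1}-x_a > 1/\eps+4+2/\eps$, then $M(T)$ has length at least $1/\eps+4$, so $M(T)\in\mathcal M_h'$. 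The removed corridor $C_0\subset M(T)$ is centered and has length at least $\ell-1/\eps-4$. Since $T$ spans the full width of the slab, $C_0$ cuts $T$ completely.

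More precisely, the part of $T$ surviving in $P'$ consists of at most two pieces: one within distance $\tfrac12(1/\eps+4)$ of each end of $M(T)$, clipped to $T$. I need to check that this remainder near an end of $M(T)$ lies near an end of $T$. Since $T\subseteq M(T)$ and $C_0$ is centered in $M(T)$, the surviving portion of $T$ lies within horizontal distance $O(1/\eps)$ of the slab lines $x_a$ or $x_{a+1}$. Consequently, $K$ cannot cross a wide slab. Any path inside $P'$ from $x\le x_a$ to $x\ge x_{a+1}$ must stay inside $P\cap\overline\sigma$ while traversing $\sigma$, so it lies in a single trapezoid $T$, and $C_0$ blocks it because $C_0$ spans from the top edge to the bottom edge of $T$.

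It follows that the projection of $K$ onto the $x$-axis is an interval $I$. Consider the slabs that $I$ fully crosses: each of them is narrow, with width $O(1/\eps)$. The two extreme slabs that $I$ only partially enters contribute $O(1/\eps)$ each by the observation above. In total the width of $I$ is at most $(N+1)\cdot O(1/\eps)=O(n/\eps)$. The unbounded outer slabs are handled the same way, since $P$ has no points there.

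The main obstacle is the geometric step showing that $C_0$ actually separates $T$. It needs three facts. First, the top and bottom edges of $T$ are exactly the supporting edges of $M(T)$, which holds because the corridor through $T$ keeps the same bounding edges of $P$ until a vertex interrupts it, and $M(T)$ extends $T$. Second, $C_0\cap\sigma$ is a full-height vertical sub-trapezoid of $T$ whenever $\sigma$ is wide. Third, we must handle degenerate cases, such as vertical edges of $P$ lying on slab lines and components of $K$ touching $C_0$ only along boundary segments. These cases should follow from taking the closure in the definition of $P'$ and treating components as polygons.
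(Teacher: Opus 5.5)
Your argument is correct, but it takes a different route from the paper's.

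The paper slices by the vertices of the component itself. It uses \Cref{lemma:size} (at most $3n-5$ removed corridors, each adding at most $4$ vertices) to bound the number of vertices of a component of $P'$ by $13n$. It then asserts that the component contains no horizontal corridor of length at least $1/\eps+4$, deduces that consecutive vertex $x$-coordinates of the component differ by at most $1/\eps+4$, and concludes by connectivity.

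You instead slice only by the $N\le n$ vertex $x$-coordinates of $P$. You show directly that in a wide slab each trapezoid is severed by the centered $C_0$ of its maximal corridor: the two leftovers of $M(T)$ are shorter than $\frac12(1/\eps+4)$, so a trapezoid spanning a wider slab meets the interior of $C_0$ in a full vertical cross-section. Hence a component crosses only narrow slabs and protrudes $O(1/\eps)$ into the two end slabs.

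Your route is self-contained: it needs neither the corridor count of \Cref{lemma:size} nor a vertex count for $P'$. It also spells out the severing mechanism that the paper's ``clearly'' leaves implicit. The paper's route is shorter and is phrased in terms of the component's own vertices. That is the form the subsequent vertical step needs (\Cref{lemma:diameter} and $|\mathcal C_v|=O(n)$): there the new vertical edges of $P'$ can bound vertical corridors, so the vertex bound is required anyway.

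One caveat on your ``degenerate cases''. A trapezoid of $P\cap\sigma$ may be a triangle, with its top and bottom edges meeting at a vertex of $P$ on a slab line. Under the literal four-vertex definition, no maximal corridor contains it, so $M(T)$ need not exist. Taking closures does not fix this. One must allow corridors with a zero-length vertical side. The paper's own proof tacitly needs the same convention: otherwise a long thin triangle with one vertical edge has no maximal corridors, so $P'=P$ and the lemma fails.
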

\begin{proof}
Let $T$ be a connected component of $P'$. By~\Cref{lemma:size}, and the fact that removing a corridor $C\in \mathcal C_h$ from $P$ creates at most $4$ new vertices, $T$ has at most $13 n$ vertices. Moreover, clearly $T$ has no maximal horizontal corridors of length $\geq 1/\eps+4$. In particular, projecting the vertices of $P'$ onto the $x$-axis, the distance between any two consecutive points in the projection is at most $1/\eps+4$. Since $T$ is connected, the claim follows.
\end{proof}
Analogous to $\mathcal{C}_h$, construct the set $\mathcal{C}_v$ starting with maximal vertical corridors of $P'$. By~\cref{lemma:size}, $P'$ has $O(n)$ vertices, so it follows by a similar argument that $|\mathcal{C}_v|=O(n)$.
We define $\mathcal{C}=\mathcal{C}_h\cup \mathcal{C}_v$.
Finally, we define $Q=\overline{P\setminus \bigcup \mathcal{C}}$.
By a similar argument as for~\cref{lemma:vertical-width}, we can show the following lemma.
\begin{lemma}\label{lemma:diameter}
    Any connected component of $Q$ is contained in an axis-aligned square of side length $O(n/\eps)$ and in particular has diameter $O(n/\eps)$.
\end{lemma}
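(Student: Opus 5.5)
The plan is to mirror the proof of \Cref{lemma:vertical-width}, applied to the horizontal direction inside each connected component $T$ of $P'$. Let $R$ be a connected component of $Q$. Then $R\subseteq T$ for some component $T$ of $P'$, because $Q=\overline{P'\setminus\bigcup\mathcal C_v}$ and the corridors of $\mathcal C_v$ lie inside $P'$. By \Cref{lemma:vertical-width}, $T$, and hence $R$, lies in a vertical strip of width $O(n/\eps)$. It therefore remains to show that $R$ lies in a horizontal strip of height $O(n/\eps)$. Together these give an axis-aligned rectangle, and hence a square, of side $O(n/\eps)$, and the diameter bound follows.

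First I will bound the vertex count. By \Cref{lemma:size}, $P$ has at most $3n-5$ maximal horizontal corridors. Each removed corridor adds at most $4$ vertices, so $P'$ has $O(n)$ vertices. Applying \Cref{lemma:size} again to $P'$ gives $|\mathcal C_v|=O(n)$, and the same reasoning shows that $Q$, and so $R$, has $O(n)$ vertices.

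The key step is that $R$ has no maximal vertical corridor of length at least $1/\eps+4$. Take a maximal vertical corridor $C$ of $P'$ of length $\ell\ge 1/\eps+4$. Its middle part $C_0$, of length $t/\eps$, has been removed. This leaves two end pieces of length at most $(\ell-t/\eps)/2<1/\eps/2+2$, which is less than $1/\eps+4$. Any vertical corridor of $Q$ is contained in a vertical corridor of $P'$, since the edges of $Q$ bounding it lie in edges of $P'$. A vertical corridor of $Q$ inside such a $C$ lies in one of the end pieces and so is short. A vertical corridor of $Q$ lying in a short maximal corridor of $P'$ is short as well.

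Now project the vertices of $R$ onto the $y$-axis. Suppose two consecutive projected values differed by more than $1/\eps+4$. The horizontal slab between them contains no vertex of $R$, so every connected piece of $R$ in that slab is bounded by two non-horizontal edges and two horizontal lines. Such a piece is a vertical corridor of length more than $1/\eps+4$, a contradiction. Hence consecutive gaps are at most $1/\eps+4$. Since $R$ is connected, its $y$-extent is spanned by its vertices, so the height is at most $O(n)\cdot(1/\eps+4)=O(n/\eps)$.

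The main obstacle I expect is the step showing that removing $\mathcal C_v$ creates no new long vertical corridors. In particular, I need to check that a vertical corridor of $Q$ cannot use a bounding edge created by removing a corridor. The edges created are horizontal, so this should be fine. I also need to confirm that the removal of $\mathcal C_h$ does not spoil this, but that removal was already accounted for in $P'$.
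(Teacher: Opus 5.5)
Your proposal is correct and follows the paper's route. The paper only says the lemma follows by an argument similar to that of \Cref{lemma:vertical-width}, and you supply those details, including the observation that the horizontal extent of a component of $Q$ is inherited from the component of $P'$ containing it, so only the vertical direction needs the corridor argument. One cosmetic fix: for consecutive projected values $y_a<y_b$, use a slightly shrunk slab $[y_a+\delta,y_b-\delta]$, so that each piece of $R$ in it is a genuine four-vertex trapezoid rather than a triangle or several trapezoids touching at a vertex on a bounding line.
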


This concludes the description of $\mathcal{C}$. It is easy to determine $\mathcal{C}$ in $\poly(n)$ time.

\subsection{Approximating the separate instances suffices}
We now show that we can solve the instances in $\mathcal{C}$ as well as $Q$ separately without affecting the approximation guarantee too much.
The following lemma will be useful.
\begin{lemma}\label{lemma:monotonicity}
Let $S,T$ be convex polygons with $S\subset T$. Let $\OPT_S$ and $\OPT_T$ be the minimum number of small pieces needed to cover $S$ and $T$. Then $\OPT_S\leq \OPT_T$.
\end{lemma}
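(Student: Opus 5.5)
Given an optimal small cover $\mathcal T=\{T_1,\dots,T_k\}$ of $T$ with $k=\OPT_T$, intersect every piece with $S$: set $T_i\cap S$ for each $i$. These sets cover $S$, since $S\subset T=\bigcup_i T_i$, and each lies in the same axis-aligned unit square as $T_i$. The only thing that can fail is connectivity. So the plan is to show that, after replacing each $T_i$ by a complete piece (which only enlarges it and keeps it small), each $T_i\cap S$ is connected. It may also be degenerate, and degenerate or empty intersections are simply discarded or absorbed. This would give a small cover of $S$ with at most $k$ pieces, hence $\OPT_S\le\OPT_T$.

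The key step uses convexity. By the remark at the start of the section, we may take each $T_i$ to be complete, i.e.\ the closure of a component of $T^\circ\cap U_i$ for a unit square $U_i$. Since $T$ is convex, $T^\circ\cap U_i$ is an intersection of two convex sets, so it is convex and in particular connected. Hence $T_i=\overline{T^\circ\cap U_i}=T\cap U_i$ whenever this has nonempty interior. Then $T_i\cap S=S\cap U_i$, which is again an intersection of convex sets and therefore convex and connected. If $S\cap U_i$ has nonempty interior, it is already a complete small piece of $S$: it equals $\overline{S^\circ\cap U_i}$ by convexity. If it has empty interior (a segment or point), it is a degenerate set, and I drop it.

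The hardest part is justifying that dropping the degenerate intersections still leaves a cover of $S$. Let $\mathcal T'$ be the indices with $S\cap U_i$ of nonempty interior. The union of the dropped sets is a finite union of segments and points, so it is nowhere dense in $S$. The union $\bigcup_{i\in\mathcal T'}(S\cap U_i)$ is closed and contains $S$ minus a nowhere dense set. Since $S$ is a convex polygon with nonempty interior, every point of $S$ is a limit of interior points avoiding the dropped segments. Therefore the closed union equals $S$, and the pieces $\{S\cap U_i: i\in\mathcal T'\}$ form a small cover of $S$ of size at most $k$.
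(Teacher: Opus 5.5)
Your proof is correct and follows the same route as the paper. You replace each piece by the complete piece $T\cap U_i$, which is convex because $T$ is, and intersect with $S$ to obtain convex, hence connected, pieces $S\cap U_i$ covering $S$; your extra density argument for discarding intersections with empty interior correctly handles a detail that the paper's proof passes over silently.
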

\begin{proof}
Start with an optimal cover $\mathcal{C}$ of $T$. By convexity of $T$, we may assume that each piece of $\mathcal{C}$ is the intersection of a unit square and $T$.
Since intersections preserve convexity, each piece of $\mathcal{C}$ is convex.
Now by convexity of $S$, intersecting each piece of $\mathcal{C}$ with $S$ gives a collection of convex and hence connected pieces covering $S$.
The result follows.
\end{proof}
We prove that it suffices to solve the separate instances in $\mathcal{C}$ and $Q$ in two steps.
\begin{lemma}\label{lemma:separate-horizontal}
    Let $\mathcal{C}_h=\{C_1,\dots,C_{k'}\}$. Denote by $\OPT_i$ the size of an optimal cover of $C_i$ and by $\OPT_{P'}$ the size of an optimal cover of $P'$.
    Then 
    \[
    \OPT\leq \OPT_{P'} +\sum_{i=1}^{k'} \OPT_i\leq(1+O(\eps))\OPT.
    \]
\end{lemma}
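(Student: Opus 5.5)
The left inequality is immediate. The union of optimal covers of $P'$ and of each $C_i$ is a family of small pieces whose union is $P' \cup \bigcup_i C_i = P$. Every piece is connected, lies in a unit square and is contained in $P$. So this family is a small cover of $P$ of size $\OPT_{P'}+\sum_i \OPT_i$, giving $\OPT\leq \OPT_{P'}+\sum_i\OPT_i$.

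For the right inequality, I start from an optimal cover $\mathcal Q$ of $P$ by complete pieces. From it I build covers of $P'$ and of each $C_i$ whose total size is $(1+O(\eps))\OPT$. The natural move is to intersect each piece $Q\in\mathcal Q$ with $C_i$ and with $P'$. The difficulty is that $Q\cap P'$ or $Q\cap C_i$ can be disconnected, so I charge the extra pieces to the boundary between $P'$ and the corridors.

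Each $C_i$ is convex, since it is a quadrilateral with two parallel vertical sides and two sides on edges of $P$. Hence $Q\cap C_i$ is a convex set (as $Q\cap C_i = S\cap C_i$, where $S$ is the square of $Q$, when $Q$ contains the relevant part) and gives a single piece. For $P'$ I would handle it as follows. A component of $Q\cap P'$ that does not touch a cut segment is already a component of $Q$, hence equal to $Q$. Extra components arise only where $Q$ crosses one of the vertical cut segments $\sigma$ bounding some $C_i$.

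The key device I would use is a shifting argument. Each corridor $C_i$ was placed inside a maximal corridor of length $\ge t/\eps+4$, with slack on both sides. I would consider the $\Theta(1/\eps)$ possible integer shifts of the cut positions inside that slack, or equivalently shifts of the section boundaries. Each piece of $\mathcal Q$ has width at most $1$, so it meets only $O(1)$ of these candidate positions. Averaging then gives cut positions where the pieces crossing cut segments number only $O(\eps)\cdot(\text{pieces in the relevant region})$. I then add, for each crossing piece, a constant number of extra pieces, such as the intersections of its square with each side of the cut.

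Within a corridor, the cut segment $\sigma$ spans the corridor, and the corridor near $\sigma$ is convex. So for a complete piece $Q$ crossing $\sigma$, the sets $Q\cap C_i$ and $Q\cap(\text{adjacent part of }P')$ are each connected near $\sigma$. This means each crossing piece costs $O(1)$ extra pieces.

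The main obstacle is that the corridors $C_i$ are fixed, centred with length exactly $t/\eps$, so I cannot freely shift the cuts. I would instead argue directly from the geometry. Every piece of $\mathcal Q$ crossing a cut segment of $C_i$ lies within distance $1$ of that segment. Such pieces number $O(w+1)$, where $w$ is the corridor width there. Meanwhile, the covering lower bound for $C_i$ is $\OPT_i\gtrsim (\text{area}) \geq (t/\eps)\cdot w_{\min}$, and for thin corridors it is $\gtrsim t/\eps$. So the $O(w+1)$ extra pieces per cut are an $O(\eps)$ fraction of the pieces of $\mathcal Q$ meeting $C_i$.

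Summing over corridors uses that each piece of $\mathcal Q$ meets $O(1)$ corridors. This holds because the corridors are disjoint and each has length $\ge 1/\eps\geq 1$. The sum gives total overhead $O(\eps)\OPT$.

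The delicate point I expect is the corridor width varying along the corridor, which makes the width at a cut compared with the area bound non-obvious. I would handle it by using the $2$-unit buffer of the maximal corridor beyond each cut. The trapezoid shape makes the width monotone along the corridor, so the width near a cut is at most about the average width over the adjacent stretch.
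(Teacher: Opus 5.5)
Your plan matches the paper's. You pay $O(\ell_i^{(1)}+\ell_i^{(2)}+1)$ extra pieces at the two vertical sides of each $C_i$, and absorb this into the lower bound $\OPT=\Omega(\sum_i(\ell_i^{(1)}+\ell_i^{(2)}+1)/\eps)$ from area and length. The shifting detour can be deleted. The varying width is harmless, since $\mathrm{area}(C_i)=\frac12(\ell_i^{(1)}+\ell_i^{(2)})\cdot t/\eps$ and the interior-disjoint corridors have total area at most $\OPT$.

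However, the two claims carrying the argument are not established. \textbf{First}, you simply assert that the pieces of $\mathcal Q$ crossing a cut $\sigma$ of width $w$ number $O(w+1)$. For an arbitrary cover this is false, because pieces may overlap or repeat arbitrarily. It must come from optimality (which you have but never use), and this is exactly the step the paper's proof supplies. Every piece meeting $\sigma$ lies in the part $B$ of the maximal corridor within $L_\infty$-distance $1$ of $\sigma$. This is a convex region of horizontal extent $2$ and vertical extent at most $w+2$, so $B\cap C_i$ and $B\setminus C_i^\circ$ can each be covered by $O(w+1)$ convex, hence connected, small pieces. You can then replace all crossing pieces by these, as the paper effectively does. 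Alternatively, note that more than $O(w+1)$ crossing pieces would contradict optimality, since they are all contained in $B$ and could be swapped for this cover of $B$.

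\textbf{Second}, your reason that each piece meets $O(1)$ corridors (``the corridors are disjoint and each has length $\ge 1/\eps\ge 1$'') is invalid. Take a comb whose many long, thin horizontal teeth are stacked within a vertical range shorter than $1$: it has pairwise disjoint long corridors that all meet a single unit square. What you actually need is a confinement fact. The non-vertical sides of the maximal corridor $M_i\supseteq C_i$ lie on $\partial P$, and $C_i$ is at distance at least $2$ from the vertical sides of $M_i$. Hence a connected subset of $P$ inside a unit square that meets $C_i$ cannot leave $M_i$. Since maximal horizontal corridors are interior-disjoint, each piece meets the interior of only one $C_j$.

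This confinement fact makes the length part of the lower bound valid for thin corridors: at least $t/\eps$ pieces meet each $C_i$, and these sets of pieces are disjoint across $i$. You also use it implicitly in two other places: when you claim a crossing piece $Q$ splits into just two connected parts (because then $Q=S\cap M_i$ is convex), and when you place crossing pieces inside $B\subseteq M_i$. With these two repairs your argument coincides with the paper's.
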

\begin{proof}
The first inequality is clear. For the second inequality, we need to consider the boundary regions between the $C_i$ and $P'$. Fix some $C_i\in \mathcal{C}_h$. Let $\ell_i^{(1)}$ and $\ell_i^{(2)}$ be the length of the two vertical edges $e_1$ and $e_2$ of $C_i$. 

We first show that in the optimal cover of $P$, at most $O(\ell_i^{(1)}+\ell_i^{(2)}+1)$ pieces cross these vertical edges. 
To see this, let $B_1\subset C_i$ be the set of points of $\ell_1$-distance at most $1$ to the left vertical edge $e_1$ of $C_i$ (of length $\ell_i^{(1)}$). By the way we defined $C_i$ (by cutting {off} corridors of length at least $2$ from either end of a maximal corridor), any piece of the optimal solution intersecting $e_1$ is contained in $B_1$. It is easy to check that $B_1$ can be covered with {$O(\ell_i^{(1)} + 1)$} pieces, see also \cref{fig:ends}. A similar argument holds for $e_2$ and the claim follows.

\begin{figure}[ht]
    \centering
    \includegraphics[page=2]{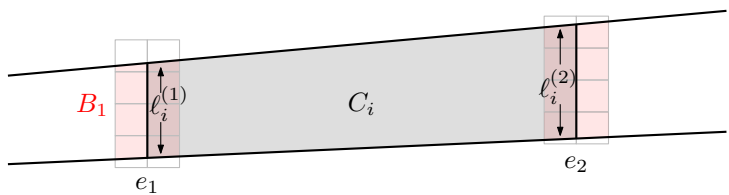}
    \caption{Illustration for the proof of \cref{lemma:separate-horizontal}: At most $O(\ell_i^{(1)}+\ell_i^{(2)}+1)$ pieces cross $e_1$ or $e_2$.}
    \label{fig:ends}
\end{figure}

This implies that from a cover of $P$ of size $\OPT$, we can obtain separate covers of the polygons in $\mathcal{C}_h$ and $P'$ by increasing the size of the cover by at most $O\left(\sum_{i=1}^{k'}(\ell_i^{(1)}+\ell_i^{(2)}+1)\right)$. Since all $C_i\in \mathcal{C}_h$ have length at least $1/\eps$, by a volume argument, an optimal cover of $P$ must use at least $\Omega(\sum_{i=1}^{k'}(\ell_i^{(1)}+\ell_i^{(2)}+1)/\eps)$ pieces. It follows that
\[
\OPT_{P'} +\sum_{i=1}^{k'} \OPT_i\leq \OPT+O\left(\sum_{i=1}^{k'}(\ell_i^{(1)}+\ell_i^{(2)}+1)\right)\leq (1+O(\eps)) \OPT.
\]
This concludes the proof.
\end{proof}
By a similar argument, we get the following.
\begin{lemma}\label{lemma:separate-vertical}
    Let $\mathcal{C}=\{C_1,\dots,C_k\}$. Denote by $\OPT_i$, the size of an optimal cover of $C_i$ and $\OPT_{Q}$, the size of an optimal cover of $Q$. Then 
    \[
    \OPT\leq \OPT_{Q} +\sum_{i=1}^k \OPT_i\leq(1+O(\eps))\OPT.
    \]
\end{lemma}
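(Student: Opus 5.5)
The plan is to apply the argument of \Cref{lemma:separate-horizontal} a second time, now to the polygon $P'$ and the vertical corridors $\mathcal C_v$, and then chain the two estimates. The lower bound $\OPT\le \OPT_Q+\sum_{i=1}^k\OPT_i$ is immediate. The corridors in $\mathcal C$ together with $Q$ cover $P$, and each of their covers consists of connected small pieces contained in $P$. So the union of optimal covers of these separate instances is a small cover of $P$.

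For the upper bound, write $\mathcal C_v=\{C_{k'+1},\dots,C_k\}$. By construction these are the vertical corridors of $P'$ obtained by trimming maximal vertical corridors of $P'$, and $Q=\overline{P'\setminus\bigcup\mathcal C_v}$.

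First, I would repeat the proof of \Cref{lemma:separate-horizontal} verbatim with $P'$ in the role of $P$ and vertical corridors in place of horizontal ones. Each $C_i\in\mathcal C_v$ was trimmed by at least $2$ at both ends of a maximal vertical corridor of $P'$. Hence any piece of an optimal cover of $P'$ that meets one of the horizontal edges $e_1,e_2$ of $C_i$ lies in a band of width $1$ next to that edge inside the maximal corridor. That band can be covered by $O(\ell_i^{(1)}+\ell_i^{(2)}+1)$ pieces. Replacing the crossing pieces by such covers of the bands splits an optimal cover of $P'$ into covers of $Q$ and of the $C_i\in\mathcal C_v$.

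Second, each such $C_i$ has length at least $1/\eps$, and the corridors are interior-disjoint by \Cref{lem:disjointCorridors} applied in $P'$. A volume argument then shows that any cover of $P'$ needs $\Omega\bigl(\sum_{i>k'}(\ell_i^{(1)}+\ell_i^{(2)}+1)/\eps\bigr)$ pieces. This yields
\[
\OPT_Q+\sum_{i=k'+1}^{k}\OPT_i\le(1+O(\eps))\OPT_{P'}.
\]

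Third, I combine this with \Cref{lemma:separate-horizontal}:
\[
\OPT_Q+\sum_{i=1}^{k}\OPT_i\le (1+O(\eps))\OPT_{P'}+\sum_{i=1}^{k'}\OPT_i\le(1+O(\eps))\Bigl(\OPT_{P'}+\sum_{i=1}^{k'}\OPT_i\Bigr)\le(1+O(\eps))^2\OPT,
\]
which is $(1+O(\eps))\OPT$ for $\eps\le1$.

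The main obstacle is the first step, because $P'$ is no longer an arbitrary polygon given as input. Its boundary contains the vertical cut edges created by removing the horizontal corridors, and these may bound a vertical corridor of $P'$. So I must check that the band argument still works when a band near $e_1$ or $e_2$ abuts such a cut edge. The key facts are that the band lies inside the maximal vertical corridor of $P'$, hence inside $P$ and away from other parts of $P'$, and that it is still coverable by $O(\ell+1)$ pieces because it is a convex quadrilateral of width $1$. A second point to verify is that the volume bound may be taken in $P'$ rather than in $P$. This only requires the corridors $\mathcal C_v$ to be interior-disjoint subsets of $P'$, each of area at least proportional to $(\ell_i^{(1)}+\ell_i^{(2)})/\eps$, which holds since they are convex with length at least $1/\eps$. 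With these checks the horizontal proof transfers unchanged.
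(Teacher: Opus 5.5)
Your proof follows the paper's. The paper also reruns the argument of \Cref{lemma:separate-horizontal} on $P'$ with the corridors $\mathcal C_v$ to get $\OPT_Q+\sum_{i>k'}\OPT_i\le(1+O(\eps))\OPT_{P'}$, and chains it with \Cref{lemma:separate-horizontal} exactly as you do. That argument applies to any polygon, so the cut edges of $P'$ need no special care, and your chaining computation is correct.

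\textbf{The band step.} One justification in your first step is wrong. You cover the whole band of the maximal corridor $M$ within distance $1$ of $e_1$, and claim $O(\ell+1)$ pieces suffice ``because it is a convex quadrilateral of width $1$''. That does not follow. If both side edges of $M$ are nearly parallel to $e_1$, the band is a slanted parallelogram whose sides along $e_1$ have length $\ell_i^{(1)}$ (possibly with $\ell_i^{(1)}=\ell_i^{(2)}$). Its extent in the direction of $e_1$ can nevertheless be arbitrarily large, so it can need arbitrarily many pieces.

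\textbf{Fix.} Use the region the paper calls $B_1$ instead of the full band. A piece meeting $e_1$ lies in a unit square containing a point of $e_1$, hence in $M\cap(e_1\oplus[-1,1]^2)$. This set is convex and fits in a $(\ell_i^{(1)}+2)\times 2$ box. Cutting it by the integer grid and by the line through $e_1$ gives $O(\ell_i^{(1)}+1)$ convex, hence connected, small pieces, each lying on one side of $e_1$.

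Your band bound does in fact hold in $P'$. Since $P'$ has no horizontal corridor of length at least $1/\eps+4$, the horizontal drift per unit height of the side edges of a long vertical corridor of $P'$ is $O(1+\eps(\ell_i^{(1)}+\ell_i^{(2)}))$. That, however, is an extra argument you did not give.

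\textbf{The additive term.} In your lower bound, the area of the corridors only absorbs the $\ell$-terms of the overhead. The additive $O(1)$ per corridor needs a length argument instead:
\begin{itemize}
\item covering $C_i$ takes at least $1/\eps$ pieces, since $C_i$ has length at least $1/\eps$;
\item each such piece lies inside the maximal corridor containing $C_i$, by the same trimming-by-$2$ confinement you used;
\item maximal corridors are interior-disjoint, so no piece is counted for two corridors.
\end{itemize}
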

\begin{proof}
Let $\mathcal{C}_v=\{C_{k'+1},\dots,C_k\}$.
Denote by $\OPT_i$ the size of an optimal cover of $C_i$ and by $\OPT_Q$ the size of an optimal cover of $Q$.
Identically to above, we can prove that 
\[
    \OPT_{P'}\leq \OPT_{Q} +\sum_{i=k'+1}^{k} \OPT_i\leq(1+O(\eps))\OPT_{P'},
    \]
The result now follows from the bound of~\cref{lemma:separate-horizontal}.
\end{proof}
From~\cref{lemma:separate-vertical}, it also follows that if we can find an $1+O(\eps)$ approximation to each $C_i\in \mathcal{C}$ and for $Q$, we obtain an $1+O(\eps)$ approximation for $P$. We set out to do this next.

\subsection{Approximating the separate instances}

We now prove the main theorem of the section.
\largeDiameter*

\begin{proof}[Proof of~\cref{thm:sparse-reduction}]
The first step of $\mathcal{A}'$ is to identify the set of corridors $\mathcal{C}$ and the remainder $Q=\overline{P\setminus \bigcup \mathcal{C}}$ as described in \cref{sec:simpleInstances}. This can clearly be done in $\poly(n)$ time.
By~\cref{lemma:separate-vertical}, it suffices to determine $(1+O(\eps))$-approximate solutions for $Q$ and all corridors in $\mathcal{C}$ separately.  By~\cref{lemma:diameter}, any connected component $Q$ has diameter $O(n/\eps)$ and there are at most $O(n)$ of them. Thus, these instances can be handled by $\mathcal A$ in $f(n,\eps)$ time. Next consider $C\in \mathcal{C}$; without loss of generality, we assume that $C$ is a horizontal corridor. Let $\ell_1$ be the length of its left vertical edge and $\ell_2$ be the length of its right vertical edge. Without loss of generality, we may assume $\ell_1\leq \ell_2$. 
\begin{figure}[ht]
    \centering
    \includegraphics[width=\textwidth]{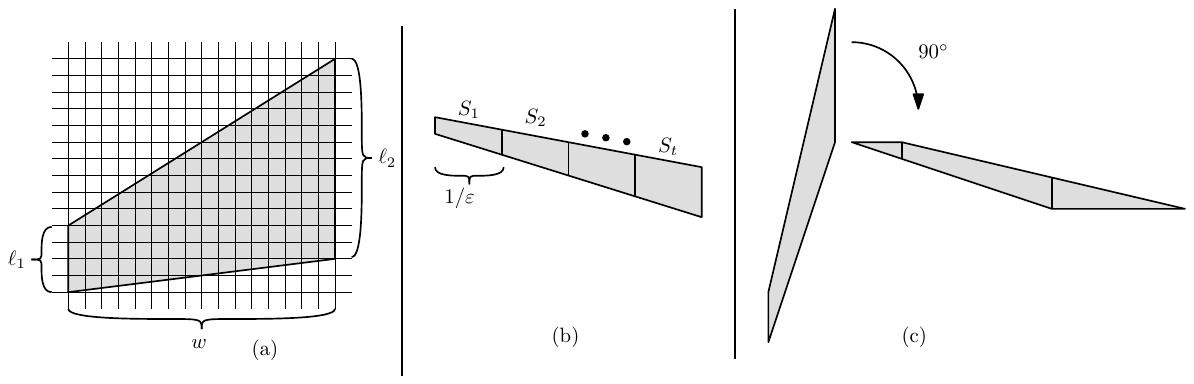}
    \caption{Three of the four cases for $C_i$ (Case D is similar to Case A). (a) One of the non-vertical edges has slope in $[-2,2]$ and $\ell_2\geq 1/\eps$. (b) One of the non-vertical edges has slope in $[-2,2]$ and $\ell_2\leq 1/\eps$. (c) Both slopes of non-vertical edges are in $[2,\infty)$ or both slopes of non-vertical edges are in $(-\infty,-2]$. In case (a), imposing an {integer} grid suffices to define an optimal cover. In case (b), we will consider a further subdivision of $C$ into corridors of length $1/\eps$ and show that it suffices to query $\mathcal{A}$ on $\poly(1/\eps)$ of them. Case (c) reduces to case (a) or (b) after a $90^\circ$ rotation.}
    \label{fig:cases}
\end{figure}
We consider three cases, see also \cref{fig:cases}.
\begin{itemize}
\item \textbf{Case A} One of the non-vertical edges has slope in $[-2,2]$ and $\ell_2\geq 1/\eps$.
\item \textbf{Case B} One of the non-vertical edges has slope in $[-2,2]$ and $\ell_2\leq 1/\eps$.
\item \textbf{Case C} Both slopes of non-vertical edges are in $[2,\infty)$ or both are in $(-\infty,-2]$.
\item \textbf{Case D} One non-vertical edge has slope in $(-\infty,-2]$ and the other has slope in $[2,\infty)$.
\end{itemize}
We handle these cases separately.

\textbf{Case A} Let $w$ be the length of $C$. Then the area $A$ of $C$ is $\Omega(\ell_2\cdot w)$, so any cover must use at least that many small pieces. By the assumption on the slopes, at most $O(\ell_2+w)$ cells of the {integer} grid intersect the boundary of $C$. In particular, any unit grid defines a covering of $R$ using at most $A+O(\ell_2+w)=A+O(\max(\ell_2,w))$ small pieces. Since both $\ell_2\geq 1/\eps$ and $w\geq 1/\eps$, we have $\max(\ell_2,w)=\ell_2w/\min(\ell_2,w) \leq \ell_2w\eps\in O(A\eps)$. It follows that the covering uses at most $A(1+O(\eps))\leq \OPT_R(1+O(\eps))$ pieces. We can thus output the estimator $\lambda_R=A+\alpha\cdot \eps$ to get that $\OPT_R\leq \lambda_R\leq (1+O(\eps))R$ for some constant $\alpha$. 

\textbf{Case B} By construction, the corridor $C$ has length $t\cdot 1/\eps$ for some $t\in \mathbb{N}$, so we can partition $C$ into $t$ corridor sections $S_1,\dots,S_t$ of length $1/\eps$ overlapping only at their boundary. 
By an identical argument to~\cref{lemma:separate-horizontal}, if $\OPT_i$ and $\OPT_C$ denote the size of an optimal cover of $S_i$ and $C$, respectively, then $\sum_{i=1}^t\OPT_i\leq (1+O(\eps))\OPT_C$.
We will not always be able to approximate $\OPT_i$  for all $i$. However, we obtain such estimates for all but at most a $1/\poly(1/\eps)$ fraction of the sections and we will show that this suffices. 

If $t\leq 2/\eps^4$, we approximate all $\OPT_i$ with a factor in $(1+O(\eps))$ by running $\mathcal{A}$ on each of the $t$ instances. 

So assume $t>2/\eps^4$.
Let $\ell\in [t\eps^4,2t\eps^4]$ be an integer and $J'=\{\ell,2\ell,\dots, \lfloor t/\ell\rfloor\ell\}$ and $J=J'\cup \{t\}$.
Our algorithm $\mathcal{A}'$ runs $\mathcal{A}$ on each of the instances $(S_j)_{j\in J}$. Denote $\lambda_j$ the estimate returned by $\mathcal{A}$. Our estimator $\lambda_C$ for $\OPT_C$ is simply
\[
\lambda_C=\sum_{j\in J}\ell\lambda_j.
\]
We now show that it provides the desired approximation. Since $\lambda_j\geq \OPT_j$ and $\OPT_j$ is non-decreasing in $j$, it follows that for $j\in J$, $\ell\lambda_j$ is an upper bound on all of $\sum_{i=0}^{\ell-1}\OPT_{j-i}$. Since for any $j'\in [t]$, there exists  $j\in J$, such that the corresponding sums includes a term $\OPT_{j'}$, it follows directly that 
\[
\lambda_C\geq \sum_{i=1}^t \OPT_i\geq \OPT_C.
\]
To provide a corresponding upper bound write $J=\{j_1,\dots,j_k\}$ in increasing order where $k=O(1/\eps^4)$.
By~\Cref{lemma:monotonicity} and the approximation guarantee of $\mathcal{A}$, for any $a<b$,
\begin{align}\label{eq:monotone-estimates}
\lambda_{j_{a}}\leq (1+\eps)\OPT_{j_a}\leq (1+\eps)\OPT_{j_b}\leq (1+\eps)\lambda_{j_b} 
\end{align}
Since $\OPT_i=\Omega(1/\eps)$ but $\OPT_i=O(1/\eps^2)$ for all $i$, by the approximation guarantee of $\mathcal{A}$, it also holds that $\lambda_{j_a}=\Omega(1/\eps)$ and $\lambda_{j_a}=O(1/\eps^2)$ for all $a$. 
This implies that there can be at most $O(\log(1/\eps)/\eps)$  values $a$ such that $\lambda_{j_a}>(1+2\eps)\lambda_{j_{a-1}}$.
To see this, note that after such a jump happens, by~\cref{eq:monotone-estimates}, all $j_b$ with $b>a$ must have $j_b>j_{a-1}(1+2\eps)/(1+\eps)$ and so, if $m$ is the number of such jumps, then
\[
\left(\frac{1+2\eps}{1+\eps}\right)^m=O(1/\eps),
\]
which implies that $m=O(\log(1/\eps)/\eps)$.

Denote the set of such $j_a$ by $J_0\subset J$. We also include $t=j_k$ in the set $J_0$. Now if $a\in J\setminus J_0$, then $\lambda_{j_{a}}\leq (1+2\eps)\lambda_{j_{a-1}}$, but this implies that for any $j$ with $j_{a-1}<j\leq j_a$,
\[
\lambda_{j_a}\leq (1+2\eps)\lambda_{j_{a-1}}\leq (1+2\eps)(1+\eps) \OPT_{j_{a-1}}\leq (1+O(\eps)) \OPT_{j}.
\]
It follows that 
\[
\sum_{j\in J\setminus J_0}\ell\lambda_j\leq (1+O(\eps)) \sum_{j\in J\setminus J_0}\sum_{i=j-\ell+1}^{j} \OPT_i\leq (1+O(\eps))\sum_{i=1}^t \OPT_i=(1+O(\eps))\OPT_C.
\]
Moreover, note that $\OPT_C=\Omega(t/\eps)$ by a diameter argument. Since additionally, all $\lambda_j=O(1/\eps^2)$ and $J_0=O(\log (1/\eps)/\eps)$, it follows that 
\[
\sum_{j\in J_0}\ell\lambda_j=O(t\eps^4\cdot 1/\eps^2\cdot \log (1/\eps)/\eps)=O( t)=O(\eps)\cdot \OPT_C
\]
It follows that our estimator satisfies $\lambda_C\leq (1+O(\eps))\OPT_C$, as desired.

\textbf{Case C} In this case, we reduce to Case A or B by rotating $R$ by $90^\circ$ (importantly, a unit square rotated around its center by $90^\circ$ is the same unit square) and partitioning it into three polygons as in~\cref{fig:cases} (c). As in the proof of~\cref{lemma:separate-vertical}, one can show that it suffices to find $1+O(\eps)$ approximations for each of these pieces separately and it is simple to check that these pieces either fall into Case A or B or have diameter $O(1/\eps)$ in which case we can find $(1+\eps)$-approximate solutions directly.

\textbf{Case D} The argument is similar to Case A. Note that the assumption on the slopes ensures that $\ell_2\geq 1/\eps$.

\subparagraph{Running time} We now bound the runtime of our algorithm. Finding $\mathcal{C}$ and $Q$  takes $\poly(n)$ time. Since $Q$ has at most $\alpha n$ vertices and diameter $\alpha n/\eps$, for some large constant $\alpha$, we can find a $(1+\eps)$-approximate solution for $Q$ in time $f(\alpha n,\eps)$. For a corridor $C\in \mathcal{C}$ of Case A or D, our estimator of $\OPT_C$ just requires us to compute the area of $A$ which can be done in $O(1)$ time. For Case B, for each of the $O(n)$ corridors $C\in \mathcal{C}$, we run $\mathcal{A}$ on $O(1/\eps^4)$ instances each taking time $f(4,\eps)$ resulting in a total runtime of $nf(4,\eps)/\eps^4$. 
As Case C reduces to constantly many cases of Case A or Case B, we can bound the running time similarly. Combining the bounds give the desired running time of $f(\alpha n,\eps)\cdot \poly(n/\eps)$.

\subparagraph{Implicit description}
For $Q$, the algorithm can maintain an exact description of the approximate cover, so for $q\in Q$, it is easy to find the pieces containing $q$ in $O(\poly (n/\eps))$ time.  For $C\in \mathcal{C}$ of the type in Case A or D, we simply need to check which grid cell of the uniform grid $q$ lies. This  takes $O(1)$ time after identifying $C$ which takes $\poly(n)$ time. In Case B, we first identify the section $S_i$ which contains $q$ and then run $\mathcal{A}$ on $S_i$. This takes $\poly(n)+f(4,\eps)$ time. Finally, Case C reduces to Case A or B.
\end{proof}

\end{document}